\newcommand\R{{\ensuremath {\mathbb R} }}
\newcommand\C{{\ensuremath {\mathbb C} }}
\newcommand\1{{\ensuremath {\mathds 1} }}
\renewcommand\phi{\varphi}
\newcommand{\gH}{\mathfrak{H}}
\newcommand{\gX}{\mathfrak{X}}
\newcommand{\bH}{\mathbb{H}}
\newcommand{\gS}{\mathfrak{S}}
\newcommand{\wto}{\rightharpoonup}
\newcommand{\gto}{\underset{\rm g}{\rightharpoonup}}
\renewcommand{\to}{\rightarrow}
\newcommand{\cN}{\mathcal{N}}
\newcommand{\cS}{\mathcal{S}}
\newcommand{\cD}{\mathcal{D}}
\newcommand{\cA}{\mathcal{A}}
\newcommand{\cB}{\mathcal{B}}
\newcommand{\sgn}{{\rm sgn}}
\newcommand{\cF}{\mathcal F}
\newcommand{\cT}{\mathcal T}
\newcommand{\tr}{{\rm tr}\,}
\newcommand{\cE}{\mathcal{E}}
\newcommand{\cK}{\mathcal{K}}
\newcommand{\bT}{\mathbb{T}}
\newcommand\ii{{\ensuremath {\infty}}}
\newcommand\pscal[1]{{\ensuremath{\left\langle #1 \right\rangle}}}
\newcommand{\norm}[1]{ \left| \! \left| #1 \right| \! \right| }
\renewcommand{\tr}{{\rm Tr} }
\renewcommand{\leq}{\leqslant}
\renewcommand{\geq}{\geqslant}
\newcommand{\cFN}{\cF^{\leq N}}
\newcommand{\dqed}{\hfill$\diamond$}
\newcommand{\rank}{{\rm rank}}
\begin{document}
\date{December 2, 2010}
\title{\Large Geometric methods for nonlinear\\ \medskip many-body quantum systems}
\authormark{Mathieu LEWIN}
\runningtitle{Geometric methods for nonlinear many-body quantum systems}

\author{Mathieu LEWIN}
\address{CNRS and Laboratoire de Mathématiques (CNRS UMR 8088)\\ Universit{\'e} de Cergy-Pontoise, 95 000 Cergy-Pontoise - France.\\ Email: \email{Mathieu.Lewin@math.cnrs.fr}}

\maketitle

\bigskip

\begin{abstract}
Geometric techniques have played an important role in the seventies, for the study of the spectrum of many-body Schrödinger operators. In this paper we provide a formalism which also allows to study nonlinear systems.

We start by defining a weak topology on many-body states, which appropriately describes the physical behavior of the system in the case of lack of compactness, that is when some particles are lost at infinity. We provide several important properties of this topology and use them to write a simple proof of the famous HVZ theorem in the repulsive case.
In a second step we recall the method of geometric localization in Fock space as proposed by Derezi\'nski and Gérard,  and we relate this tool to our weak topology.

We then provide several applications. We start by studying the so-called finite-rank approximation which consists in imposing that the many-body wavefunction can be expanded using finitely many one-body functions. We thereby emphasize geometric properties of Hartree-Fock states and prove nonlinear versions of the HVZ theorem, in the spirit of works of Friesecke.

In the last section we study translation-invariant many-body systems comprising a nonlinear term, which effectively describes the interactions with a second system. As an example, we prove the existence of the multi-polaron in the Pekar-Tomasevich approximation, for certain values of the coupling constant.

\medskip 

%

\noindent{\scriptsize\copyright~2010 by the author. This paper may be reproduced, in its entirety, for non-commercial~purposes.\\ Final version to appear in \textit{J. Func. Anal.}}
\end{abstract}


\bigskip

\tableofcontents

\section*{Introduction}
\addcontentsline{toc}{section}{Introduction}

A system of $N$ (spinless) quantum particles is usually described by an energy functional $\Psi\mapsto \cE(\Psi)\in\R$ where $\Psi$ is a normalized function of the $N$-body space
\begin{equation}
\gH^N:=\bigotimes_{n=1}^NL^2(\R^d)\simeq L^2((\R^d)^N).
\end{equation}
Here $d$ is the dimension of the space in which the $N$ particles evolve, that is $d=3$ in the physical case. If the particles are indistinguishable bosons (resp. fermions) it is additionally assumed that $\Psi$ is symmetric (resp. antisymmetric) with respect to exchanges of variables $(x_1,...,x_N)\in(\R^d)^N$. 

In the simplest case the energy $\cE$ is the quadratic form associated with a self-adjoint operator on $\gH^N$. For nonrelativistic particles interacting with a two-body potential $W$ and submitted to an external potential $V$, the corresponding \emph{$N$-body Hamiltonian} reads
\begin{equation}
H^V(N)=\sum_{j=1}^N\left(-\frac{\Delta_{x_j}}{2}+V(x_j)\right)+\sum_{1\leq k<\ell\leq N}W(x_k-x_\ell).
 \label{eq:H_V_N_intro}
\end{equation}
The study of the properties of self-adjoint operators of this form has a long history \cite{HunSig-00} and it is certainly one of the most significant successes of mathematical physics in the past decades. Of particular interest is the spectrum of $H^V(N)$.

The advent of \emph{geometric methods} in the late seventies has been particularly important. By `geometric' it is usually meant the use of clever partitions of unity in configuration space in order to relate local properties of $H^V(N)$ (seen as a partial differential operator) and spectral properties. Initiated in the sixties by Zhislin \cite{Zhislin-60} and Jörgens and Weidmann \cite{JorWei-73}, the systematic use of geometric ideas in Schrödinger operators theory really started in 1977 with the works of Enss \cite{Enss-77}, Deift and Simon \cite{DeiSim-77}, and Simon \cite{Simon-77}. It was then further developed by Morgan \cite{Morgan-79}, Morgan and Simon \cite{MorSim-80}, and Sigal \cite{Sigal-82,Sigal-83,Sigal-84}. For a review of these techniques we refer for instance to \cite{ReeSim4,CycFroKirSim-87,HunSig-00}.

A famous example of the use of geometric methods is the so-called HVZ Theorem of Zhislin \cite{Zhislin-60}, Van Winter \cite{VanWinter-64} and Hunziker \cite{Hun-66}. Under suitable decay assumptions on $V$ and $W$, it relates the bottom of the essential spectrum of $H^V(N)$ to the ground state energy of systems with less particles:
\begin{equation}
\inf\sigma_{\rm ess}(H^V(N))=\inf\left\{E^V(N-k)+E^0(k),\ k=1,...,N\right\},
\label{eq:HVZ_intro}
\end{equation}
where $E^V(N):=\inf\sigma(H^V(N))$ is the ground state energy for $N$ particles. Physically this result says that in order to reach the bottom of the essential spectrum one has to remove $k$ particles from the system and place them at infinity. The total energy is then the sum of the ground state energy $E^V(N-k)$ of the $N-k$ remaining particles plus the energy $E^0(k)$ of the $k$ particles at infinity. The number $k$ of particles to extract is chosen such as to minimize the total energy obtained by this procedure. A consequence of \eqref{eq:HVZ_intro} is that $E^V(N)$ is an isolated eigenvalue if and only if
\begin{equation}
E^V(N)<E^V(N-k)+E^0(k),\quad \forall k=1,...,N.
\label{eq:HVZ_intro2}
\end{equation}

Although physically quite natural, the HVZ formula \eqref{eq:HVZ_intro} is mathematically not obvious, in particular because the three problems corresponding to having $N$, $k$ and $N-k$ particles are posed on the different Hilbert spaces $\gH^N$, $\gH^{k}$ and $\gH^{N-k}$. When proving \eqref{eq:HVZ_intro}, geometric methods indeed make a crucial use of the fact that the many-body space has the structure of a \emph{tensor product}, that is $\gH^N\simeq \gH^{N-k}\otimes\gH^k$.

\medskip

Linear problems are not the only possible ones occurring in the study of many-body quantum systems. Indeed, most numerical methods used by physicists and chemists resort to nonlinear models. Sometimes the energy is kept linear but the set of states is reduced by assuming that the wavefunctions $\Psi$ belong to a well-chosen manifold. In some other cases it is convenient to modify the many-body energy $\cE$ by adding nonlinear empirical terms in order to account for involved physical effects which are too complicated to describe in a precise manner.

Nonlinear methods also have a long history, in particular within the field of partial differential equations. Loosely speaking, a typical question is to understand the behavior of sequences of functions $\{\phi_n\}$ (say in $L^2(\R^d)$), in particular in the case of lack of compactness, that is when $\phi_n\wto\phi$ weakly in $L^2$ but $\phi_n\nrightarrow\phi$ strongly. The sequence $\{\phi_n\}$ can be a minimizing sequence of some variational problem or a Palais-Smale sequence \cite{Struwe} (in these cases the goal is often to prove by contradiction that it must converge strongly). Or it can be the solution of a time-dependent equation, which experiments a dispersive or a blow-up behavior in finite or infinite time (in this case lack of compactness has some physical reality).

The first to tackle such issues on a specific example were Sacks and Uhlenbeck \cite{SacUhl-81} in 1981 who dealt with a concentration phenomenon for harmonic maps. Brezis and Nirenberg \cite{BreNir-83} then faced similar difficulties for some elliptic partial differential equations with a critical Sobolev exponent. In 1983, Lieb proved in \cite{Lieb-83} a useful lemma dealing with lack of compactness due to translations in the locally compact case. A general method for dealing with locally compact problems was published by Lions \cite{Lions-84,Lions-84b} in 1984 under the name ``concentration-compactness''. Later in 1984-85, Struwe \cite{Struwe-84} and, independently, Brezis and Coron \cite{BreCor-85} have provided the first  ``bubble decompositions', whereas Lions adapted his concentration-compactness method to the nonlocal case \cite{Lions-85a,Lions-85b}. For a review of all these techniques, we refer for instance to \cite{Struwe}.

When studying the compactness of minimizing sequences for a variational problem of the general form
$$I(N)=\inf_{\int_{\R^d}|\phi|^2=N}\cE(\phi),$$
a useful argument is to rely on so-called \emph{binding inequalities}
\begin{equation}
I(N)<I(N-\lambda)+I^0(\lambda),\quad \forall 0<\lambda\leq N,
\label{eq:binding_intro} 
\end{equation}
where $I^0(N)$ is the ground state energy when the system is sent to infinity (that is when all the local terms have been dropped in the energy $\cE$). Imagine that one can prove that a non-compact minimizing sequence $\{\phi_n\}$ would necessarily split into pieces in such a way that the total energy becomes the sum of the energies of these pieces. Then an energetic inequality like \eqref{eq:binding_intro} yields a contradiction and implies that all minimizing sequences must be compact. Arguments of this type are ubiquitous in studies of nonlinear minimization problems.

The formal link between the HVZ formula \eqref{eq:HVZ_intro} and binding inequalities of the form of \eqref{eq:binding_intro} has been known for a long time. There are important differences, however. In the HVZ case one has a \emph{quantized} inequality \eqref{eq:HVZ_intro2} in which only an integer number of particles can escape to infinity. On the contrary the binding inequality \eqref{eq:binding_intro} is \emph{not quantized} since in $L^2(\R^d)$ the sequence $\{\phi_n\}$ can split in pieces having an arbitrary mass. Vaguely speaking, this comes from the fact that in the case of lack of compactness, $\phi_n$ usually behaves as a sum of functions whereas an $N$-body wavefunction is rather a tensor product.

\medskip

The goal of this paper is to present a theory which combines nonlinear and geometric techniques, with the purpose to study some many-body systems involving nonlinear effects. A first attempt in this direction was already made by Friesecke in his paper  \cite{Friesecke-03} on multiconfiguration methods, a work which partly inspired the present paper. However, instead of concentrating only on some specific examples, a large part of this article (Sections~\ref{sec:geom_CV} and~\ref{sec:geom_loc}) is devoted to the presentation of a simple but general theory which, we hope, will be reusable in many other situations. We apply it to some nonlinear models in Sections~\ref{sec:HF_MCSCF} and~\ref{sec:nonlinear}.

In this work, we are particularly interested in finding an appropriate description of the possible lack of compactness of many-body wavefunctions. As we now explain, usual methods of nonlinear analysis are rather inefficient in this respect. Consider for instance a sequence of  two-body wavefunctions of the form:
\begin{equation}
\Psi_n=\phi\otimes\phi_n,
\label{eq:example_Psi_n} 
\end{equation}
that is $\Psi_n(x_1,x_2)=\phi(x_1)\phi_n(x_2)$, with $\phi,\phi_n\in L^2(\R^d)$. We assume that $\phi_n\wto0$ weakly in $L^2(\R^d)$, hence we may think of $\Psi_n$ as describing a system of two particles, one in the fixed state $\phi$ and the other one `escaping to infinity'. It is then easily verified that
$$\Psi_n\wto0\quad\text{weakly in $L^2(\R^d)\otimes L^2(\R^d)\simeq L^2((\R^d)^2)$,}$$
which suggests that looking at weak limits of two-body wavefunctions does not say much on the real behavior of the system.
We would rather like to have, for obvious physical reasons, that
\begin{equation}
\text{``}\Psi_n\wto\phi\text{''}
\label{eq:want_CV_intro} 
\end{equation}
since one particle is lost and the other one stays in the one-particle state $\phi$. However this does not make much sense as such, since $\Psi_n\in L^2(\R^d)\otimes L^2(\R^d)$ and $\phi\in L^2(\R^d)$ live in different Hilbert spaces. 

In Section~\ref{sec:geom_CV} we introduce a very natural topology on many-body states, which we call \emph{geometric topology}, and for which \eqref{eq:want_CV_intro} is actually correct. The geometric topology is very different from the usual weak topology (as can already be seen from the fact that $\Psi_n\wto0$ weakly). It is however the one which is physically relevant for many-body systems.

Let us vaguely explain how the geometric topology is defined. As is suggested by \eqref{eq:want_CV_intro}, even if we start with a sequence of states containing $N$ particles (in the $N$-body space $\gH^N$), we have to allow limits in spaces with less particles. All the particles could even be lost in the studied process, in which case we would end up with the vacuum. For this reason, the behavior of $N$-body states must be studied in the so-called \emph{truncated Fock space}
\begin{equation}
\cFN:=\C\oplus\gH^1\oplus\cdots \oplus\gH^N
\label{eq:truncated_Fock_space_intro}
\end{equation}
which gathers all the spaces of $k$ particles, with $0\leq k\leq N$. As we shall see on specific examples, it is also natural to allow a geometric limit which is a \emph{mixed state}, even when the sequence is only made of pure states. 
Let us recall that a mixed state $\Gamma$ on $\cFN$ is a trace-class self-adjoint operator such that $\Gamma\geq0$ and $\tr_{\cFN}(\Gamma)=1$. A pure state is a rank-one projector, $\Gamma=|\Phi\rangle\langle\Phi|$ with $\Phi\in\cFN$ (for instance $\Phi=0\oplus\cdots\oplus0\oplus\Psi$ in the case of a pure $N$-body state $\Psi\in\gH^N$).

The geometric topology on mixed states on $\cFN$ is defined by means of the weak topologies of all the corresponding density matrices, which are specific marginals (partial traces) reflecting the tensor product structure of the ambient Hilbert space (hence the name `geometric'). The definition of the density matrices is recalled in Section~\ref{sec:notation} below. In particular we say that $\Gamma_n\wto_g\Gamma$ geometrically when all the density matrices of $\Gamma_n$  converge to that of $\Gamma$, weakly--$\ast$ in the trace class. 
Let us emphasize that the geometric limit $\Gamma$ is always a \emph{state}, that is it satisfies $\tr(\Gamma)=1$. There is never any loss in the trace norm when passing to geometric limits.

For instance the one-body density matrix of our two-body sequence $\{\Psi_n\}$ in \eqref{eq:example_Psi_n} is the operator acting on $L^2(\R^d)$
$$\Gamma^{(1)}_n=|\phi\rangle\langle\phi|+|\phi_n\rangle\langle\phi_n|$$
(we assume for simplicity that $\phi\perp\phi_n$ for all $n$). By the weak convergence of $\phi_n\wto0$, it holds
$$\Gamma^{(1)}_n\wto |\phi\rangle\langle\phi| \quad \text{weakly--$\ast$}.$$
The operator $|\phi\rangle\langle\phi|$ is precisely the one-body density matrix of the one-body state $\phi\in\gH^1$.
We indeed have that 
$$0\,\oplus\,0\,\oplus\,|\phi\otimes\phi_n\rangle\langle\phi\otimes\phi_n|\;\gto \;0\,\oplus\,|\phi\rangle\langle\phi|\,\oplus\,0\qquad \text{geometrically in $\cF^{\leq 2}$,}$$
which is the precise mathematical meaning that we can give to \eqref{eq:want_CV_intro}. 

Our weak topology is the restriction to states on $\cFN$ of a well-known weak--$\ast$ topology associated with the CAR/CCR algebra (Remark~\ref{rmk:Clifford}). But, to our knowledge, the usefulness of this notion of convergence for many-body problems has never been pointed out in the literature. As will be seen on several examples in this work, it is however the most natural weak topology for many-body states. It is a crucial notion when strong convergence does not hold \emph{a priori}, that is in the case of possible  lack of compactness.

In Section~\ref{sec:prop_geom_CV} we proceed to give important properties of geometric convergence. We start by showing that the set of states is compact for the geometric topology in Lemma~\ref{lem:compact}. This means that any sequence of states $\{\Gamma_n\}$ on the truncated Fock space $\cFN$ has a subsequence such that $\Gamma_{n_k}\wto_g\Gamma$ geometrically. This result is very important in applications. We then show in Lemma~\ref{lem:N_wlsc} that strong convergence is equivalent to the conservation of the total average particle number.

We illustrate the use of our theory in Section~\ref{sec:HVZ_wlsc}: We consider an $N$-body Hamiltonian of the form of \eqref{eq:H_V_N_intro} with $W\geq0$ and we show that, in contrast with the usual weak topology of $\gH^N$, the associated quantum energy is lower semi-continuous for the geometric topology. This enables us to provide a very simple proof of the HVZ Theorem, in this particular case.

\medskip

Equipped with a new weak topology, we then need a second important notion: \emph{geometric localization}. As we have already mentionned, localization has always played an important role in the study of Schrödinger operators. As we want to find out where are the particles which stay and where do go those which escape to infinity, we need to be able to describe the state of our system in a given domain $D\subset\R^d$.

If we think of a one-body state $\phi\in L^2(\R^d)$, then the corresponding localized state in a domain $D$ should clearly be described by the function $\1_D\phi$. However, $\1_D\phi$ is in general not a state since $\int_{D}|\phi|^2<1$, except when $\phi$ has its support in $D$. Having removed what is outside $D$ corresponds, in our language, to the vacuum state. Thus the localized state should rather be
\begin{equation}
\left(1-\int_{D}|\phi|^2\right)\,\oplus\,|\1_D\phi\rangle\langle\1_D\phi|
\label{eq:localized_one_body_intro} 
\end{equation}
in the truncated Fock space $\cF^{\leq 1}$.

The correct notion of localization of any mixed state of $\cFN$ which generalizes \eqref{eq:localized_one_body_intro} was introduced by Derezi\'nski and Gérard  \cite{DerGer-99} in the context of Quantum Field Theory. It is even possible to define a localization with respect to any operator $B$ on $L^2(\R^d)$ such that $BB^*\leq1$, not only for $B$ the multiplication operator by the characteristic function $\1_D$ (this is in particular useful when dealing with smooth cut-off functions). In Section~\ref{sec:geom_loc} we recall the definition of geometric localization in our context and we provide several of its properties.
Of particular interest is the fact that if $\Gamma_n\wto_g\Gamma$ geometrically and $\{\Gamma_n\}$ has a bounded kinetic energy, then one gets a strong convergence of the localized states in any bounded domain $D$. This generalizes the well-known Rellich compactness embedding theorem in Sobolev spaces, to the setting of many-body states and geometric topology.

In Section~\ref{sec:HVZ_general}, using both geometric convergence and localization we are able to provide a simple proof of the HVZ theorem in the general setting (when $W$ has no particular sign), which particularly enlightens a crucial but simple geometric property of $N$-body functions, see Equation \eqref{eq:relation_trace_localized_N_body_fn} below. It is not our intention to pretend that our proof of the HVZ theorem is better than any of the other existing proofs. We rather aim at accustoming the reader to the techniques that we will use for nonlinear models in Sections~\ref{sec:HF_MCSCF} and~\ref{sec:nonlinear}, and for which usual linear methods are inappropriate.

\medskip

We turn to the study of nonlinear models in Sections~\ref{sec:HF_MCSCF} and~\ref{sec:nonlinear}.

In Section~\ref{sec:HF_MCSCF} we study the so-called \emph{finite rank approximation} in which one restricts to $N$-body states which can be expanded using a finite number of (unknown) one-body orbitals $\phi_1,...,\phi_r$. In the bosonic case we obtain the Hartree model for $r=1$. In the fermionic case the Hartree-Fock theory \cite{LieSim-77} is obtained when $r=N$ (the number of particles) whereas $r>N$ leads to \emph{multiconfiguration methods} \cite{Friesecke-03,Lewin-04a}. Despite the fact that these methods are essentials tools of quantum physics and chemistry, their geometric properties have deserved little interest in the literature so far. In \cite{Friesecke-03}, Friesecke was, to our knowledge, the first to consider both the Hartree-Fock and the multiconfiguration theories as real $N$-body models and to use geometric techniques in order to derive nonlinear HVZ-type results. 

Our goal is to emphasize geometric properties of finite-rank states, that is to find what can be said on geometric limit points of sequences or on geometric localization of such special states. For instance we show in Section~\ref{sec:geom_prop} that the geometric limit of a sequence of pure Hartree-Fock states is always a convex combination of pure Hartree-Fock states, see Example~\ref{ex:geom_limit_HF} below. Using such properties, we are able to provide a simple proof of Friesecke's results, as well as to derive other theorems. For instance in Theorem~\ref{thm:HF_translation_invariant} below, we prove a nonlinear HVZ-type result for a \emph{translation-invariant} Hartree-Fock theory, combining ideas of Lions \cite{Lions-84,Lions-84b} and geometric techniques. This result is in the same spirit as what was done for neutron stars in a recent collaboration with Lenzmann \cite{LenLew-10}.

In Section~\ref{sec:nonlinear} we study another kind of nonlinear models where all possible many-body states are considered but nonlinear effective terms are added to the quantum energy $\cE$ in order to describe some specific physical effects. To be more precise we concentrate on translation-invariant models of the form
\begin{equation}
\cE(\Psi)=\pscal{\Psi,H^0(N)\Psi}+F(\rho_\Psi) 
\label{eq:form_nonlinear_intro}
\end{equation}
where $F$ is a concave nonlinear function of the charge density $\rho_\Psi$, and $H^0(N)$ is the $N$-body Hamiltonian \eqref{eq:H_V_N_intro} with $V=0$. In practice the purpose of the nonlinear term $F(\rho_\Psi)$ is to model the interaction of the $N$ particles with a second complicated system. For instance we consider in Section~\ref{sec:polaron} the multi-polaron in the Pekar-Tomasevich approximation. This is a system of $N$ nonrelativistic electrons with an effective nonlinear term
$$F(\rho_\Psi)=-\frac\alpha2\int_{\R^3}\int_{\R^3}\frac{\rho_\Psi(x)\,\rho_\Psi(y)}{|x-y|}\,dx\,dy$$
modeling interactions with the phonons of a polar crystal in the regime of strong coupling. We show the existence of bound states for all $\alpha> \tau_c(N)$ where $\tau_c(N)<1$, which covers the physical case. This complements recent results of \cite{FraLieSeiTho-10a,FraLieSeiTho-10b}.

\medskip

The paper is organized as follows. In Section~\ref{sec:notation} we provide necessary notations and some preliminary results. The reader at ease with the concepts of Fock space, creation and annihilation operators and density matrices may want to skip most of the material of Section \ref{sec:notation}. Of importance is Lemma~\ref{lem:density_matrices} which provides crucial properties of density matrices for states on a truncated Fock space $\cFN$. Section~\ref{sec:geom_CV} is devoted to the definition and the derivation of important properties of the geometric topology and convergence. This is followed by a proof of the HVZ theorem in the repulsive case. In Section~\ref{sec:geom_loc} geometric localization is defined and the general HVZ theorem is proved. Sections~\ref{sec:HF_MCSCF} and~\ref{sec:nonlinear} are respectively devoted to the study of the finite-rank approximation, and of nonlinear systems of the form \eqref{eq:form_nonlinear_intro}.

For the sake of clarity we usually do not state the most general results and rather favor some chosen applications. Many of our theorems can be generalized in several directions.

\bigskip

{\small \noindent\textbf{Acknowledgment.} I started this work after several interesting discussions with Enno Lenzmann. He was the first to draw my attention to the multi-polaron model, which was the starting point of this article. I am also indebted to Vladimir Georgescu for many stimulating discussions.

The research leading to these results has received funding from the European Research Council under the European Community's Seventh Framework Programme (FP7/2007--2013 Grant Agreement MNIQS no. 258023).}

\section{Notation and preliminaries}\label{sec:notation}
We start by fixing some important notation and vocabulary, as well as by providing some preliminary results that will be useful throughout the paper. The reader acquainted with Fock spaces can jump to Section~\ref{sec:def_DM} where density matrices are defined and some of their important properties derived.

\subsection{Spaces and algebras}
For a (separable) Hilbert space $\gH$, we denote by $\cB(\gH)$ and $\cK(\gH)$ the algebras of, respectively, bounded and compact operators on $\gH$. The Schatten space $\gS_p(\gH)\subset\cK(\gH)$ is defined \cite{Simon-79} by requiring that
$$\norm{A}_{\gS_p(\gH)}:=\tr\left(|A|^p\right)^{1/p}<\ii,$$
with $|A|=\sqrt{A^*A}$.
Operators in $\gS_1(\gH)$ have a well-defined trace $\tr(A)=\sum_{i}\pscal{f_i,A f_i}$ (for any orthonormal basis $\{f_i\}$ of $\gH$). Operators in $\gS_2(\gH)$ are called Hilbert-Schmidt. We recall that \cite{Simon-79}
\begin{equation}
\big(\cK(\gH)\big)'=\gS_1(\gH) \quad\text{and}\quad \big(\gS_1(\gH)\big)'=\cB(\gH).
\label{eq:dual_compact} 
\end{equation}
Since $\cK(\gH)$ is separable when $\gH$ is separable, \eqref{eq:dual_compact} means that any bounded sequence $\{\Gamma_n\}$ in $\gS_1(\gH)$ has a subsequence which converges weakly--$\ast$ in the sense that $\lim_{n\to\ii}\tr_\gH(\Gamma_nK)=\tr_\gH(\Gamma K)$ for all $K\in\cK(\gH)$. The same holds for bounded sequences in $\cB(\gH)$, with $\cK(\gH)$ replaced by $\gS_1(\gH)$.

In the whole paper we fix as space for one quantum particle $\gH=L^2(\R^d)$. We could as well work in a domain $\Omega$ with appropriate boundary conditions, use a discrete model, or even, for most of our results, take an abstract Hilbert space. These obvious generalizations are left to the reader for shortness.

Similarly, for simplicity we almost always restrict ourselves to the case of quantum systems made of \emph{one kind of indistinguishable particles (fermions or bosons) without spin}. Most results can be easily generalized to the case of several kinds of particles having internal degrees of freedom. The space for \emph{$N$ indistinguishable fermions} is the antisymmetric tensor product
$$\gH_a^N:=\bigwedge_1^N\gH=L^2_a((\R^d)^N)$$
consisting of wavefunctions $\Psi$ which are antisymmetric with respect to exchanges of variables:
$\Psi(x_1,...,x_i,...,x_j,...,x_N)=-\Psi(x_1,...,x_j,...,x_i,...,x_N)$, with $x_k\in\R^d$ for $k=1,...,N$.
The space for \emph{$N$ indistinguishable bosons} is the symmetric tensor product
$$\gH_s^N:=\bigvee_1^N\gH=L^2_s((\R^d)^N)$$
consisting of wavefunctions $\Psi$ which are symmetric with respect to exchanges of variables:
$\Psi(x_1,...,x_i,...,x_j,...,x_N)=\Psi(x_1,...,x_j,...,x_i,...,x_N)$, with $x_k\in\R^d$ for $k=1,...,N$.

The corresponding fermionic or bosonic Fock space is denoted as 
$$\cF_{a/s}=\C\oplus\bigoplus_{N\geq1}\gH_{a/s}^N.$$
Saying differently, it is the space composed of sequences of the form $\Psi=(\psi^0,\psi^1,\psi^2,...)\in\C\times\gH\times\gH^2_{a/s}\times\cdots$ satisfying the constraint that
$$\norm{\Psi}_{\cF_{a/s}}^2:=\sum_{n\geq0}\norm{\psi^n}_{\gH^n_{a/s}}^2<\ii.$$
It is a Hilbert space when endowed with the scalar product $$\pscal{\Psi_1,\Psi_2}_{\cF_{a/s}}=\sum_{n\geq0}\pscal{\psi_1^n,\psi_2^n}_{\gH^n_{a/s}}.$$ 
The \emph{vacuum state} is by convention defined as $\Omega:=(1,0,0,...)\in\cF_{a/s}$.

As we consider $N$-body systems, we most always work in the `truncated' Fock space 
\begin{equation}
\boxed{\cFN_{a/s}:=\C\oplus\bigoplus_{n=1}^N\gH^n_{a/s}}
\label{eq:cut-Fock-space}
\end{equation}
which we identify to a closed subspace of $\cF_{a/s}$. Similarly, any $N$-body vector of $\gH^N_{a/s}$ can be viewed as a vector of $\cFN_{a/s}$ or of $\cF_{a/s}$. As we explain later in Section~\ref{sec:geom_CV}, the `geometric' limit of a sequence $(\psi_n)\subset\gH^N_{a/s}$ will always live in the truncated Fock space $\cFN_{a/s}$.

For $\psi_1\in\gH_a^{N_1}$ and $\psi_2\in \gH_a^{N_2}$, we define the antisymmetric tensor product $\psi_1\wedge\psi_2\in \gH_a^{N_1+N_2}$ as follows:
\begin{multline}
\psi_1\wedge\psi_2(x_1,...,x_{N_1+N_2}):=\frac{1}{\sqrt{N_1!\, N_2!\,(N_1+N_2)!}}\;\times\\
\times\sum_{\sigma\in\cS_{N_1+N_2}}\sgn(\sigma)\,\psi_1\big(x_{\sigma(1)},...,x_{\sigma(N_1)}\big)\;\psi_2\big(x_{\sigma(N_1+1)},...,x_{\sigma(N_1+N_2)}\big). 
\end{multline}
Here $\cS_N$ is the group of permutations of $\{1,...,N\}$. 
When $\{f_i\}$ is an orthonormal basis of $\gH$, then $\{f_{i_1}\wedge\cdots\wedge f_{i_N}\}_{i_1<\cdots <i_N}$ forms an orthonormal basis of $\gH^N_a$. 

For bosons, we define similarly, for $\psi_1\in\gH_s^{N_1}$ and $\psi_2\in \gH_s^{N_2}$,
\begin{multline}
\psi_1\vee\psi_2(x_1,...,x_{N_1+N_2}):=\frac{1}{\sqrt{N_1!\, N_2!\,(N_1+N_2)!}}\;\times\\
\times\sum_{\sigma\in\cS_{N_1+N_2}}\psi_1\big(x_{\sigma(1)},...,x_{\sigma(N_1)}\big)\;\psi_2\big(x_{\sigma(N_1+1)},...,x_{\sigma(N_1+N_2)}\big). 
\end{multline}
When $\{f_i\}$ is an orthonormal basis of $\gH$, then $\{f_{i_1}\vee\cdots\vee f_{i_N}\}_{i_1\leq\cdots \leq i_N}$ is an ortho\emph{gonal} basis of $\gH^N_s$. Note that by definition
$$\underbrace{f\vee\cdots \vee f}_{\text{$N$ times}}=\sqrt{N!}\;f\otimes\cdots\otimes f.$$

\subsection{Creation and annihilation operators}
For every $f\in\gH$, we define the \emph{creation operator} $a^\dagger(f)$ on $\cF^{\rm fin}_{a/s}:=\cup_{N\geq1}\cFN_{a/s}\subset\cF_{a/s}$ by requiring $a^\dagger(f)\gH^N_{a/s}\subset\gH^{N+1}_{a/s}$ for all $N\geq0$, with
$$\forall\psi\in\gH^N_{a/s},\qquad a^\dagger(f)\psi:=\left\{\begin{array}{ll}
f\wedge\psi& \text{for fermions,}\\
f\vee\psi& \text{for bosons.}\end{array}\right.
$$
By linearity, $a^\dagger(f)$ can be defined as an operator on $\cF^{\rm fin}_{a/s}$. 
Note that if $\{f_i\}_{i\geq1}$ is an orthonormal basis of $\gH$, then $\{\prod_{k=1}^K a^\dagger(f_{i_k})\Omega\}_{i_1<\cdots <i_K,\ K\geq0}$ is an orthonormal basis of $\cF_a$ and $\{\prod_{k=1}^K a^\dagger(f_{i_k})\Omega\}_{i_1\leq\cdots \leq i_K,\ K\geq0}$ is an orthogonal basis of $\cF_s$.

Similarly, we define the \emph{annihilation operator} $a(f)$ by requiring $a(f)\gH^N_{a/s}\subset\gH^{N-1}_{a/s}$ for all $N\geq1$, $a(f)\Omega=0$ and 
$$\forall\psi\in\gH^N_{a/s},\qquad \left(a(f)\psi\right)\big(x_1,...,x_{N-1}\big):=\sqrt{N}\int_{\R^d}\overline{f(x)}\psi(x,x_1,...,x_{N-1})\,dx.$$
It can be verified that $a(f)$ is the adjoint of $a^\dagger(f)$ on $\cF^{\rm fin}_{a/s}$: 
$$\forall \Psi,\Psi'\in\cF^{\rm fin}_{a/s},\qquad \pscal{\Psi,a^\dagger(f)\Psi'}_{\cF_{a/s}}=\pscal{a(f)\Psi,\Psi'}_{\cF_{a/s}}.$$

In the fermionic case the creation and annihilation operators satisfy the so-called \emph{Canonical Anticommutation Relations} (CAR):
\begin{equation}
\left\{\begin{array}{rl}
a(g)a^\dagger(f)+a^\dagger(f)a(g)&=\pscal{g,f}\1_{\cF_a},\\[0,1cm]
a^\dagger(f)a^\dagger(g)+a^\dagger(g)a^\dagger(f)&=0,\\[0,1cm]
a(f)a(g)+a(g)a(f)&=0.
\end{array}\right.
\label{eq:CAR}
\end{equation}
These relations are satisfied on $\cF^{\rm fin}_a$ but it is deduced from the CAR that $\norm{a^\dagger(f)}=\norm{a(f)}= \norm{f}_\gH$, hence that $a^\dagger(f)$ and $a(f)$ can be extended to bounded operators on the whole fermionic Fock space $\cF_a$. 
In the bosonic case, the creation and annihilation operators satisfy the so-called \emph{Canonical Commutation Relations} (CCR):
\begin{equation}
\left\{\begin{array}{rl}
a(g)a^\dagger(f)-a^\dagger(f)a(g)&=\pscal{g,f}\1_{\cF_s},\\[0,1cm]
a^\dagger(f)a^\dagger(g)-a^\dagger(g)a^\dagger(f)&=0,\\[0,1cm]
a(f)a(g)-a(g)a(f)&=0.
\end{array}\right.
\label{eq:CCR}
\end{equation}
These relations are satisfied on $\cF^{\rm fin}_s$. Now $a(f)$ and $a^\dagger(f)$ are unbounded operators. However, they are bounded on $\cFN_s$ (with values in $\cF_s^{N\pm1}$) for every fixed $N$.

\subsection{Observables}
We now define operators and quadratic forms on $\cF_{a/s}$. The most important one is the so-called \emph{number operator} which equals $N$ on any $\gH^N_{a/s}$: 
$$\cN:=\bigoplus_{N\geq0}N.$$ 
This operator is unbounded on $\cF_{a/s}$ and its maximal domain is
$$\cD(\cN):=\bigg\{\Psi=(\psi^0,\psi^1,...)\in\cF\ :\ \sum_{N\geq0}N^2\norm{\psi^N}^2_{\gH^N_{a/s}}<\ii \bigg\}.$$

More generally, for every (densely defined) self-adjoint operator $A$ on $\gH$, we may define by 
$$\mathbb{A}:=0\oplus\bigoplus_{N\geq1}\left(\sum_{i=1}^NA_{x_i}\right)$$
the operator on $\cF_{a/s}$. When $A$ is bounded from below, the domain of $\sum_{i=1}^NA_{x_i}$ is simply $\bigwedge_1^N\cD(A)\subset\gH^N_{a/s}$; in the general case, $\sum_{i=1}^NA_{x_i}$ is essentially self-adjoint on $\bigwedge_1^N\cD(A)\subset\gH^N_{a/s}$, see \cite{ReeSim1}. The operator $\mathbb{A}$ is self-adjoint on the domain
$$\cD(\mathbb{A}):=\left\{\Psi=(\psi^0,\psi^1,...)\in\bigoplus_{N\geq0}\cD\left(\sum_{j=1}^NA_{x_j}\right)\ :\ \sum_{N\geq0}\norm{\left(\sum_{j=1}^NA_{x_j}\right)\psi^N}^2_{\gH^N_{a/s}}<\ii \right\}.$$
In the literature, the second quantization $\mathbb{A}$ of $A$ is often denoted by $\sum_iA_i$ or by ${\rm d}\Gamma(A)$. Note that $\cN$ is the second quantization of the identity on $\gH$.

The operator $\mathbb{A}$ can be expressed in terms of creation and annihilation operators. Let $\{f_i\}_{i\geq1}$ be an orthonormal basis of $A$, with $f_i\in\cD(A)$ for every $i\geq1$. Then we have (both in the fermionic and bosonic cases)
\begin{equation}
\mathbb{A}=\sum_{j\geq1}a^\dagger(Af_j)\,a(f_j)=\sum_{i,j\geq1}A_{ij}\,a^\dagger(f_i)\,a(f_j),\qquad \text{with }\quad A_{ij}=\pscal{f_i,Af_j}_\gH.
\label{eq:2nd_qtz_1_body}
\end{equation}
The above series are well-defined when restricted to any $\bigwedge_1^N\cD(A)\subset\gH^N_{a/s}\subset\cF_{a/s}$ and they coincide with $\sum_{i=1}^NA_{x_i}$, which is the correct interpretation of the (formal) equality \eqref{eq:2nd_qtz_1_body}.
Applying this to the number operator, we obtain:
\begin{equation}
\cN=\sum_{i\geq1}a^\dagger(f_i)\,a(f_i).
\label{eq:2nd_qtz_N} 
\end{equation}

Similarly, we can associate to any two-body operator $W:\gH^2_{a/s}\to\gH^2_{a/s}$ an operator $\mathbb{W}$ on Fock space, defined by
$$\mathbb{W}:=0\oplus0\bigoplus_{N\geq2}\left(\sum_{1\leq i<j\leq N}W_{ij}\right)$$
where $W_{ij}$ denotes the operator $W$ acting on the variables $x_i$ and $x_j$ but not on the other variables. We do not discuss problems of domains for shortness. As for one-body operators, the second quantization $\mathbb{W}$ in Fock space of a two-body operator $W$ can be expressed in terms of creation and annihilation operators as follows:
\begin{equation}
\mathbb{W}=\sum_{\substack{1\leq k\leq\ell\\ 1\leq i\leq j}} W_{ij,k\ell}\;a^\dagger(f_i)\,a^\dagger(f_j)\,a(f_\ell)\,a(f_k),
\label{eq:2nd_qtz_2_body}
\end{equation}
with
\begin{equation}
W_{ij,k\ell}:=\left\{\begin{array}{ll}
\pscal{f_i\wedge f_j,Wf_k\wedge f_\ell}_{\gH^2_a}&\text{ (fermions)}\\[0,4cm]
\displaystyle\frac{\pscal{f_i\vee f_j,Wf_k\vee f_\ell}_{\gH^2_s}}{(1+\delta_{ij})(1+\delta_{k\ell})}&\text{ (bosons).}\\
\end{array}\right.
\label{eq:def_W_ij} 
\end{equation}
Note the normalization factor $(1+\delta_{ij})(1+\delta_{k\ell})=\norm{f_i\vee f_j}^2\norm{f_k\vee f_\ell}^2$ for bosons.

In particular, for an $N$-body Hamiltonian of the form
\begin{equation}
H^V(N):=\sum_{j=1}^N\left(\frac{-\Delta_{x_j}}{2}+V(x_j)\right)+\sum_{1\leq k<\ell\leq N}W(x_k-x_\ell),
\end{equation}
with the convention that $H^V(1)=-\Delta/2+V$ and $H^V(0)=0$, the corresponding Hamiltonian in Fock space defined by
$\mathbb{H}^V:=\bigoplus_{N\geq0}H^V(N)$
can be expressed as
\begin{equation}
\mathbb{H}^V=\sum_{i,j\geq1}h_{ij}\,a^\dagger(f_i)\,a(f_j)+\sum_{\substack{1\leq k\leq\ell\\ 1\leq i\leq j}} W_{ij,k\ell}\;a^\dagger(f_i)\,a^\dagger(f_j)\,a(f_\ell)\,a(f_k)
\label{eq:2nd_qtz_Schrodinger} 
\end{equation}
with $W_{ij,k\ell}$ as in \eqref{eq:def_W_ij} and 
$$h_{ij}=\int_{\R^d}\left(\frac{\overline{\nabla f_i(x)}\cdot\nabla f_j(x)}{2}+V(x)\overline{f_i(x)}f_j(x)\right)\, dx.$$

\begin{remark}
Physicists rather prefer to use the creation operator $\phi^\dagger(x)$ of a particle at $x\in\R^d$, formally related to the `smeared' operator $a^\dagger(f)$ by
$$a^\dagger(f)=\int_{\R^d}f(x)\phi^\dagger(x)\,dx,\qquad \phi^\dagger(x)=\sum_{i\geq1}\overline{f_i(x)}\,a^\dagger(f_i)$$
where $\{f_i\}$ is any orthonormal basis of $L^2(\R^d)$.
The formula \eqref{eq:2nd_qtz_Schrodinger} can then be rewritten as follows:
\begin{multline}
\mathbb{H}=\int_{\R^d}\left(\frac12\nabla \phi^\dagger(x)\cdot\nabla \phi(x)+V(x) \phi^\dagger(x)\phi(x)\right)dx\\
+\frac12\int_{\R^d}\int_{\R^d}W(x-y)\phi^\dagger(x)\phi^\dagger(y)\phi(y)\phi(x)\,dx\,dy.
\end{multline}
\end{remark}

\subsection{States, density matrices}\label{sec:def_DM}
A \emph{(mixed or normal) state} on a (separable) Hilbert space $\gX$ is a non-negative trace-class self-adjoint operator $\Gamma\in\gS_1(\gX)$ such that $\tr(\Gamma)=1$. A \emph{pure state} is an orthogonal projector: $\Gamma=|\Psi\rangle\langle\Psi|$. By the spectral theorem, any state is a convex combination of pure states:
$$\Gamma=\sum_{i\geq1}n_i\,|\Psi_i\rangle\langle\Psi_i|,\qquad \text{where $n_i\geq 0$ and $\sum_{i\geq1}n_i=1$}.$$
Even when the system is expected to be in a pure state, mixed states are very important tools that we use all the time.

We always use the word `state' for mixed state and only make comments related to a more general notion of states (a positive and normalized linear form on a $C^*$-algebra \cite{BraRob1,BraRob2}). We denote by
$$\cS(\gX):=\left\{\Gamma=\Gamma^*\geq0\ :\ \tr_\gX(\Gamma)=1\right\}$$
the convex set of all states on the Hilbert space $\gX$. The natural topology on $\cS(\gX)$ is that induced by the strong topology of $\gS_1(\gX)$. The set $\cS(\gX)$ is convex but it is \emph{not} closed for the weak--$\ast$ topology of $\gS_1(\gX)$. Indeed we have in general that if $\Gamma_n\wto\Gamma$ weakly--$\ast$ with $\{\Gamma_n\}\subset\cS(\gX)$, then $\Gamma=\Gamma^*\geq0$ but 
$$\tr_\gX(\Gamma)\leq \liminf_{n\to\ii}\tr_\gX(\Gamma_n)=1$$
which is the operator version of Fatou's Lemma \cite{Simon-79}. However it is known \cite{dellAntonio-67,Simon-79} that if $\tr_\gX(\Gamma)=1$ then the convergence is strong: $\norm{\Gamma_n-\Gamma}_{\gS_1(\gX)}\to0$. The fact that a weak--$\ast$ limit of a sequence of states is not always a state is a disease that will be repaired in Section \ref{sec:geom_CV}, when we introduce the geometric topology.

\medskip

For a state $\Gamma$ on the fermionic or bosonic Fock space $\cF_{a/s}$, we define the \emph{density matrix} $[\Gamma]^{(p,q)}:\gH^q_{a/s}\to\gH^p_{a/s}$ by the relation
\begin{equation}
{\pscal{g_1 \circ\cdots \circ g_p\,,\,[\Gamma]^{(p,q)}f_1\circ\cdots \circ f_q}_{\gH^p_{a/s}}=\tr_{\cF_{a/s}}\bigg(\Gamma\; a^\dagger(f_1)\cdots a^\dagger(f_q)\,a(g_p)\cdots a(g_1)\bigg)}
\label{eq:def_density_matrix}
\end{equation}
where $\circ=\wedge$ for fermions and $\circ=\vee$ for bosons.
When $p=q$ we use the notation $[\Gamma]^{(p)}$ for the usual $p$-body density matrix of $\Gamma$. Note that $[\Gamma]^{(0)}=\tr_{\cF_{a/s}}(\Gamma)=1$ by definition.

\begin{remark}
If $\Gamma$ commutes with the number operator $\cN$, that is $\Gamma=\bigoplus_{n\geq0}G_n$ with $G_n:\gH^n_{a/s}\to\gH^n_{a/s}$, then it holds $[\Gamma]^{(p,q)}\equiv0$ for $p\neq q$.\dqed
\end{remark}

\begin{remark}
We may define by the same formula the density matrices $[\Gamma]^{(p,q)}$ of any trace-class operator $\Gamma$ (not necessarily self-adjoint and non-negative).\dqed
\end{remark}

For fermions the creation and annihilation operators are bounded and \eqref{eq:def_density_matrix} always properly define the operators $[\Gamma]^{(p,q)}$. For bosons, however, assumptions on $\Gamma$ are needed to make \eqref{eq:def_density_matrix} meaningful. In the following we almost always consider states on the truncated Fock space $\cFN_{a/s}$ for which \eqref{eq:def_density_matrix} makes sense, as we explain below.

Any state $G$ on the $N$-body space $\gH^N_{a/s}$ can also be seen as a state on the Fock spaces $\cFN_{a/s}$ and $\cF_{a/s}$,  by extending it to zero on sectors of $k$ particles with $k\neq N$. A calculation shows that the kernel of $[G]^{(p)}$ is given for $p=0,...,N$ by the well-known formula:
\begin{multline}
[G]^{(p)}(x_1,...,x_p;x_1',...,x_p')\\={N\choose p}\int_{\R^d}dy_{p+1}\cdots\int_{\R^d}dy_{N}\;G\big(x_1,...,x_p,y_{p+1},...,y_N\,;\, x'_1,...,x'_p,y_{p+1},...,y_N\big).\label{eq:DM_N_body}
\end{multline}
Saying differently, it is obtained (up to a constant) by taking a partial trace of $G$ with respect to $N-p$ variables. In particular it holds $\tr_{\gH^p}[G]^{(p)}={N\choose p}$. If $p\geq N+1$, then $[G]^{(p)}\equiv0$.

If $\Gamma$ is any state on the truncated Fock space $\cFN_{a/s}$, then $[\Gamma]^{(p,q)}\equiv0$ if $p\geq N+1$ or $q\geq N+1$. Furthermore all the $[\Gamma]^{(p,q)}$ are trace-class operators, as stated in the following fundamental result.

\begin{lemma}[Density matrices of states on $\cFN_{a/s}$]\label{lem:density_matrices}
$\rm (i)$ For all $0\leq p,q\leq N$ and all state $\Gamma\in\cS(\cFN_{a/s})$, the density matrix $[\Gamma]^{(p,q)}$ is trace-class:
\begin{equation}
\norm{[\Gamma]^{(p,q)}}_{\gS_1(\gH^p_{a/s},\gH^q_{a/s})}\leq \sum_{j=0}^{\min(N-p,N-q)}\sqrt{{{p+j}\choose p}{{q+j}\choose q}},
\label{eq:estim_DM}
\end{equation}
Furthermore the map
\begin{equation}
\Gamma\in\cS\left(\cFN_{a/s}\right)\longmapsto [\Gamma]^{(p,q)}\in \gS_1\left(\gH^q_{a/s},\gH^p_{a/s}\right)
\label{eq:map_densiy_matrices}
\end{equation}
is continuous.

\medskip

\noindent $\rm{(ii)}$ States on $\cFN_{a/s}$ are fully determined by their density matrices: if $\Gamma_1,\Gamma_2\in\cS\big(\cFN_{a/s}\big)$ are such that  $[\Gamma_1]^{(p,q)}=[\Gamma_2]^{(p,q)}$ for all $0\leq p,q\leq N$, then $\Gamma_1=\Gamma_2$.
\end{lemma}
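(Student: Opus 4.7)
The plan is to derive both parts from a block decomposition of every state $\Gamma\in\cS(\cFN_{a/s})$ according to the particle number. Using the spectral theorem to write $\Gamma=\sum_k n_k|\Psi_k\rangle\langle\Psi_k|$ and the linearity of \eqref{eq:def_density_matrix} in $\Gamma$, it is enough to treat a pure state $|\Psi\rangle\langle\Psi|$ with $\Psi=\bigoplus_{m=0}^N\psi_m\in\cFN_{a/s}$ and $\norm{\Psi}_{\cFN_{a/s}}=1$. The operator $a^\dagger(f_1)\cdots a^\dagger(f_q)a(g_p)\cdots a(g_1)$ shifts particle number by $q-p$, so only blocks with indices of the form $(p+j,q+j)$, $0\leq j\leq\min(N-p,N-q)$, contribute in \eqref{eq:def_density_matrix}, giving
$$[|\Psi\rangle\langle\Psi|]^{(p,q)}=\sum_{j=0}^{\min(N-p,N-q)}[|\psi_{p+j}\rangle\langle\psi_{q+j}|]^{(p,q)}.$$
This reduces (i) to bounding the trace norm of each of these transition blocks.

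For a single block I would rewrite the matrix element of $[|\phi\rangle\langle\chi|]^{(p,q)}$ by moving the $q$ creation operators to the left as adjoints,
$$\pscal{g_1\circ\cdots\circ g_p,[|\phi\rangle\langle\chi|]^{(p,q)}f_1\circ\cdots\circ f_q}=\pscal{a(f_q)\cdots a(f_1)\chi,a(g_p)\cdots a(g_1)\phi}_{\gH^j_{a/s}},$$
which identifies $[|\phi\rangle\langle\chi|]^{(p,q)}$ with a product $M_\chi^*M_\phi$ of two Hilbert-Schmidt operators from $\gH^p_{a/s}$ and $\gH^q_{a/s}$ into $\gH^j_{a/s}$ built from iterated annihilation, up to the usual conjugate-linear identifications. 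Their Hilbert-Schmidt norms are computable via the algebraic identity
$$\sum_{|I|=p}a^\dagger(f_I)\,a(f_I)={\cN \choose p}\qquad\text{on }\cF_{a/s},$$
where $\{f_I\}$ runs over the orthonormal basis of $\gH^p_{a/s}$ associated with a chosen orthonormal basis of $\gH$; for fermions this follows directly from the CAR \eqref{eq:CAR}, and for bosons from the CCR \eqref{eq:CCR} together with the normalization factors of symmetric products. Parseval then gives $\norm{M_\phi}_{\gS_2}^2={p+j \choose p}\norm{\phi}^2$ and analogously for $M_\chi$, so H\"older's inequality $\norm{AB}_{\gS_1}\leq\norm{A}_{\gS_2}\norm{B}_{\gS_2}$ yields
$$\big\|[|\phi\rangle\langle\chi|]^{(p,q)}\big\|_{\gS_1}\leq\sqrt{{p+j \choose p}{q+j \choose q}}\,\norm{\phi}\,\norm{\chi}.$$
Summing over $j$ and using $\norm{\psi_{p+j}}\norm{\psi_{q+j}}\leq 1$ produces \eqref{eq:estim_DM}. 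Continuity of the map \eqref{eq:map_densiy_matrices} follows from linearity: applying the same bound to the polar decomposition of an arbitrary trace-class $\Gamma_1-\Gamma_2\in\gS_1(\cFN_{a/s})$ gives an estimate of the form $\norm{[\Gamma_1]^{(p,q)}-[\Gamma_2]^{(p,q)}}_{\gS_1}\leq C_{p,q,N}\norm{\Gamma_1-\Gamma_2}_{\gS_1}$.

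For part (ii) I would set $\Delta=\Gamma_1-\Gamma_2$; by linearity the hypothesis becomes $[\Delta]^{(p,q)}=0$ for all $0\leq p,q\leq N$. Decompose $\Delta$ into particle-number blocks $(\Delta_{m,n})_{0\leq m,n\leq N}$ with $\Delta_{m,n}:\gH^n_{a/s}\to\gH^m_{a/s}$; the same expansion as above yields
$$0=[\Delta]^{(p,q)}=\sum_{j=0}^{\min(N-p,N-q)}[\Delta_{p+j,q+j}]^{(p,q)}.$$
The crucial observation is that the $j=0$ term involves no partial tracing: specializing the pairing of the previous display to $j=0$ (so that $\gH^j=\C$) shows that $[\Delta_{p,q}]^{(p,q)}$ is a nonzero numerical multiple of $\Delta_{p,q}$ itself. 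Choosing $(p,q)$ with $\max(p,q)=N$ reduces the sum to its $j=0$ term and forces $\Delta_{p,q}=0$. A descending induction on $\max(p,q)$, in which the already-vanishing higher-index blocks annihilate every $j\geq 1$ term of the sum, then kills every block and gives $\Delta=0$.

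The main technical obstacle will be the combinatorial bookkeeping needed to verify the identity $\sum_{|I|=p}a^\dagger(f_I)a(f_I)={\cN \choose p}$ in the bosonic case with the correct normalization factors for symmetric products (creation operators are unbounded there but remain bounded on each fixed $\cFN_s$), together with the careful tracking of the conjugate-linear identifications needed to promote the $\gH^j_{a/s}$-pairing on the right-hand side to a genuine Schatten-class operator product $M_\chi^*M_\phi$.
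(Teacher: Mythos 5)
Your proof is correct, and its skeleton coincides with the paper's: decompose $\Gamma$ into particle-number blocks $G_{mn}=\Pi_m\Gamma\Pi_n$, observe that only the blocks $G_{p+j\,q+j}$ with $j\geq0$ contribute to $[\Gamma]^{(p,q)}$, bound each block separately, and invert the resulting triangular system by descending induction to obtain uniqueness --- your induction on $\max(p,q)$ is exactly the inversion recorded in the paper as \eqref{eq:relation_DM_G}, and your ``nonzero numerical multiple'' in the $j=0$ term is in fact $1$, i.e.\ $[\Delta_{p,q}]^{(p,q)}=\Delta_{p,q}$. Where you genuinely differ is in the trace-class estimate for a single block. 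The paper writes the kernel of $[G_{mn}]^{(p,q)}$ explicitly as $\sqrt{{m\choose p}{n\choose q}}$ times a partial trace of $G_{mn}$ over the last $n-q$ variables (formula \eqref{eq:expression_DM_Gmn}) and invokes the fact that partial traces preserve the trace class; you instead factor the block as a product of two Hilbert--Schmidt maps built from iterated annihilation operators and combine the identity $\sum_{|I|=p}a^\dagger(f_I)\,a(f_I)={\cN\choose p}$ with H\"older's inequality $\norm{AB}_{\gS_1}\leq\norm{A}_{\gS_2}\norm{B}_{\gS_2}$. Both routes produce the same constant $\sqrt{{p+j\choose p}{q+j\choose q}}$ and hence the bound \eqref{eq:estim_DM}. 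Your argument is more algebraic and avoids kernels entirely, at the price of the antilinearity bookkeeping and the bosonic normalization issues you correctly flag; the paper's partial-trace computation has the side benefit of producing the explicit kernel formula, which is reused later (e.g.\ in \eqref{eq:relation_DM_G2} and in the localization formulas of Section~\ref{sec:geom_loc}). Your preliminary reduction to pure states via the spectral decomposition is harmless (the uniform bound on partial sums lets you pass to the limit in $\gS_1$) but is not needed in the paper's version, which works directly with the blocks of a general $\Gamma$.
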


The bound \eqref{eq:estim_DM} is certainly not optimal and it is only provided as an illustration. It is a well known general fact that (regular) states are fully determined by their density matrices~\cite{BraRob2}. Our proof below is based on the explicit relation~\eqref{eq:relation_DM_G} between the density matrices $[\Gamma]^{(p,q)}$ and the state $\Gamma$, when the latter is in $\cS(\cFN_{a/s})$. These relations are useful in practice.

Note that the linear map in \eqref{eq:map_densiy_matrices} is \emph{not weakly--$\ast$} continuous. If for instance $\{\phi_n\}$ is an orthonormal system of $\gH$ and $\Gamma_n=|\Psi_n\rangle\langle\Psi_n|$ with $\Psi_n=\phi_1\wedge\phi_n\in\gH^2_a$, then $\Gamma_n\wto0$ weakly--$\ast$ in $\gS_1(\cFN_{a/s})$ but $[\Gamma_n]^{(1)}\wto|\phi_1\rangle\langle\phi_1|$ weakly--$\ast$ in $\gS_1(\gH)$. Indeed, the purpose of the next section is precisely to introduce and study a weak topology that renders all the maps in \eqref{eq:map_densiy_matrices} weakly continuous.

We now provide the proof of Lemma~\ref{lem:density_matrices}.

\begin{proof}
We start by proving that for any state $\Gamma$, it holds $[\Gamma]^{(p,q)}\in\gS_1(\gH^q_{a/s},\gH^p_{a/s})$ for all $0\leq p,q\leq N$. We introduce the matrix elements $G_{mn}=\Pi_m\Gamma\Pi_n:\gH^n_{a/s}\to \gH^m_{a/s}$ where $\Pi_n:=\1_{\{n\}}(\cN)$ is the orthogonal projector onto $\gH^n_{a/s}$. Since $\Gamma\in\gS_1(\cFN_{a/s})$, we have $G_{mn}\in \gS_1(\gH^n_{a/s}, \gH^m_{a/s})$ for all $0\leq m,n\leq N$.

It is easy to see from the definition of the density matrices that $[G_{mn}]^{(p,q)}=0$ except when $m-p=n-q\geq0$. A calculation shows that, in terms of kernels,
\begin{multline}
[G_{mn}]^{(p,q)}(x_1,...,x_p;x'_1,...,x'_q)\\
=\sqrt{{m\choose p}{n\choose q}}\int_{\R^d}dy_{q+1}\cdots \int_{\R^d}dy_{n}\; G_{mn}(x_1,...,x_p,y_{q+1},...,y_n;x'_1,...,x'_q,y_{q+1},...,y_n).
\label{eq:expression_DM_Gmn}
\end{multline}
Since the partial trace of a trace-class operator is itself trace-class, we conclude that $[G_{mn}]^{(p,q)}$ is trace-class for all $0\leq p,q\leq N$, and that
$$\norm{[G_{mn}]^{(p,q)}}_{\gS_1(\gH^p_{a/s},\gH^q_{a/s})}\leq \sqrt{{m\choose p}{n\choose q}}\norm{G_{mn}}_{\gS_1(\gH^m_{a/s},\gH^n_{a/s})}\leq \sqrt{{m\choose p}{n\choose q}}$$
where we have used that $\norm{\Pi_m\Gamma\Pi_n}_{\gS_1}\leq \norm{\Gamma}_{\gS_1}=1$.
The continuity in the trace norm is an obvious consequence of the continuity of partial traces.

Let $0\leq p,q\leq N$ and recall that only the matrix elements $G_{mn}$ such that $m-p=n-q\geq0$ contribute to $[\Gamma]^{(p,q)}$.
For instance $[\Gamma]^{(N,k)}=[G_{Nk}]^{(N,k)}=G_{Nk}$ and $[\Gamma]^{(k,N)}=[G_{kN}]^{(k,N)}=G_{kN}$
for all $0\leq k\leq N$.
Indeed the following holds for all $0\leq p,q\leq N$:
\begin{equation}
[\Gamma]^{(p,q)}=\sum_{j=0}^{\min(N-p,N-q)} [G_{p+j\,q+j}]^{(p,q)}
\label{eq:triangular_system}
\end{equation}
which implies \eqref{eq:estim_DM}.
If we think of the density matrices $[\Gamma]^{(p,q)}$ as being given, the previous equation \eqref{eq:triangular_system} is a triangular system which allows to find all the $G_{mn}$ by induction. 
Inverting this system leads to the following formula:
\begin{equation}
G_{mn}=[\Gamma]^{(m,n)}+\sum_{j=1}^{\min(N-m,N-n)}(-1)^{j} \left[ [\Gamma]^{(m+j,n+j)}\right]^{(m,n)}.
\label{eq:relation_DM_G}
\end{equation}
This shows that states on $\cFN_{a/s}$ are uniquely determined by their density matrices.
\end{proof}

For $m=n$, \eqref{eq:relation_DM_G} may be written
\begin{equation}
G_{mm}=[\Gamma]^{(m,m)}+\sum_{j=1}^{N-m}(-1)^{j} {m+j\choose m}\;\tr_{m+1\to m+j}\; [\Gamma]^{(m+j,m+j)}
\label{eq:relation_DM_G2}
\end{equation}
where $\tr_{m+1\to m+j}$ denotes the partial trace with respect to the $j$ last variables.
For instance, we have $[\Gamma]^{(N)}=G_{NN}$, and
$$G_{N-1\,N-1}=[\Gamma]^{(N-1)}-N\,\tr_{N}[\Gamma]^{(N)}$$
which follows from the fact that 
$$[\Gamma]^{(N-1)}=[G_{N-1\,N-1}]^{(N-1)}+[G_{N\,N}]^{(N-1)}.$$

\medskip

\begin{remark}
Lemma~\ref{lem:density_matrices} is not true as such on the set $\cS(\cF_{a/s})$ of states on the whole Fock space. In general, we have $\Gamma^{(p)}\geq0$ and
$$\tr_{\gH^p_{a/s}}\Gamma^{(p)}=\tr_{\cF_{a/s}}\left({\cN\choose p}\Gamma\right)$$
which is finite only under appropriate assumptions on $\Gamma$. The off-diagonal density matrices $[\Gamma]^{(p,q)}$ are in general only Hilbert-Schmidt when all the $[\Gamma]^{(p)}$ are trace-class.\dqed
\end{remark}

\begin{remark}
We say that a family of operators $\{\Upsilon^m\}_{m=0}^N$ with $\Upsilon^m\in\gS_1(\gH^m)$ is \emph{$\cFN_{a/s}$--representable} when there exists $\Gamma\in \cS(\cFN_{a/s})$ with $[\Gamma,\cN]=0$ such that $\Gamma^{(m,m)}=\Upsilon^m$ for all $m=0,...,N$. Using Formula~\eqref{eq:relation_DM_G2}, we see that $\cFN_{a/s}$--representability is equivalent to having $\Upsilon^{0}=1$ and 
$$\forall m=0,...,N,\qquad \Upsilon^{m}+\sum_{j=m+1}^N(-1)^{j+m} {j\choose m}\;\tr_{m+1\to j}\; \Upsilon^{j}\geq0.$$
The case of states which do not commute with $\cN$ is more involved.\dqed
\end{remark}

\medskip

In this section we have introduced Fock spaces and creation/annihilation operators for indistinguishable fermions or bosons. When working in the truncated space $\cFN_{a/s}$ defined in \eqref{eq:cut-Fock-space}, the statistics of the particles does not make a big difference. To simplify notation, we now write $\gH^p$, $\cF$, $\cFN$, etc, without specifying the considered statistics, except for results which are specific to bosons or fermions.

\section{Geometric convergence}\label{sec:geom_CV}

\subsection{Definition and properties}
\subsubsection{Definition}

We define a weak topology on states in $\cS(\cFN)$, induced by the weak--$\ast$ topologies of all the density matrices $[\Gamma]^{(p,q)}$:

\begin{definition}[Geometric topology \& convergence]\label{def:geom_CV}\it
We define the \emph{geometric topology} $\cT$ on $\cS(\cFN)$ as the coarsest topology such that the maps
\begin{equation}
\Gamma\in\cS(\cFN) \longmapsto \pscal{\psi\,,\,[\Gamma]^{(p,q)}\psi'}_{\gH^p}
\label{eq:map_gto} 
\end{equation}
remain continuous for all $(\psi,\psi')\in\gH^p\times \gH^q$ and all $0\leq p,q\leq N$. 

Let $\{\Gamma_n\}$ be a sequence of states on $\cFN$, and $\Gamma$ be a state on $\cFN$. The sequence $\{\Gamma_n\}$ is said to \emph{converge geometrically} to $\Gamma$ if 
\begin{equation}
\lim_{n\to\ii}\pscal{\psi\,,\,[\Gamma_n]^{(p,q)}\psi'}_{\gH^p}=\pscal{\psi\,,\,[\Gamma]^{(p,q)}\psi'}_{\gH^p}
\label{eq:def_geom_CV}
\end{equation}
for all $(\psi,\psi')\in\gH^p\times \gH^q$ and all $0\leq p,q\leq N$.
We use the notation $\Gamma_n\gto\Gamma$. 
\end{definition}

Note that, when it exists, the geometric limit $\Gamma$ is uniquely defined since $\Gamma\in\cS(\cFN)$ is characterized by its density matrices $[\Gamma]^{(p,q)}$, by Lemma~\ref{lem:density_matrices}. 

We give several examples right after the following result which is an immediate consequence of Lemma~\ref{lem:density_matrices}.

\begin{lemma}[Elementary properties of geometric convergence]\label{lem:prop_geom_CV}

\vspace{-0,2cm}
\begin{enumerate}
\item The geometric topology $\cT$ is coarser than the usual norm topology. If $\Gamma_n\to\Gamma$ strongly in $\gS_1(\cFN)$, then $\Gamma_n\wto_g\Gamma$ geometrically.
\item We have $\Gamma_n\wto_g\Gamma$ in $\cFN$, if and only if $[\Gamma_n]^{(p,q)}\wto [\Gamma]^{(p,q)}$ weakly--$\ast$ in $\gS_1$, for all $0\leq p,q\leq N$.
\end{enumerate}
\end{lemma}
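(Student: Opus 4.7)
The plan is to derive both parts directly from Lemma~\ref{lem:density_matrices} together with standard facts about the strong and weak--$\ast$ topologies on trace class, in particular the uniform bound \eqref{eq:estim_DM} on density matrices, and the density of finite-rank operators in $\cK(\gH^q,\gH^p)$.

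For part (1), I would invoke the continuity statement in Lemma~\ref{lem:density_matrices}(i): strong convergence $\Gamma_n\to\Gamma$ in $\gS_1(\cFN)$ yields $[\Gamma_n]^{(p,q)}\to [\Gamma]^{(p,q)}$ in trace norm for every $0\leq p,q\leq N$. Since norm convergence implies convergence of every matrix element, \eqref{eq:def_geom_CV} holds and hence $\Gamma_n\gto\Gamma$. This shows that every set open in $\cT$ is open in the norm topology, so $\cT$ is coarser.

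For part (2), the ``if'' direction is immediate: weak--$\ast$ convergence $[\Gamma_n]^{(p,q)}\wto [\Gamma]^{(p,q)}$ in $\gS_1$ means $\tr(A_n K)\to\tr(AK)$ for every compact $K\in\cK(\gH^q,\gH^p)$, and in particular for the rank-one operators $K=|\psi'\rangle\langle\psi|$, which reproduces \eqref{eq:def_geom_CV}. For the ``only if'' direction, the bound \eqref{eq:estim_DM} in Lemma~\ref{lem:density_matrices}(i) shows that the sequence $\{[\Gamma_n]^{(p,q)}\}$ is uniformly bounded in $\gS_1(\gH^q,\gH^p)$. Given any compact $K$, approximate it in operator norm by a finite-rank operator $K_\varepsilon=\sum_{i=1}^r|\psi_i\rangle\langle\psi'_i|$ with $\|K-K_\varepsilon\|\leq\varepsilon$. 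Then
\begin{equation*}
\bigl|\tr\bigl(([\Gamma_n]^{(p,q)}-[\Gamma]^{(p,q)})K\bigr)\bigr|
\leq \bigl|\tr\bigl(([\Gamma_n]^{(p,q)}-[\Gamma]^{(p,q)})K_\varepsilon\bigr)\bigr| + C\varepsilon,
\end{equation*}
where $C$ is twice the uniform bound from \eqref{eq:estim_DM}. The first term on the right is a finite sum of matrix elements and therefore tends to zero by \eqref{eq:def_geom_CV}, while $\varepsilon$ is arbitrary. Hence $[\Gamma_n]^{(p,q)}\wto[\Gamma]^{(p,q)}$ weakly--$\ast$ in $\gS_1$.

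There is no serious obstacle here; the only point that deserves attention is the use of the uniform trace-norm bound from Lemma~\ref{lem:density_matrices}(i) in the approximation argument, which is what turns convergence against rank-one operators into convergence against all compact operators.
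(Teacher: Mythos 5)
Your proof is correct and follows essentially the same route as the paper: part (1) from the trace-norm continuity of $\Gamma\mapsto[\Gamma]^{(p,q)}$ in Lemma~\ref{lem:density_matrices}, and part (2) from the uniform bound \eqref{eq:estim_DM}. The only cosmetic difference is that the paper invokes the Banach--Alaoglu theorem in $\gS_1(\gH^q,\gH^p)$ where you run the equivalent direct argument (uniform trace-norm bound plus norm-density of finite-rank operators in $\cK$), which is a perfectly standard substitute.
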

\begin{proof}
The first assertion follows from the (strong) continuity of the maps $\Gamma\mapsto[\Gamma]^{(p,q)}$ for all $0\leq p,q\leq N$, as stated in Lemma~\ref{lem:density_matrices}. The second assertion is a consequence of the uniform trace-class bound \eqref{eq:estim_DM} on all the density matrices and of the Banach-Alaoglu Theorem in $\gS_1(\gH^q,\gH^p)$.
\end{proof}

Let us emphasize that the geometric limit $\Gamma$ of a sequence of states is, by definition, always a \emph{state}, that is it must satisfy $\tr_\cF(\Gamma)=1$. Contrarily to the usual weak--$\ast$ convergence on $\gS_1(\cFN)$, there is never any loss in the trace-norm when $\Gamma_n\wto_g\Gamma$. If in the geometric limit some particles are lost, then $\Gamma$ lives on spaces with less particles in $\cFN$. If all the particles are lost, then we have $\Gamma=|\Omega\rangle\langle\Omega|$, the vacuum state in $\cFN$. 

We now provide examples of sequences $\{\Gamma_n\}$ which geometrically converge but do not strongly converge to a limit $\Gamma$. Our claims can be verified by computing the density matrices $[\Gamma_n]^{(p,q)}$ and checking their weak--$\ast$ convergence towards $[\Gamma]^{(p,q)}$.

\begin{example}
Let $\{\phi_n\}$ be an orthonormal basis of $\gH$. Define a sequence of two-body fermionic wavefunctions by $\Psi_n:=\phi_1\wedge\phi_n$, with associated state in $\cF^{\leq2}_a$ denoted by $\Gamma_n=0\oplus0\oplus|\Psi_n\rangle\langle\Psi_n|$. It holds $\Gamma_n\wto_\ast 0$ weakly--$\ast$ and 
$\Gamma_n\wto_g 0\oplus|\phi_1\rangle\langle\phi_1|\oplus0$ geometrically in $\cF^{\leq2}_a$. The geometric limit $\Gamma$ describes a system composed of only one particle, in the state $\phi_1$. The other particle in the state $\phi_n$ has vanished in the limit.\dqed
\end{example}

\begin{example}
Even when $\Gamma_n$ is a pure state for all $n$, the geometric limit $\Gamma$ is not always a pure state. For instance if $\Psi_n:=\phi_n^1\wedge\phi_n^2$ with $\phi_n^1=\cos\alpha\,\phi_1+\sin\alpha\,\phi_n$ and $\phi_n^2=\cos\beta\,\phi_2+\sin\beta\,\phi_{n+1}$, then the corresponding state $\Gamma_n=0\oplus0\oplus|\Psi_n\rangle\langle\Psi_n|$ on $\cF^{\leq2}_a$ converges geometrically to 
\begin{multline*}
\Gamma_n\gto\Gamma=\Big(\sin^2\!\alpha\,\sin^2\!\beta\Big)\oplus\Big(\cos^2\!\alpha\,\sin^2\!\beta\;|\phi_1\rangle\langle\phi_1|+\cos^2\!\beta\,\sin^2\!\alpha\;|\phi_2\rangle\langle\phi_2|\Big)\\
\oplus  \Big(\cos^2\!\alpha\,\cos^2\!\beta\;|\phi_1\wedge\phi_2\rangle\langle\phi_1\wedge\phi_2|\Big). 
\end{multline*}
On the other hand, we have $$\Gamma_n\underset{\ast}{\wto}0\oplus0\oplus\cos^2\!\alpha\,\cos^2\!\beta\;|\phi_1\wedge\phi_2\rangle\langle\phi_1\wedge\phi_2|$$ weakly--$\ast$ in $\gS_1(\cF^{\leq2}_a)$.\dqed
\end{example}

\begin{example}[Hartree states]\label{ex:Hartree}
For bosons, a \emph{Hartree state} takes the form $\Psi=\phi\otimes\cdots\otimes\phi\in\gH^N_s$ where $\norm{\phi}_\gH=1$. Assume that $\{\phi_n\}$ is a sequence of normalized functions in $\gH$, with $\phi_n\wto\phi$ weakly. Let $\Gamma_n=0\oplus\cdots0\oplus|(\phi_n)^{\otimes N}\rangle\langle(\phi_n)^{\otimes N}|$ be the associated $N$-body state in $\cS(\cFN_s)$. Then it holds
$$\Gamma_n\gto \bigoplus_{k=0}^N{N\choose k}\left(1-\norm{\phi}_\gH^2\right)^{N-k}|\phi^{\otimes k}\rangle\langle\phi^{\otimes k}|.$$
It is clear that the convergence is strong if and only if $\norm{\phi}_\gH=1$.\dqed
\end{example}

\begin{example}[Coherent states]\label{ex:coherent}
For bosons, \emph{coherent states} are defined by the formula $\Gamma_f:=W(f)|\Omega\rangle\in\cF_s$ where $W(f)=\exp(a^\dagger(f)-a(f))$ is the Weyl unitary operator ($f$ is any vector of the one-body space $\gH^1$). The latter satisfies the following interwinning relations
\begin{equation}
W(f)^*\Big(a(g)-\pscal{g,f}\Big)W(f)=a(g),\qquad W(f)^*\Big(a^\dagger(g)-\pscal{f,g}\Big)W(f)=a^\dagger(g).
\label{eq:Weyl_relations}
\end{equation}
The density matrix $[\Gamma_f]^{(p,q)}$ of a coherent state $\Gamma_f=W(f)\,|\Omega\rangle\langle\Omega|\,W(f)^*$ is 
$[\Gamma_f]^{(p,q)}=|f^{\otimes p}\rangle\,\langle f^{\otimes q}|$.
Consider a sequence $\{\Gamma_{f_n}\}$ of coherent states with $\{f_n\}$ bounded in $\gH^1$, such that $f_n\wto f$ weakly in $\gH^1$. Then $\Gamma_{f_n}\wto_g \Gamma_f$ geometrically in the sense that $[\Gamma_{f_n}]^{(p,q)}\wto_\ast [\Gamma_{f}]^{(p,q)}$ weakly--$\ast$, for all $p,q\geq0$. Note that coherent states do not live on any truncated Fock space $\cFN$, hence Definition~\ref{def:geom_CV} has to be generalized in an obvious fashion on the whole Fock space $\cF$.\dqed
\end{example}

\begin{example}[Hartree-Fock(-Bogoliubov) states]\label{ex:HFB}
For fermions, there is a subclass of states which are fully characterized by their one-body density matrix $[\Gamma]^{(1)}$ and their \emph{pairing density matrix} $[\Gamma]^{(2,0)}$ (if they commute with $\cN$, they are only characterized by $[\Gamma]^{(1)}$). These states are called \emph{generalized Hartree-Fock states} \cite{BacLieSol-94} or \emph{Hartree-Fock-Bogoliubov states} (when $[\Gamma]^{(2,0)}\neq0$).
Here `fully characterized' means that any density matrix $[\Gamma]^{(p,q)}$ is an explicit function of $[\Gamma]^{(1,1)}$ and $\Gamma^{(2,0)}$, given by Wick's formula, see Eq. (2a.11) in \cite{BacLieSol-94}. 
When $\tr_\gH([\Gamma_n]^{(1)})$ is uniformly bounded, it is easily seen that geometric convergence of generalized Hartree-Fock states is equivalent to the weak--$\ast$ convergence of $[\Gamma_n]^{(1)}$ and of $[\Gamma_n]^{(2,0)}$. The geometric limit is always a generalized Hartree-Fock state.

Note that if $[\Gamma]^{(1)}$ has an infinite rank (but a finite trace), then the corresponding Hartree-Fock state $\Gamma$ does not live on any truncated Fock space $\cF^{\leq k}$. However, geometric convergence can be understood in the same fashion as in the previous example.\dqed
\end{example}

\begin{example}
Let $\Gamma_0$ be any state on $\cFN$ and let $U(t)=e^{-itT}$ (with $T=-\Delta/2$) be the unitary free evolution on the one-body space $\gH^1$, of a non-relativistic particle. Let
$$\mathbb{U}(t)=1\oplus U(t)\oplus \Big(U(t)\otimes U(t)\Big)\oplus\cdots\oplus \Big(U(t)^{\otimes N}\Big)=e^{it\mathbb{T}}$$ 
be the unitary evolution of the second quantization of $T$ on the truncated Fock space $\cF^{\leq N}$: $$\mathbb{T}=0\oplus\bigoplus_{n=1}^N\left(\sum_{j=1}^n\frac{(-\Delta)_j}2\right).$$
The state $\Gamma(t):=\mathbb{U}(t)\Gamma_0\mathbb{U}(t)^*$ is the unique weak solution to the Schrödinger-von Neumann equation
$$\left\{\begin{array}{l}
\displaystyle i\frac{\rm d}{{\rm d}t}\Gamma(t)=\left[\mathbb{T}\,,\,\Gamma(t)\right]\\[0,4cm]
\Gamma(t=0)=\Gamma_0.
\end{array}\right.$$
Then 
$$\Gamma(t)\gto |\Omega\rangle\langle\Omega|\quad \text{as $t\to\pm\ii$}.$$
 Indeed, we have
$$\forall 0\leq p,q\leq N,\qquad [\Gamma(t)]^{(p,q)}=\underbrace{U(t)\otimes\cdots\otimes U(t)}_{p}\;[\Gamma_0]^{(p,q)}\;\underbrace{U(t)^*\otimes\cdots\otimes U(t)^*}_{q}$$
which tends to 0 weakly--$\ast$ in $\gS_1$, when $(p,q)\neq(0,0)$. The same holds if $U(t)$ is any unitary family satisfying  $U(t)\wto0$ weakly as $t\to\pm\ii$.\dqed
\end{example}

After these examples, we now make some fundamental remarks about the notion of geometric convergence.

\begin{remark}[Geometric convergence is a $C^*$-algebra concept]\label{rmk:Clifford}
The geometric topology is the restriction to $\cFN$ of a well-known weak topology arising in $C^*$-algebra theory, a fact that we will need in the proof of Lemma~\ref{lem:compact} below.

For fermions, an equivalent way of formulating \eqref{eq:def_geom_CV} is, by the definition \eqref{eq:def_density_matrix} of density matrices,
\begin{equation}
\forall A\in{\cA},\qquad \lim_{n\to\ii}\tr_\cF(\Gamma_nA)=\tr_\cF(\Gamma A)
\label{eq:geom_CV_Clifford}
\end{equation}
where $\cA$ is the $C^*$-algebra \cite{BraRob1,BraRob2} generated by all the $a^\dagger(f)$ with $f$ any vector in $\gH$. Therefore for fermions the topology $\cT$ is nothing but the usual weak--$\ast$ topology of states on the CAR algebra $\cA$, restricted to states of the truncated Fock space $\cFN_{a/s}$.
For bosons, the same holds true with $\cA$ being the CCR algebra, generated by the Weyl operators of Example~\ref{ex:coherent}. 

Note that we have $\Gamma_n\wto\Gamma$ for the weak--$\ast$ topology of $\gS_1(\cF)$ if and only if 
$$\forall K\in\cK(\cF),\qquad \lim_{n\to\ii}\tr_\cF(\Gamma_nK)=\tr_\cF(\Gamma K)$$
where we recall that $\cK(\cF)$ is the algebra of compact operators.
In both the fermionic and bosonic cases, the CAR/CCR algebra ${\cA}$ does not contain any nontrivial compact operator: ${\cA}\cap\cK(\cF)=\{0\}$. Geometric convergence is thus \emph{a priori} not related to the usual weak--$\ast$ convergence and it is possible to have $\Gamma_n\wto_g\Gamma$ with $\tr_{\cF}(\Gamma)=1$ whereas $\Gamma_n\wto0$ weakly--$\ast$ in $\gS_1(\cF)$, like in the previous examples.\dqed
\end{remark}

\begin{remark}\label{rmk:commute}
If $\Gamma_n$ commutes with the number operator $\cN$ for all $n$, $[\Gamma_n,\cN]=0$, then $[\Gamma_n]^{(p,q)}\equiv0$ for all $p\neq q$ and it is easy to verify that the geometric limit $\Gamma$ of $\{\Gamma_n\}$ must also commutes with $\cN$.\dqed
\end{remark}

\begin{remark}\label{rmk:several_kinds}
A similar definition of the geometric topology and convergence can be provided if the system contains several species of particles. One introduces the density matrices $[\Gamma]^{(p_1,...,p_k,q_1,...,q_\ell)}$ where $p_i$ and $q_i$ respectively count the number of annihilation and creation operators of the species $i$ (bosons or fermions). One works in the truncated Fock space $\cF^{\leq N_1,...,N_k}$ corresponding to having at most $N_i$ particles of species $i$.\dqed
\end{remark}

\subsubsection{Compactness results}\label{sec:prop_geom_CV}
The following result is very useful in practice. It allows us to work with weak limits of density matrices while being sure, at the same time, that the limits arise from a state $\Gamma$.

\begin{lemma}[Geometric compactness of $\cS(\cFN)$]\label{lem:compact}
The set of states $\cS(\cFN)$ on $\cFN$ is (sequentially) compact for the geometric topology $\cT$: every sequence of states $\{\Gamma_n\}\subset\cS(\cFN)$ has a subsequence which converges geometrically, $\Gamma_{n_k}\gto \Gamma$.
\end{lemma}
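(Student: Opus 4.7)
The plan is to extract a subsequence along which every density matrix converges weakly-$\ast$, reassemble a candidate state $\Gamma$ from these limits via the inversion formula \eqref{eq:relation_DM_G}, and verify that $\Gamma$ is indeed a state on $\cFN$; geometric convergence $\Gamma_{n_k}\gto\Gamma$ will then follow from Lemma~\ref{lem:prop_geom_CV}.

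First I would apply Banach-Alaoglu: the uniform trace-class bound \eqref{eq:estim_DM} and separability of the preduals $\cK(\gH^p,\gH^q)$ yield, through a diagonal argument over the finitely many pairs $(p,q)$ with $0\leq p,q\leq N$, a subsequence (still denoted $\{\Gamma_n\}$) along which $[\Gamma_n]^{(p,q)}\wto\gamma^{(p,q)}$ weakly-$\ast$ in $\gS_1$ for every such pair. Feeding the $\gamma^{(p,q)}$ into \eqref{eq:relation_DM_G} defines blocks $G_{mn}$, and I set $\Gamma:=\sum_{m,n}G_{mn}$ on $\cFN$. Since \eqref{eq:triangular_system} and \eqref{eq:relation_DM_G} are algebraic inverses, direct substitution gives $[\Gamma]^{(p,q)}=\gamma^{(p,q)}$ for every $(p,q)$; in particular $\tr(\Gamma)=[\Gamma]^{(0,0)}=\gamma^{(0,0)}=1$, and passing the relation $([\Gamma_n]^{(p,q)})^*=[\Gamma_n]^{(q,p)}$ to the weak-$\ast$ limit yields $\Gamma^*=\Gamma$.

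The hard part is non-negativity $\Gamma\geq 0$: one cannot obtain this by taking block-level limits of $\Pi_m\Gamma_n\Pi_n\geq 0$, because partial traces are not weakly-$\ast$ continuous in $\gS_1$ (mass can escape in the one-particle variable, as illustrated by $\Psi_n=\phi_1\wedge\phi_n$ in the introduction), and the weak-$\ast$ limits of the blocks in general differ from the $G_{mn}$ produced by inversion. I would therefore argue positivity by testing against polynomials in creation and annihilation operators, in the spirit of Remark~\ref{rmk:Clifford}. Fix an orthonormal basis $\{f_i\}_{i\geq 1}$ of $\gH$ and let $n_i:=a^\dagger(f_i)a(f_i)$. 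Using $n_i\leq\cN\leq N$ on $\cFN$, each $n_i$ has spectrum in $\{0,1,\ldots,N\}$ there, so the polynomials
\begin{equation*}
P_K:=\prod_{i=1}^K(1-n_i)\ \text{(fermions)},\qquad P_K:=\prod_{i=1}^K\prod_{k=1}^N\Bigl(1-\frac{n_i}{k}\Bigr)\ \text{(bosons)}
\end{equation*}
coincide on $\cFN$ with the orthogonal projection onto configurations involving none of $f_1,\ldots,f_K$. In both cases $P_K$ is a non-negative polynomial in $a,a^\dagger$ on $\cFN$, and $P_K\to|\Omega\rangle\langle\Omega|$ strongly on $\cFN$ as $K\to\infty$.

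Given $\Psi\in\cFN$, density of $\{A\Omega:A\ \text{polynomial in creation operators}\}$ in $\cFN$ lets me write $\Psi=A\Omega$, and then $B_K:=AP_KA^*=(AP_K)(AP_K)^*\geq 0$, so $\tr(\Gamma_nB_K)\geq 0$ for every $n$. Expanding $B_K$ into a finite linear combination of monomials $a^\dagger(h_1)\cdots a(h_r)$, definition \eqref{eq:def_density_matrix} identifies each trace $\tr(\Gamma_n\cdot\text{monomial})$ with a pairing of $[\Gamma_n]^{(p,q)}$ against a rank-one compact operator; these pairings converge to the same matrix elements of $\gamma^{(p,q)}=[\Gamma]^{(p,q)}$, hence $\tr(\Gamma_nB_K)\to\tr(\Gamma B_K)\geq 0$ as $n\to\infty$. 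Finally, the strong convergence $B_K\to A|\Omega\rangle\langle\Omega|A^*$ on $\cFN$ together with self-adjointness of each $B_K$ and $\Gamma\in\gS_1(\cFN)$ gives $\tr(\Gamma B_K)\to\tr(\Gamma A|\Omega\rangle\langle\Omega|A^*)=\pscal{\Psi,\Gamma\Psi}$ as $K\to\infty$, so $\pscal{\Psi,\Gamma\Psi}\geq 0$ and $\Gamma\geq 0$.
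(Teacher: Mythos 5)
Your proof is correct, but it takes a genuinely different route from the paper's. The paper disposes of the lemma in one paragraph by working at the level of the CAR/CCR algebra $\cA$ (Remark~\ref{rmk:Clifford}): Banach--Alaoglu on the state space of $\cA$ produces a weak--$\ast$ limit $\omega$ that is \emph{automatically} a positive normalized functional; the uniform bound on the particle number then forces $\omega$ to be normal (a cited fact from Bratteli--Robinson), i.e.\ $\omega=\tr(\Gamma\,\cdot\,)$ for some $\Gamma\in\cS(\cF)$, and $[\Gamma]^{(N+1,N+1)}=0$ pins $\Gamma$ down to $\cFN$. You instead extract weak--$\ast$ limits of the finitely many density matrices using \eqref{eq:estim_DM}, rebuild $\Gamma$ through the inversion formula \eqref{eq:relation_DM_G}, and prove positivity by hand by testing against $AP_KA^*$ with $P_K$ an explicit polynomial approximation of the vacuum projector. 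This is precisely the route the paper's remark following the lemma describes and flags as ``subtle'' at the positivity step, and you supply that missing piece; what you lose in length you gain in self-containedness (no appeal to the normality theorem) and in making visible exactly where $\Gamma\geq0$ comes from, namely normal ordering plus \eqref{eq:def_density_matrix}. Two small points worth tightening: density of $\{A\Omega\}$ only lets you verify $\pscal{\Psi,\Gamma\Psi}\geq0$ on a dense set, which suffices because $\Gamma$ is bounded (you cannot literally write every $\Psi$ as $A\Omega$); and for bosons $P_K$ is an orthogonal projection only after restriction to $\cFN_s$, so the factorization $AP_KA^*=(AP_K)(AP_K)^*$ and the bound $\tr(\Gamma_nB_K)\geq0$ should be read with the projector $\Pi_{\leq N}$ inserted, which is harmless since $\Gamma_n$ and $\Gamma$ live on $\cFN$ and $A^*$, $P_K$ preserve it.
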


\begin{proof}
This result immediately follows from well-known facts in the theory of $C^*$-algebras (recall Remark~\ref{rmk:Clifford}). By the Banach-Alaoglu Theorem, any sequence of states $\{\Gamma_n\}\subset\cS(\cF^{\leq N})$ on the CAR (resp. CCR) algebra $\cA$ generated by the creation operators (resp. Weyl operators), has a weakly-convergent subsequence in the sense that for every $A\in\cA$, one has $\tr(\Gamma_{n_k} A)\to\omega(A)$ where $\omega$ is a positive normalized linear form on $\cA$, \cite{BraRob1}. Since $\Gamma_{n}$ lives on the truncated Fock space $\cF^{\leq N}$ for every $n$, it has a uniformly bounded average particle number, hence its weak limit $\omega$ must be a normal state~\cite{BraRob2}: there is a $\Gamma\in\cS(\cF)$ such that $\omega(A)=\tr_\cF(\Gamma A)$ for all $A$. Since $[\Gamma]^{(N+1,N+1)}=0$, it is easy to verify that $\Gamma$ must also live on $\cF^{\leq N}$ and the result follows.
\end{proof}

\begin{remark}
Up to extraction of subsequences, one can always assume that $[\Gamma_n]^{(p,q)}\wto_\ast \Upsilon^{(p,q)}$ weakly--$\ast$ in $\gS_1(\gH^q,\gH^p)$. The matrix elements $G_{m,n}$ of the limit state $\Gamma$ are then uniquely determined from the operators $\Upsilon^{(p,q)}$ by Formula~\eqref{eq:relation_DM_G}. What is more subtle is the fact that the so-obtained $\Gamma$ is really a state, that is $\Gamma=\Gamma^*\geq0$.\dqed
\end{remark}

\begin{remark}
Lemma \ref{lem:compact} can obviously be extended to sequences of states $\{\Gamma_n\}$ on the whole Fock space $\cF$ which satisfy a uniform bound of the form $\tr_\cF(\cN\Gamma_n)\leq C$. \dqed
\end{remark}

The following result says that the total number of particles in the system cannot increase under geometric convergence, and that there is strong convergence if and only if no particle has been lost.

\begin{lemma}[Average particle number and strong convergence]\label{lem:N_wlsc}
Let $\{\Gamma_n\}$ be a sequence of states in $\cS(\cFN)$ and $\Gamma\in\cS(\cFN)$ be a state such that  $\Gamma_n\wto_g\Gamma$. The average particle number is lower semi-continuous:
\begin{equation}
\tr_\cF(\cN\Gamma)\leq\liminf_{n\to\ii}\,\tr_\cF(\cN\Gamma_n).
\label{eq:N_wlsc} 
\end{equation}
Furthermore, if $\lim_{n\to\ii}\tr_\cF(\cN\Gamma_n)=\tr_\cF(\cN\Gamma)$, then $\Gamma_n\to\Gamma$ strongly in $\gS_1(\cFN)$.
\end{lemma}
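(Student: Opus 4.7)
\noindent\emph{Part (i), lower semicontinuity.} The identity $\tr_\cF(\cN\Gamma)=\tr_\gH[\Gamma]^{(1)}$, which follows from $\cN$ being the second quantization of the identity on $\gH$ together with the general relation $\tr_\cF(\mathbb{A}\Gamma)=\tr_\gH(A[\Gamma]^{(1)})$ recalled in Section~\ref{sec:def_DM}, reduces \eqref{eq:N_wlsc} to the weak-$\ast$ lower semicontinuity of the trace on the positive cone of $\gS_1(\gH)$. Geometric convergence $\Gamma_n\wto_g\Gamma$ gives $[\Gamma_n]^{(1)}\wto[\Gamma]^{(1)}$ weakly-$\ast$ in $\gS_1(\gH)$ by Lemma~\ref{lem:prop_geom_CV}, and since both operators are nonnegative trace-class, the classical Fatou-type inequality yields $\tr_\gH[\Gamma]^{(1)}\leq\liminf_n\tr_\gH[\Gamma_n]^{(1)}$.

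\medskip

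\noindent\emph{Part (ii), reduction.} Under the equality case in \eqref{eq:N_wlsc}, the weak-$\ast$ convergence of $[\Gamma_n]^{(1)}$ together with convergence of traces and the dellAntonio-Simon result quoted in Section~\ref{sec:def_DM} upgrade to trace-norm convergence $[\Gamma_n]^{(1)}\to[\Gamma]^{(1)}$ in $\gS_1(\gH)$. Using Banach-Alaoglu on $\gS_1(\cFN)=\cK(\cFN)'$, I extract a subsequence along which $\Gamma_n\wto\tilde\Gamma$ weakly-$\ast$ in $\gS_1(\cFN)$, with $\tilde\Gamma\geq 0$ and $\tr_\cF\tilde\Gamma\leq 1$. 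It then suffices to show $\tr_\cF\tilde\Gamma=1$: dellAntonio-Simon will give strong convergence $\Gamma_n\to\tilde\Gamma$ in $\gS_1(\cFN)$; the strong continuity of density matrices (Lemma~\ref{lem:density_matrices}(i)) combined with geometric convergence will force $[\tilde\Gamma]^{(p,q)}=[\Gamma]^{(p,q)}$ for every $p,q$; uniqueness (Lemma~\ref{lem:density_matrices}(ii)) will identify $\tilde\Gamma=\Gamma$; and a standard subsequence argument will promote convergence to the whole sequence.

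\medskip

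\noindent\emph{Key tightness step.} The main obstacle is transferring tightness from $\gH$ to $\cFN$, because weak-$\ast$ convergence in $\gS_1(\cFN)$ does not preserve the trace. Trace-norm convergence of $[\Gamma_n]^{(1)}$ in $\gS_1(\gH)$ implies that for every $\epsilon>0$ there exists a finite-rank orthogonal projection $P$ on $\gH$ with $\tr_\gH((1-P)[\Gamma_n]^{(1)})<\epsilon$ uniformly in $n$. Setting $V:=P\gH$ (finite dimensional), let $\mathbb{P}_V$ denote the orthogonal projection in $\cFN$ onto the finite-dimensional subspace $\bigoplus_{k=0}^N\bigwedge^k V$ (replace $\bigwedge$ by $\bigvee$ for bosons) of states all of whose particles lie in $V$; in particular $\mathbb{P}_V$ has finite rank, hence is compact on $\cFN$. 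On the joint spectral decomposition of the ``number-of-particles-outside-$V$'' operator $\mathrm{d}\Gamma(1-P)=\sum_i(1-P)_i$ one immediately reads the operator inequality $\1_{\cFN}-\mathbb{P}_V\leq\mathrm{d}\Gamma(1-P)$ (the indicator of ``at least one particle outside $V$'' is dominated by the total count). Combined with $\tr_\cF(\mathrm{d}\Gamma(1-P)\Gamma_n)=\tr_\gH((1-P)[\Gamma_n]^{(1)})<\epsilon$, this yields $\tr_\cF(\mathbb{P}_V\Gamma_n)>1-\epsilon$ uniformly in $n$. Compactness of $\mathbb{P}_V$ now allows passing to the weak-$\ast$ limit: $\tr_\cF\tilde\Gamma\geq\tr_\cF(\mathbb{P}_V\tilde\Gamma)=\lim_n\tr_\cF(\mathbb{P}_V\Gamma_n)\geq 1-\epsilon$, and letting $\epsilon\to 0$ gives $\tr_\cF\tilde\Gamma=1$, which completes the proof.
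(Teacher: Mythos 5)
Your proof is correct; part (i) is the same argument as the paper's, but part (ii) follows a genuinely different route. The paper reduces to the diagonal: it extracts the blocks $G^n_{kk}$, uses the bookkeeping identity $\sum_k k\,t_k=\tr_\cF(\cN\Gamma)$ together with $\sum_k t_k=1$ to force each normalized block $\tilde G^n_{kk}$ to conserve its own average particle number, applies the converse of Fatou's lemma sector by sector (after first treating the special case of a pure $N$-body sequence, where conservation forces all the mass into the $N$-particle sector), and only then recovers the full operator from its diagonal. You instead establish tightness of $\{\Gamma_n\}$ in $\gS_1(\cFN)$ directly: the Markov-type operator inequality $\1_{\cFN}-\mathbb{P}_V\leq{\rm d}\Gamma(1-P)$ (valid because ${\rm d}\Gamma(1-P)$ has spectrum in $\{0,1,2,\dots\}$ and kernel exactly the Fock space over $V=P\gH$), combined with the trace-norm convergence of the one-body density matrices and the identity $\tr_\cF({\rm d}\Gamma(1-P)\Gamma_n)=\tr_\gH\big((1-P)[\Gamma_n]^{(1)}\big)$, shows that no trace can leak away in the weak-$\ast$ limit; a single application of the dell'Antonio--Simon result and the uniqueness statement of Lemma~\ref{lem:density_matrices}(ii) then finish the argument. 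Your route isolates the real mechanism (one-body tightness propagates to many-body tightness through second quantization) and would adapt with no change to states on the full Fock space with uniformly bounded particle number, whereas the paper's route is more elementary, staying entirely within the finite diagonal decomposition of $\cFN$. The two points you leave implicit are routine: a single finite-rank $P$ works uniformly in $n$ because one can enlarge the projection to cover the finitely many initial indices, and the converse of Fatou applies to the non-normalized positive operators $[\Gamma_n]^{(1)}$ after an obvious renormalization.
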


\begin{proof}
Let us recall that  $\tr_\cF(\cN\Gamma)=\tr_{\gH}[\Gamma]^{(1)}$, hence, since $[\Gamma_n]^{(1)}\wto[\Gamma]^{(1)}$ weakly--$\ast$ in $\gS_1(\gH)$ by Lemma~\ref{lem:prop_geom_CV}, we have
$$\tr_\cF(\cN\Gamma)=\tr_\gH[\Gamma]^{(1)}\leq\liminf_{n\to\ii}\tr_\gH[\Gamma_n]^{(1)}=\liminf_{n\to\ii}\,\tr_\cF(\cN\Gamma_n).$$
Another proof consists in writing that
$\tr_{\cF}\left(\Gamma_n\sum_{i=1}^Ka^\dagger(f_i)\,a(f_i)\right)\leq \tr_{\cF}\left(\Gamma_n\cN\right)$
by \eqref{eq:2nd_qtz_N}. It then suffices to pass to the limit first as $n\to\ii$ and then as $K\to\ii$.

The proof that conservation of the average particle number implies strong convergence requires a bit more work. We start with a sequence of $N$-body states, that is $\Gamma_n=0\oplus\cdots\oplus G^n$ where $G^n\in\cS(\gH^N)$. We assume that $\Gamma_n\wto_g \Gamma$ in $\cFN$. From Remark~\ref{rmk:commute}, we know that $\Gamma$ commutes with $\cN$:
$$\Gamma=\left(\begin{matrix}
G_{00} & &0\\
 & \ddots & \\
0 & & G_{NN}
\end{matrix}\right).$$
The assumption that 
$$N=\lim_{n\to\ii}\tr_{\cF}(\Gamma_n)=\tr_{\cF}(\cN \Gamma)=\sum_{k=0}^Nk\,\tr_{\gH^k}(G_{kk})$$
together with the fact that $\sum_{k=0}^N\tr_{\gH^k}(G_{kk})=1$ since $G$ is a state, imply that $G_{kk}=0$ for all $k=0,...,N-1$ and $\tr_{\gH^N}(G_{NN})=1$.
However, we know that $G_{NN}$ is the weak--$\ast$ limit of $G^n$ in $\gS_1(\gH^N)$. Therefore $\tr_{\gH^N}(G_{NN})=1$ implies that $G^n\to G_{NN}$ strongly in $\gS_1$, by the reciprocal of Fatou's Lemma for trace-class operators (see \cite{dellAntonio-67,Simon-79}), and the result follows.

We now come back to the general case. Let $\Gamma_n\wto_g \Gamma$ be an arbitrary sequence which converges geometrically in $\cFN$, such that $\tr_\cF(\cN\Gamma)=\lim_{n\to\ii}\tr_\cF(\cN\Gamma)$. We denote by $G_{k\ell}^n$ the matrix elements of $\Gamma_n$ and introduce the auxiliary state
$$\tilde\Gamma_n=\left(\begin{matrix}
G_{00}^n & &0\\
 & \ddots & \\
0 & & G_{NN}^n
\end{matrix}\right)$$
obtained by retaining only the diagonal of $\Gamma_n$. It is easy to check that $\tilde{\Gamma}_n\wto_g\tilde{\Gamma}$, the diagonal of $\Gamma$. We first prove that $\tilde{\Gamma}_n\to\tilde{\Gamma}$ strongly. Indeed we may write
$$\tilde{\Gamma}_n=\sum_{k=0}^Nt_k^n\tilde{G}_{kk}^n$$
where $t_k^n=\tr_{\gH^k}(G_{kk}^n)$ and $\tilde{G}_{kk}^n={G}_{kk}^n/t_k^n$ is a state on $\gH^k$ (with an obvious convention when $t_k^n=0$).
We have $\tilde{G}_{kk}^n\wto_g \tilde{G}_{kk}$ for all $k=0,...,N$ and $\tilde{\Gamma}=\sum_{k=0}^N t_k\tilde{G}_{kk}$ with $t_k=\lim_{n\to\ii}t_k^n$ (up to subsequences). Our assumption means that
$$\sum_{k=0}^Nt_k\tr_{\cF}(\cN\tilde{G}_{kk})= \sum_{k=0}^Nk\,t_k.$$
However by \eqref{eq:N_wlsc}, it holds $\tr_{\cF}(\cN\tilde{G}_{kk})\leq k$ for all $k$, hence the previous equation means that  $\tr_{\cF}(\cN\tilde{G}_{kk})= k$ for all $k=0,...,N$ such that $t_k\neq0$. As we have shown in the previous paragraph, this implies that $\tilde{G}_{kk}^n\to \tilde{G}_{kk}$ strongly in $\gS_1(\gH^k)$. When $t_k=0$, we have simply $G_{kk}^n\to0$ strongly. This eventually shows that $\tilde{\Gamma}_n\to\tilde\Gamma$ strongly.

We now conclude that $\Gamma_n\to\Gamma$ strongly. Indeed, we have $\Gamma_n\wto \Gamma'$ weakly--$\ast$ in $\gS_1(\cFN)$ and we know that the diagonal of $\Gamma_n$ converges strongly, hence in particular $\tr(\Gamma')=1$. By the reciprocal of Fatou's Lemma \cite{dellAntonio-67,Simon-79}, this implies that $\Gamma_n\to\Gamma$ strongly, which ends the proof of Lemma~\ref{lem:N_wlsc}.
\end{proof}

\subsection{Application: HVZ theorem in the lower semi-continuous case}\label{sec:HVZ_wlsc}
In this section, we illustrate the use of geometric convergence on the very simple example of a many-body system with a \emph{non-negative} two-body interaction. Our example covers the celebrated case of atoms and molecules. 

We consider the following many-body Hamiltonian
\begin{equation}
\boxed{H^V(N):=\sum_{j=1}^N\left(-\frac{\Delta_{x_j}}{2}+V(x_j)\right)+\sum_{1\leq k\leq \ell\leq N}W(x_k-x_\ell)}
\label{def:Hamiltonian_W}
\end{equation}
on $L^2_{a/s}((\R^d)^N)$.
Since in practice $W$ is fixed (it is a characteristics of the studied particles) whereas $V$ is an external field that can be varied, we only emphasize $V$ in the notation of $H^V(N)$.
We choose any statistics (bosons or fermions) for our particles. The spectrum of $H^V(N)$ depends on this statistics but our results are stated the same in both cases. 

We assume that $W$ is even and that the two real functions $V$ and $W$ can both be written in the form $\sum_{i=1}^Kf_i$ with $f_i\in L^{p_i}(\R^d)$ where $\max(1,d/2)<p_i<\ii$ or $p_i=\ii$ but $f_i\to0$ at infinity.
These conditions ensure that $(1-\Delta)^{-1/2}V(1-\Delta)^{-1/2}$ and $(1-\Delta)^{-1/2}W(1-\Delta)^{-1/2}$ are compact operators. Then, by the KLMN Theorem \cite{ReeSim2}, $H^V(N)$ has a unique self-adjoint realization in the $N$-body space $L^2_{a/s}((\R^d)^N)$ with quadratic form domain $H^1_{a/s}((\R^d)^N)$. More precisely, for every $0<\epsilon<1$, there exists a constant $C=C(N,\epsilon)\geq0$ such that
\begin{equation}
(1-\epsilon)\left(\sum_{j=1}^N-\Delta_{x_j}\right)-C \leq H^V(N)\leq (1+\epsilon)\left(\sum_{j=1}^N-\Delta_{x_j}\right)+C 
\label{eq:bound_Hamiltonian}
\end{equation}
in the sense of quadratic forms on $L^2_{a/s}((\R^d)^N)$. In this section we will make the assumption that the interaction is repulsive, that is
$$\boxed{W\geq0.}$$
The general case is treated later in Section~\ref{sec:HVZ_general}.

\begin{example}[Atoms and molecules]\label{ex:atoms_molecules}
For atoms and molecules in which the electrons are treated as quantum particles whereas the nuclei are considered as fixed pointwise classical particles (Born-Oppenheimer approximation), we have in atomic units, on $L^2_a((\R^3)^N)$,
$$V(x)=-\sum_{m=1}^M\frac{z_m}{|x-R_m|}\qquad\text{and}\qquad W(x-y)=\frac{1}{|x-y|},$$
where $R_m$ and $z_m$ are the positions and charges of the nuclei. The functions $V$ and $W$ are respectively the Coulomb attraction potential induced by the nuclei, and the Coulomb repulsion between the electrons.\dqed
\end{example}

The second-quantization of $H^V(N)$ is the Fock Hamiltonian 
$$\bH^V:=0\oplus\bigoplus_{k\geq1}H^V(k)$$
which we restrict to the truncated Fock space $\cF^{\leq N}$.
The energy of the system in the state $\Gamma\in\cS\left(\cFN\right)$ reads, using \eqref{eq:2nd_qtz_Schrodinger} and the definition \eqref{eq:def_density_matrix} of the one- and two-body density matrices $[\Gamma]^{(1)}$ and $[\Gamma]^{(2)}$:
\begin{align}
\cE^V(\Gamma)&:=\tr_\cF\big(\bH^V\Gamma\big)\nonumber\\
&= \tr_{L^2(\R^d)}\left(\left(-\frac12\Delta+V\right)[\Gamma]^{(1)}\right)+\tr_{L^2_{a/s}(\R^d\times\R^d)}\left(W[\Gamma]^{(2)}\right).\label{eq:energy_wlsc}
\end{align}
By \eqref{eq:bound_Hamiltonian}, the energy is well-defined for states $\Gamma\in\cS(\cFN)$ such that 
\begin{align*}
\tr_\cF\left(\bT^{1/2}\Gamma\bT^{1/2}\right)&=\tr_\gH\left((-\Delta)^{1/2}\Gamma^{(1)}(-\Delta)^{1/2}\right)\\
&=\frac1{N-1}\tr_{\gH^2_{a/s}}\left((-\Delta_x+-\Delta_y)^{1/2}\Gamma^{(2)}(-\Delta_x-\Delta_y)^{1/2}\right)<\ii. 
\end{align*}
When the previous kinetic energy term is infinite, we can let $\cE^V(\Gamma):=+\ii$. 

One difficulty of many-body systems is the lack of weak lower semi-continuity (wlsc) of the quantum energy $\Psi\in\gH^N\mapsto\pscal{\Psi,H^V(N)\Psi}$. This was for instance pointed out by Friesecke, see Lemma 1.2 (iii) in \cite{Friesecke-03}. Indeed if we denote by 
\begin{equation}
\boxed{E^V(N):=\inf\sigma\left(H^V(N)\right),\qquad  \Sigma^V(N):=\inf\sigma_{\rm ess}\left(H^V(N)\right),}
\label{def:E_V_N}
\end{equation}
respectively the ground state energy and the bottom of the essential spectrum, we usually have that $\Sigma^V(N)<0$. This implies that for a singular Weyl sequence $\Psi_n\wto0$ it holds $\pscal{\Psi_n,H^V(N)\Psi_n}\to\Sigma^V(N)<0$, showing that the energy is not wlsc. 

We now prove that, on the contrary, when $W\geq0$ the energy is lower semi-continuous for the \emph{geometric convergence} which we have introduced in the previous section. 

\begin{lemma}[Lower semi-continuity of the energy under geometric convergence]\label{lem:wlsc}
Assume that $W\geq0$ and let $\{\Gamma_n\}$ be a sequence of states in $\cFN$ which converges geometrically to $\Gamma$. Then
$$\cE^V(\Gamma)\leq \liminf_{n\to\ii}\cE^V(\Gamma_n).$$
\end{lemma}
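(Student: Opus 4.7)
The plan is to assume $\liminf_{n\to\infty}\cE^V(\Gamma_n)<+\infty$ (otherwise the inequality is trivial), extract a subsequence along which $\cE^V(\Gamma_n)$ converges to this liminf, and treat separately the three natural pieces
$$\cE^V(\Gamma_n)=\tfrac12\tr\!\left((-\Delta)[\Gamma_n]^{(1)}\right)+\tr\!\left(V[\Gamma_n]^{(1)}\right)+\tr\!\left(W[\Gamma_n]^{(2)}\right).$$
The kinetic and interaction pieces will be lower semi-continuous by a Fatou-type trace inequality exploiting the non-negativity of $-\Delta$ and of $W$; the potential piece will even be continuous, thanks to the relative compactness of $V$ with respect to $-\Delta$.

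From the lower bound in \eqref{eq:bound_Hamiltonian} and the boundedness of $\cE^V(\Gamma_n)$ we obtain the uniform estimate $\tr((1-\Delta)[\Gamma_n]^{(1)})\leq C$. Hence $T_n:=(1-\Delta)^{1/2}[\Gamma_n]^{(1)}(1-\Delta)^{1/2}$ is bounded in $\gS_1(\gH)$, and up to a further subsequence it converges weakly-$\ast$ in $\gS_1$ to some $\widetilde T\geq0$. Testing against rank-one compact operators of the form $|(1-\Delta)^{1/2}\psi\rangle\langle(1-\Delta)^{1/2}\varphi|$ with $\varphi,\psi\in H^1(\R^d)$, and using the geometric convergence $[\Gamma_n]^{(1)}\wto_g[\Gamma]^{(1)}$ provided by Lemma~\ref{lem:prop_geom_CV}, one identifies $\widetilde T=(1-\Delta)^{1/2}[\Gamma]^{(1)}(1-\Delta)^{1/2}$ as a quadratic form on $H^1$.

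For the kinetic piece, fix an orthonormal basis $\{f_i\}$ of $\gH=L^2(\R^d)$ consisting of vectors in $H^1(\R^d)$, and expand
$$\tr\!\left((-\Delta)[\Gamma_n]^{(1)}\right)=\sum_{i\geq1}\langle\nabla f_i,[\Gamma_n]^{(1)}\nabla f_i\rangle.$$
Each summand is non-negative and, by testing the weak-$\ast$ convergence of $[\Gamma_n]^{(1)}$ against the rank-one compact operator $|\nabla f_i\rangle\langle\nabla f_i|$, converges to $\langle\nabla f_i,[\Gamma]^{(1)}\nabla f_i\rangle$. Fatou's lemma for series then yields $\tr((-\Delta)[\Gamma]^{(1)})\leq\liminf_n\tr((-\Delta)[\Gamma_n]^{(1)})$. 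The same scheme applies to the interaction: under our hypotheses $W$ is a non-negative multiplication operator that is locally integrable, so the form domain $\cD(W^{1/2})$ contains $C^\infty_c\cap\gH^2_{a/s}$ and is dense in $\gH^2_{a/s}$; picking an orthonormal basis $\{g_j\}$ of $\gH^2_{a/s}$ inside $\cD(W^{1/2})$ and writing
$$\tr\!\left(W[\Gamma_n]^{(2)}\right)=\sum_{j\geq1}\langle W^{1/2}g_j,[\Gamma_n]^{(2)}W^{1/2}g_j\rangle,$$
Fatou gives $\tr(W[\Gamma]^{(2)})\leq\liminf_n\tr(W[\Gamma_n]^{(2)})$.

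Finally, for the potential piece we use the compactness of $K_V:=(1-\Delta)^{-1/2}V(1-\Delta)^{-1/2}$ (guaranteed by our assumption on $V$). The identity
$$\tr\!\left(V[\Gamma_n]^{(1)}\right)=\tr(K_V\,T_n),$$
combined with the weak-$\ast$ convergence $T_n\wto_\ast\widetilde T$ in $\gS_1$ and the compactness of $K_V$, gives $\tr(V[\Gamma_n]^{(1)})\to\tr(K_V\widetilde T)=\tr(V[\Gamma]^{(1)})$. Adding the three resulting (in)equalities produces the claimed $\cE^V(\Gamma)\leq\liminf_n\cE^V(\Gamma_n)$. The main obstacle to handle carefully is the interplay between unbounded non-negative operators and weak-$\ast$ limits: one must select bases inside the form domains of $-\Delta$ and $W$, and the identification of $\widetilde T$ must be compatible with the geometric convergence of $[\Gamma_n]^{(1)}$, which is only given in the unweighted trace class.
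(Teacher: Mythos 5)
Your proof is correct, and for the kinetic and interaction terms it is the same argument as the paper's (the paper simply invokes ``Fatou's lemma for trace-class operators'' for $\tr((-\Delta)[\Gamma]^{(1)})$ and $\tr(W[\Gamma]^{(2)})$, which is exactly what your basis expansions in the form domains of $-\Delta$ and $W$ prove). Where you genuinely diverge is the potential term. The paper works at the level of the density: it uses the Hoffmann--Ostenhof inequality \eqref{eq:Hoffmann-Ostenhof} to get $\sqrt{\rho_{\Gamma_n}}$ bounded in $H^1(\R^d)$, then combines Rellich's local compactness with a H\"older--Sobolev tail estimate $\norm{V_j}_{L^{p_j}(\R^d\setminus B(0,R))}\to0$ to conclude $\int V\rho_{\Gamma_n}\to\int V\rho_\Gamma$. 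You instead work at the operator level, writing $\tr(V[\Gamma_n]^{(1)})=\tr\bigl(K_V\,(1-\Delta)^{1/2}[\Gamma_n]^{(1)}(1-\Delta)^{1/2}\bigr)$ and pairing the weak--$\ast$ convergent sequence $T_n$ against the fixed compact operator $K_V=(1-\Delta)^{-1/2}V(1-\Delta)^{-1/2}$ (whose compactness the paper has already recorded as a consequence of the hypotheses on $V$). Both routes are valid; yours is shorter and uses only the abstract form-compactness of $V$ relative to $-\Delta$, whereas the paper's density-based route introduces tools ($\sqrt{\rho}$ bounded in $H^1$, strong local convergence of densities) that it reuses repeatedly later (in the HVZ localization estimates and the vanishing arguments of Sections~\ref{sec:HF_MCSCF} and~\ref{sec:nonlinear}). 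The one point in your argument that deserves the care you already flag is the identification $\widetilde T=(1-\Delta)^{1/2}[\Gamma]^{(1)}(1-\Delta)^{1/2}$: to move the resolvent powers onto the test vectors it is cleanest to test against $|(1-\Delta)^{1/2}\psi\rangle\langle(1-\Delta)^{1/2}\varphi|$ with $\varphi,\psi\in H^2(\R^d)$, which still produces a dense family of rank-one operators and reduces the matrix elements of $T_n$ to $\pscal{(1-\Delta)\varphi,[\Gamma_n]^{(1)}(1-\Delta)\psi}$, to which the geometric (weak--$\ast$) convergence of $[\Gamma_n]^{(1)}$ applies directly; with that adjustment the argument is complete.
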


\begin{proof}
Under our assumptions on $V$ and $W$, it is easily verified that $\cE^V$ is lower semi-continuous for the \emph{strong} topology of $\gS_1(\cFN)$. We have to prove that the same holds for the geometric topology.

When the kinetic energy of $\{\Gamma_n\}$ is not bounded, there is nothing to show by \eqref{eq:bound_Hamiltonian}, hence we may as well assume that 
$$\tr_{\cF}\left(\bT^{1/2}\,\Gamma_n\,\bT^{1/2}\right)\leq C$$
for a constant $C$ independent of $n$ (this is actually equivalent to assuming that each $p$-body density matrix has a bounded kinetic energy). Since we have by assumption $[\Gamma_n]^{(p)}\wto [\Gamma]^{(p)}$ weakly--$\ast$ in $\gS_1$, we deduce that the geometric limit $\Gamma$ has a finite kinetic energy, hence a finite total energy.
We now remark that
$$\cE^V(\Gamma_n)=\frac12\tr_{L^2(\R^d)}\left((-\Delta)[\Gamma_n]^{(1)}\right)+\int_{\R^d}V\rho_{\Gamma_n}+\tr_{L^2_{a/s}(\R^d\times\R^d)}\left(W[\Gamma_n]^{(2)}\right).$$
where $\rho_{\Gamma_n}(x)=[\Gamma_n]^{(1)}(x,x)$ is the density of the system.
It is then a classical fact that
$$\tr_{L^2(\R^d)}\left((-\Delta)[\Gamma]^{(1)}\right)\leq\liminf_{n\to\ii}\tr_{L^2(\R^d)}\left((-\Delta)[\Gamma_n]^{(1)}\right)$$
$$\tr_{L^2_{a/s}(\R^d\times\R^d)}\left(W[\Gamma]^{(2)}\right)\leq\liminf_{n\to\ii}\tr_{L^2_{a/s}(\R^d\times\R^d)}\left(W[\Gamma_n]^{(2)}\right)$$
\begin{equation}
\int_{\R^d}V\rho_{[\Gamma]^{(1)}}=\lim_{n\to\ii}\int_{\R^d}V\rho_{[\Gamma_n]^{(1)}}.
\label{eq:continuity_potential_energy} 
\end{equation}
The first two claims follow from Fatou's Lemma for trace-class operators \cite{Simon-77} (using $W\geq0$).
The last claim \eqref{eq:continuity_potential_energy} is shown as follows. First the Hoffmann-Ostenhof inequality \cite{Hof-77}
\begin{equation}
\int_{\R^d}\left|\nabla\sqrt{\rho_{\Gamma}}\right|^2\leq \tr_{L^2(\R^d)}\left((-\Delta)[\Gamma]^{(1)}\right),
\label{eq:Hoffmann-Ostenhof} 
\end{equation}
implies that $\sqrt{\rho_{\Gamma_n}}$ is bounded in $H^1(\R^d)$, hence we may as well assume that $\sqrt{\rho_{\Gamma_n}}\to\sqrt{\rho_\Gamma}$ weakly in $H^1(\R^d)$ and strongly in $L^2_{\rm loc}(\R^d)$.
Recall that $V=\sum_{j=1}^KV_j$ with $V_j\in L^{p_j}(\R^d)$ where $\max(d/2,1)<p_j<\ii$ or $V_j\in L^\ii(\R^d)$ and $V_j\to0$ at infinity. For $d\ge3$, we write 
\begin{equation}
\left|\int_{|x|\geq R}V_j(x)\,\rho_{\Gamma_n}(x)\,dx\right|\leq \norm{V_j}_{L^{p_j}(\R^d\setminus B(0,R))}\norm{\rho_{\Gamma_n}}_{L^{q_j}(\R^d)}\leq C\norm{V_j}_{L^{p_j}(\R^d\setminus B(0,R))}
\label{eq:estim_potential_energy} 
\end{equation}
where $1/p_j+1/q_j=1$, hence $1\leq q_j<d/(d-2)$. In the last inequality we have used the Sobolev injection theorem as well as the fact that $\sqrt{\rho_{\Gamma_n}}$ is bounded in $H^1(\R^d)$. On the other hand, by Rellich's theorem, we have a compact injection $H^1(B(0,R))\hookrightarrow L^q(B(0,R))$ for all $2\leq q<2d/(d-2)$ which implies that
$$\lim_{n\to\ii}\int_{|x|\leq R}V_j(x)\,\rho_{\Gamma_n}(x)\,dx=\int_{|x|\leq R}V_j(x)\,\rho_{\Gamma}(x)\,dx.$$
Together with \eqref{eq:estim_potential_energy}, this proves \eqref{eq:continuity_potential_energy}. The proof is the same in dimensions 1 and 2.
\end{proof}

The following is a famous result for many-body systems:
\begin{theorem}[HVZ in the lower semi-continuous case]\label{thm:HVZ_wlsc}
Assume $W\geq0$. Then it holds $E^0(N)=0$ for all $N\geq0$ and
\begin{equation}
\boxed{\Sigma^V(N)=E^V(N-1).}
\label{eq:HVZ_wlsc}
\end{equation}
In particular, $E^V(N)$ is an isolated eigenvalue if and only if 
$$E^V(N)<E^V(N-1)=\min\{E^V(N-k)+E^0(k),\ k=1,...,N\}.$$
\end{theorem}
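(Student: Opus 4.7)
The plan is to split the result into three independent claims: (i) $E^0(N)=0$ for all $N$ together with the monotonicity $E^V(N) \leq E^V(N-1)$; (ii) the upper bound $\Sigma^V(N) \leq E^V(N-1)$ via a standard Weyl sequence; and (iii) the matching lower bound $\Sigma^V(N) \geq E^V(N-1)$, which is where the newly introduced geometric topology will be used. The final statement on isolated eigenvalues will then follow at once, since $E^V(N)$ is isolated iff $E^V(N)<\inf\sigma_{\rm ess}(H^V(N))$, and the identity $E^V(N-1)=\min_{k=1,\dots,N}\{E^V(N-k)+E^0(k)\}$ combines $E^0\equiv 0$ with the monotonicity from (i).

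For (i), nonnegativity of $-\Delta$ and of $W$ gives $H^0(N)\geq 0$, hence $E^0(N)\geq 0$. For the matching upper bound, I would take $\phi\in C_c^\infty(\R^d)$ with $\|\phi\|=1$ and, for $R$ large, consider the well-separated spread-out packets $\phi_R^{(j)}(x):=R^{-d/2}\phi\bigl((x-jR^2 e)/R\bigr)$, $j=1,\dots,N$. Their antisymmetric (or symmetric) product $\Phi_R$ is normalized for $R$ large (disjoint supports), has kinetic energy $O(NR^{-2})$, and interaction energy bounded by $\sup_{|x|\geq R^2}W(x)\to 0$; hence $E^0(N)\leq\pscal{\Phi_R,H^0(N)\Phi_R}\to 0$. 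The same construction, applied to a compactly supported approximate ground state $\Psi_\epsilon$ of $H^V(N-1)$ wedged with one far-off $\phi_R$, yields $E^V(N)\leq E^V(N-1)+\epsilon$ (using $V\to 0$ at infinity to kill cross terms); letting $\epsilon\downarrow 0$ gives the monotonicity.

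For (ii), fix $\epsilon>0$, take $\Psi_\epsilon$ an approximate ground state of $H^V(N-1)$ of compact support with energy $\leq E^V(N-1)+\epsilon$, and form $\Psi_n:=\Psi_\epsilon\wedge\phi_n$ (or $\vee$) with $\phi_n$ a spread-out packet translated to infinity as in (i). Then $\|\Psi_n\|=1$ for $n$ large, $\Psi_n\wto 0$ in $\gH^N$ (since $\phi_n\wto 0$), and energy decoupling gives $\pscal{\Psi_n,H^V(N)\Psi_n}\to\pscal{\Psi_\epsilon,H^V(N-1)\Psi_\epsilon}\leq E^V(N-1)+\epsilon$. A standard finite-rank spectral-projection argument applied to $H^V(N)$ (bounded below) then produces $\Sigma^V(N)\leq E^V(N-1)+\epsilon$, and $\epsilon\downarrow 0$ concludes.

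For (iii), take any Weyl sequence $\Psi_n\in\gH^N$, $\|\Psi_n\|=1$, $\Psi_n\wto 0$, with $\pscal{\Psi_n,H^V(N)\Psi_n}\to\Sigma^V(N)$; by $W\geq 0$ and~\eqref{eq:bound_Hamiltonian} the kinetic energies are uniformly bounded. Set $\Gamma_n:=0\oplus\cdots\oplus|\Psi_n\rangle\langle\Psi_n|\in\cS(\cFN)$; by Lemma~\ref{lem:compact} a subsequence satisfies $\Gamma_n\gto\Gamma$. The weak convergence $\Psi_n\wto 0$ implies $|\Psi_n\rangle\langle\Psi_n|\wto 0$ weakly-$\ast$ in $\gS_1(\gH^N)$ (for compact $K$, $\tr(|\Psi_n\rangle\langle\Psi_n|K)=\pscal{\Psi_n,K\Psi_n}\to 0$), hence $[\Gamma]^{(N)}=0$. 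Since each $\Gamma_n$ commutes with $\cN$, so does $\Gamma$ (Remark~\ref{rmk:commute}), giving $\Gamma=\bigoplus_{k=0}^{N-1}G_{kk}\in\cS(\cF^{\leq N-1})$. Lemma~\ref{lem:wlsc} then yields
\begin{equation*}
\sum_{k=0}^{N-1}\tr_{\gH^k}\bigl(H^V(k)G_{kk}\bigr)=\cE^V(\Gamma)\leq\Sigma^V(N),
\end{equation*}
and writing $t_k:=\tr(G_{kk})\geq 0$ with $\sum_k t_k=1$, the chain $\tr(H^V(k)G_{kk})\geq t_k E^V(k)\geq t_k E^V(N-1)$ (from the monotonicity in (i)) forces $E^V(N-1)\leq\Sigma^V(N)$. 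The hard part is really this last step: what is delicate is identifying the geometric limit $\Gamma$ as genuinely living in $\cF^{\leq N-1}$ rather than a weak-$\ast$ ghost that has lost all its mass, and Lemmas~\ref{lem:compact}--\ref{lem:wlsc} are precisely what makes this painless.
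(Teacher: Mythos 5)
Your proof is correct and follows essentially the same route as the paper: the heart of the argument (part (iii)) is exactly the paper's — geometric compactness from Lemma~\ref{lem:compact}, identification of $G_{NN}=0$ from $\Psi_n\wto 0$, lower semi-continuity via Lemma~\ref{lem:wlsc}, and the convex-combination bound $\sum_k\tr(H^V(k)G_{kk})\geq E^V(N-1)$ — while your parts (i)--(ii) just make explicit the classical trial-state constructions the paper leaves to the reader (the paper instead deduces $E^0(N)=0$ and the monotonicity of $N\mapsto E^V(N)$ from the chain $E^V(N)\leq\Sigma^V(N)\leq E^V(N-1)$). The only cosmetic imprecision is bounding the cross interaction by $\sup_{|x|\geq R^2}W(x)$, which should be replaced by a H\"older estimate on $\|W_i\1_{|x|\geq R^2/2}\|_{L^{p_i}}$ since $W$ need not decay pointwise under the stated assumptions.
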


\begin{remark}
A similar result holds true if the system contains several kinds of particles (with possibly different interaction potentials), with or without internal degrees of freedom.\dqed
\end{remark}

Theorem~\ref{thm:HVZ_wlsc} is due to Zhislin \cite{Zhislin-60}, Van Winter \cite{VanWinter-64} and Hunziker \cite{Hun-66}.  Simpler proofs were provided later when the so-called geometric methods were developed  \cite{Enss-77,Simon-77,Sigal-82,CycFroKirSim-87}.
The interpretation of \eqref{eq:HVZ_wlsc} is that in order to reach the bottom of the essential spectrum, we have to provide a sufficiently large amount of energy to the system in order to extract at least one particle. The case of a general interaction $W$ is treated later in Section~\ref{sec:HVZ_general}.

Theorem~\ref{thm:HVZ_wlsc} is essential when proving existence of ground and excited states. The bottom of the spectrum $E^V(N)$ is an isolated eigenvalue if and only if the HVZ inequality $E^V(N)<E^V(N-1)$ holds. Such an inequality can be proved by induction on $N$: admitting that $E^V(N-1)<E^V(N-2)$, there is a ground state for $E^V(N-1)$ and one can use this state to construct an $N$-body test state to prove that $E^V(N)<E^V(N-1)$. 

For atoms and molecules (Example~\ref{ex:atoms_molecules}), Zhislin and Sigalov \cite{Zhislin-60,ZhiSig-65} have shown that there is a ground state as well as infinitely many excited states as soon as $N<Z+1$ where $Z=\sum_{m=1}^Mz_m$ is the total nuclear charge. The idea is that, with $N-1$ electrons bound to the nuclei, any additional electron escaping to infinity sees a Coulomb interaction induced by a total charge $Z-(N-1)>0$. This potential is attractive at large distances and the desired inequality $E^V(N)<E^V(N-1)$ follows.

We now turn to the proof of Theorem~\ref{thm:HVZ_wlsc}.

\begin{proof}
The bound $\Sigma^V(N)\leq E^V(N-1)$ is shown by building a convenient singular Weyl sequence, using a Weyl sequence for $E^V(N-1)$. We do not elaborate more on this classical fact and we only explain the proof of the more complicated inequality $\Sigma^V(N)\geq E^V(N-1)$.

First we note that since $E^V(N)\leq \Sigma^V(N)\leq E^V(N-1)$, the map $N\mapsto E^V(N)$ is non-increasing. When $V=0$, $E^0(N)\geq0$ since $W\geq0$, hence $E^0(N)=0$ for all $N$. 

Let now $\{\Psi_n\}\subset\gH^N$ be a singular Weyl sequence for $\Sigma^V(N)$, that is such that $(H^V(N)-\Sigma^V(N))\Psi_n\to0$, $\norm{\Psi_n}=1$ and $\Psi_n\wto0$ weakly in $L^2((\R^d)^N)$. The corresponding pure state on $\cFN$ is $\Gamma_n:=0\oplus\cdots0\oplus|\Psi_n\rangle\langle\Psi_n|$ and it has a bounded energy, $\lim_{n\to\ii}\cE^V(\Gamma_n)=\Sigma^V(N)$, hence a bounded kinetic energy by \eqref{eq:bound_Hamiltonian}.
Extracting a subsequence if necessary, we may assume by Lemma~\ref{lem:compact} that $\Gamma_n\gto\Gamma$ geometrically. We write as usual
$$\Gamma=\left(\begin{matrix}
G_{00} & &0\\
 & \ddots & \\
0 & & G_{NN}
\end{matrix}\right).$$
Recall that $G_{NN}=[G_{NN}]^{(N)}$ is the weak--$\ast$ limit of $|\Psi_n\rangle\langle\Psi_n|$, hence $G_ {NN}=0$ since $\Psi_n\wto0$. By Lemma~\ref{lem:wlsc} we have
\begin{align}
\Sigma^V(N)=\lim_{n\to\ii}\cE^V(\Gamma_n)\geq \cE^V(\Gamma)&=\sum_{j=0}^{N-1}\tr_{\gH^j}(H^V(j)G_{jj})\nonumber\\
&\geq \sum_{j=0}^{N-1}E^V(j)\,\tr_{\gH^j}(G_{jj})\geq E^V(N-1).\label{eq:estim_HVZ_wlsc_proof} 
\end{align}
In the second line we have used that $G_{jj}\geq0$ and that $\sum_{j=0}^{N-1}\tr_{\gH^j}(G_{jj})=1$ since $\Gamma$ is a state.
\end{proof}

\begin{remark}
Let $\{\Psi_n\}$ be a singular Weyl sequence for the bottom $\Sigma^V(N)$ of the essential spectrum of $H^V(N)$, like in the proof of Theorem~\ref{thm:HVZ_wlsc}. Then, if $E^V(N-1)<\Sigma^V(N-1)=E^V(N-2)$, it can be seen from  \eqref{eq:estim_HVZ_wlsc_proof} that its geometric limit $\Gamma$ is a ground state of $H^V(N-1)$ in $L^2_{a/s}((\R^d)^{N-1})$.\dqed
\end{remark}

\section{Geometric localization}\label{sec:geom_loc}

Localization is a fundamental concept of many-body quantum mechanics. In the seminal works of the end of the seventies \cite{Enss-77,DeiSim-77,Simon-77,MorSim-80,Sigal-82,HunSig-00}, the expression `geometric methods' was used to denote the use of appropriate partitions of unity in configuration space.
In this section we explain how one can lift a localization in the one-body space $\gH^1$ to the truncated Fock space $\cFN$, following Derezi\'nski and Gérard  \cite{DerGer-99}, and we relate this tool to the geometric topology defined in the previous section.

\subsection{Definition and properties}
\subsubsection{Definition}
Here we explain how to localize a state $\Gamma\in\cS(\cFN)$. As already suggested in the introduction, the localization of a pure one-body state $\phi\in L^2(\R^d)$ in a domain $D\subset\R^d$ should be described by the state 
\begin{equation}
\Gamma_\chi=\left(1-\int_{\R^d}|\chi\phi|^2\right)\,\oplus|\chi\phi\rangle\langle\chi\phi|
\label{eq:localized_one_body} 
\end{equation}
where $\chi=\1_D$.
Note that the previous formula actually defines a state for every normalized $\phi\in L^2(\R^d)$ and every function $\chi$ such that $0\leq|\chi|^2\leq 1$. This discussion suggests the following definition of localized states.

\begin{proposition}[Definition of localized states]\label{def:localization}
Let $B\in\cB(\gH)$ be a bounded operator on $\gH$, such that $0\leq BB^*\leq 1$, and $\Gamma\in\cS(\cFN)$ be any state on $\cFN$. Then there exists a unique state $\Gamma_B\in\cS(\cFN)$ such that
\begin{equation}
[\Gamma_B]^{(p,q)}=\underbrace{B\otimes\cdots\otimes  B}_{\text{$p$}}\;[\Gamma]^{(p,q)}\;\underbrace{B^*\otimes\cdots\otimes B^*}_{\text{$q$}}
\label{eq:def_localization_DM}
\end{equation}
for all $0\leq p,q\leq N$. The state $\Gamma_B$ is called the \emph{$B$-localization of $\Gamma$}.
\end{proposition}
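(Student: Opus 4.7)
My plan is to dispense with uniqueness using Lemma~\ref{lem:density_matrices} and then to construct $\Gamma_B$ by the Derezi\'nski--G\'erard doubling trick \cite{DerGer-99}, which will make non-negativity automatic. Uniqueness is immediate: by Lemma~\ref{lem:density_matrices}(ii), a state on $\cFN$ is fully determined by its family of density matrices $[\cdot]^{(p,q)}$, so the prescription~\eqref{eq:def_localization_DM} pins $\Gamma_B$ down at most once. All the work is therefore in existence.

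For the construction, the hypothesis $0\leq BB^*\leq \1$ gives $\|B\|\leq 1$, hence also $B^*B\leq \1$, so I may set $\bar{B}:=\sqrt{\1-B^*B}\in\cB(\gH)$ and define $V:\gH\to\gH\oplus\gH$ by $V\phi:=(B\phi,\bar B\phi)$. Then $V^*V=B^*B+\bar B^2=\1$, so $V$ is an isometry. Its second quantization is an isometry from $\cFN$ into $\cFN(\gH\oplus\gH)$, and composing with the canonical exponential-law identification $\cFN(\gH\oplus\gH)\hookrightarrow \cFN\otimes\cFN$ (for the chosen statistics) yields an isometry $W:\cFN\to\cFN\otimes\cFN$. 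I then set
$$\Gamma_B:=\tr_{2}\!\bigl(W\Gamma W^*\bigr)\in\gS_1(\cFN).$$
Since $W\Gamma W^*$ is non-negative and trace class with total trace equal to $\tr(\Gamma)=1$ (as $W$ is isometric), its partial trace $\Gamma_B$ is automatically non-negative with $\tr_{\cFN}(\Gamma_B)=1$, i.e.\ a state on $\cFN$.

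What remains is the identity~\eqref{eq:def_localization_DM}. This is a consequence of the covariance of the doubling map: pulled back through $W$, each creation operator $a^\dagger(\phi)$ on $\gH$ becomes $a^\dagger(B\phi)\otimes\1$ plus a parity-corrected second-factor term involving $a^\dagger(\bar B\phi)$. Inserting this in the defining relation~\eqref{eq:def_density_matrix} of $[\Gamma_B]^{(p,q)}$ and taking the partial trace over the second tensor factor kills every term containing a creation or annihilation operator acting on that factor, leaving exactly
$$\tr_{\cF}\!\bigl(\Gamma\,a^\dagger(Bf_1)\cdots a^\dagger(Bf_q)\,a(Bg_p)\cdots a(Bg_1)\bigr),$$
which is $\pscal{g_1\circ\cdots\circ g_p,\,B^{\otimes p}[\Gamma]^{(p,q)}(B^*)^{\otimes q}\,f_1\circ\cdots\circ f_q}$ as desired.

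The main obstacle is this last bookkeeping step, since the identification $\cF(\gH\oplus\gH)\simeq\cF(\gH)\otimes\cF(\gH)$ introduces fermionic signs (via the parity operator $(-1)^{\cN}$) whose cancellation in the cross terms has to be handled carefully. A safe shortcut is to verify~\eqref{eq:def_localization_DM} first on the dense subset of ``monomial'' pure states $|f_1\circ\cdots\circ f_n\rangle\langle g_1\circ\cdots\circ g_n|$, where both sides reduce to explicit (anti)symmetrized products of the matrix elements $\pscal{f_i,Bf_j}$ and $\pscal{f_i,\bar Bf_j}$, and then to extend to arbitrary $\Gamma\in\cS(\cFN)$ by linearity together with the trace-norm continuity of $\Gamma\mapsto[\Gamma]^{(p,q)}$ from Lemma~\ref{lem:density_matrices}(i).
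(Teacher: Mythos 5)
Your argument is correct and follows essentially the same route as the paper: uniqueness from Lemma~\ref{lem:density_matrices}, and existence via the Derezi\'nski--G\'erard isometry $\gH\to\gH\oplus\gH$, the factorization $\cF(\gH_1\oplus\gH_2)\simeq\cF(\gH_1)\otimes\cF(\gH_2)$, and a partial trace over the second factor, which is exactly the construction the paper invokes by citing \cite{DerGer-99,Ammari-04,HaiLewSol_2-09}. As a minor point in your favour, your choice $\bar B=\sqrt{1-B^*B}$ is the one that makes $V^*V=1$ for a general (non-normal) $B$, whereas the paper's displayed map uses $\sqrt{1-BB^*}$.
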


Note that in general the localized state $\Gamma_B$ is not a pure state, even when $\Gamma$ is itself a pure state.
The concept of localization of states in Fock space was first introduced for bosons by Derezi\'nski and Gérard in \cite{DerGer-99} and generalized to fermions by Ammari in \cite{Ammari-04}. It is now a classical tool in Quantum Field Theory. It was recently used by Hainzl, Solovej and the author of the present paper, to prove the existence of the thermodynamic limit for quantum Coulomb systems in the grand canonical picture, see Appendix A.1 in \cite{HaiLewSol_2-09}. In this latter work, the strong subadditivity of the quantum entropy was also formulated using geometric localization. Although expressed in different terms, the definition of $\Gamma_B$ in Proposition~\ref{def:localization} coincides with that of all these previous works.

We now turn to the proof of Proposition~\ref{def:localization}.

\begin{proof}
A state satisfying \eqref{eq:def_localization_DM} was constructed in \cite{DerGer-99,Ammari-04,HaiLewSol_2-09}, using the partial isometry $f\in\gH\mapsto Bf\oplus\sqrt{1-BB^*}f\in\gH\oplus\gH$ and the fact that $\cF(\gH_1\oplus\gH_2)\simeq\cF(\gH_1)\otimes\cF(\gH_2)$. The state $\Gamma_B$ is obtained by means of a partial trace with respect to the second Hilbert space.
Uniqueness then follows from Lemma~\ref{lem:density_matrices}.
\end{proof}

\begin{remark}
The matrix components $\{G_{mn}^B\}_{m,n=0}^N$ of the operator $\Gamma_B$ can be expressed using Equation~\eqref{eq:relation_DM_G} as follows
\begin{equation}
G_{mn}^B=B^{\otimes m}[\Gamma]^{(m,n)}(B^*)^{\otimes n}+\sum_{j=1}^{\min(N-m,N-n)}(-1)^{j} \left[ B^{\otimes (m+j)}[\Gamma]^{(m+j,n+j)}(B^*)^{\otimes (n+j)}\right]^{(m,n)}.
\label{eq:relation_DM_localized}
\end{equation}
The verification that the so-obtained operator is a state ($\Gamma_B=(\Gamma_B)^*\geq0$) uses the CCR/CAR algebra $\cA$ in a similar way as in the proof of Lemma~\ref{lem:compact}.\dqed
\end{remark}

\begin{remark}\label{rmk:localized_AB}
If $B_1$ and $B_2$ are such that $0\leq B_kB_k^*\leq1$, then $(B_2B_1)(B_2B_1)^*=B_2B_1B_1^*B_2^*\leq B_2B_2^*\leq1$. It is then clear from the definition that $(\Gamma_{B_1})_{B_2}=\Gamma_{B_2B_1}$.\dqed
\end{remark}

We now illustrate Propostion~\ref{def:localization} by several examples of localized states. 

\begin{example}
We have for all state $\Gamma_1=\Gamma$ and $\Gamma_0=|\Omega\rangle\langle\Omega|$ (the vacuum state), corresponding to having, respectively, $B=1$ and $B=0$. If $\phi\in\gH^1$ and $\Gamma=0\oplus|\phi\rangle\langle\phi|$, then
$\Gamma_B=(1-\norm{B\phi}^2)\,\oplus|B\phi\rangle\langle B\phi|$,
as in \eqref{eq:localized_one_body}.
\dqed 
\end{example}

\begin{example}
If $U$ is a unitary operator on $\gH^1$, then 
$(\Gamma)_U=(1\oplus U\oplus\cdots \oplus U^{\otimes N})\,\Gamma\,(1\oplus U^*\oplus\cdots \oplus (U^*)^{\otimes N}).$
\dqed 
\end{example}

\begin{example}[Localization of $N$-body states]\label{ex:localization_N_body}
Let $G\in\cS(\gH^N)$ be an $N$-body state and $\Gamma=0\oplus\cdots\oplus G\in\cS(\cFN)$. A simple calculation based on~\eqref{eq:relation_DM_localized} shows that 
$$\Gamma_B=G_{0}^B \oplus\cdots \oplus G_{N}^B$$
where
\begin{equation}
G_{k}^B={N\choose k}\;\tr_{k+1\to N}\left(B^{\otimes k}\otimes\sqrt{1-BB^*}^{\otimes (N-k)}\,G\,(B^*)^{\otimes k}\otimes\sqrt{1-BB^*}^{\otimes (N-k)}\right)
\label{eq:localization_N_body} 
\end{equation}
with $\tr_{k+1\to N}$ denoting the partial trace with respect to the $N-k+1$ last variables. More explicitely, if $G=|\Psi\rangle\langle\Psi|$ and $0\leq \chi(x)\leq1$, then
\begin{multline}
G^\chi_{k}(x_1,...,x_k;x'_1,...,x'_k)={N\choose k}\prod_{j=1}^k\chi(x_j)\chi(x'_j)\,\int\cdots\int\prod_{j=k+1}^N\left(1-\chi^2(z_j)\right)\times\\
\times{\Psi(x_1,...,x_k,z_{k+1},...,z_N)}\overline{\Psi(x'_1,...,x'_k,z_{k+1},...,z_N)} \,dz_{k+1}\cdots dz_N.\label{eq:localization_N_body_fn}
\end{multline}

We see from \eqref{eq:localization_N_body} that it holds
\begin{equation}
\boxed{\tr_{\gH^k}\left(G^B_k\right)=\tr_{\gH^{N-k}}\left(G^{\sqrt{1-BB^*}}_{N-k}\right)}
\label{eq:relation_trace_localized_N_body_fn}
\end{equation} 
The relation \eqref{eq:relation_trace_localized_N_body_fn} will play a very important role later and it may be considered as one of the basic tools of the geometric methods for many-body systems. For $B=\1_D(x)$, it essentially means that the `weight' in the $k$-particle sector of the localized state in a domain $D$ is equal to that in the $(N-k)$-particle sector outside $D$.\dqed
\end{example}

\begin{example}[Hartree states]\label{ex:Hartree_loc}
Let $\Gamma=0\oplus\cdots\oplus|\phi^{\otimes N}\rangle\langle\phi^{\otimes N}|\in\cF^{\leq N}_s$ be a Hartree state as in Example~\ref{ex:Hartree}. Then
$$\Gamma_B= \bigoplus_{k=0}^N{N\choose k}\left(1-\norm{B\phi}_\gH^2\right)^{N-k}\big|(B\phi)^{\otimes k}\big\rangle\big\langle(B\phi)^{\otimes k}\big|.$$
\dqed
\end{example}

\begin{example}[Coherent and Hartree-Fock-Bogoliubov states]
If $\Gamma_f$ is a coherent state like in Example~\ref{ex:coherent}, then $(\Gamma_f)_B=\Gamma_{Bf}$.
If $\Gamma$ is a Hartree-Fock-Bogoliubov state like in Example~\ref{ex:HFB}, with one-body density matrix $[\Gamma]^{(1)}$ and pairing density matrix $[\Gamma]^{(2,0)}$, then $\Gamma_B$ is the unique Hartree-Fock-Bogoliubov state having $B[\Gamma]^{(1)}B^*$ and $(B\otimes B)\,[\Gamma]^{(2,0)}$ as one-body and pairing density matrices. In Example~\ref{ex:localization_HF} below we detail the case of pure Hartree-Fock states.\dqed
\end{example}

\subsubsection{Convergence results}
Let us now turn to some useful applications of geometric localization. We start by showing that the localization map $\Gamma\mapsto \Gamma_B$ is continuous with respect to the geometric topology.

\begin{lemma}[Continuity of geometric localization]\label{lem:continuity}
Let $\{\Gamma_n\}$ be a sequence of states in $\cS(\cFN)$ which converges geometrically to a state $\Gamma\in\cS(\cFN)$, $\Gamma_n\wto_g\Gamma$. Let $B\in\cB(\gH^1)$ be such that $0\leq BB^*\leq 1$. Then the associated sequence of localized states converges geometrically: $(\Gamma_n)_B\wto_g\Gamma_B$.

Similarly, if $B_n$ is a sequence satisfying $0\leq B_n(B_n)^*\leq 1$, $B_n\to B$ and $(B_n)^*\to B^*$ strongly (that is $B_nx\to Bx$ and $B_n^*x\to B^*x$ strongly in $\gH^1$ for any fixed $x\in\gH^1$), then it holds $(\Gamma_n)_{B_n}\wto_g\Gamma_B$.
\end{lemma}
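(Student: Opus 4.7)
The plan is to reduce both assertions to weak-$\ast$ convergence of density matrices and then to a norm-convergence statement for certain sandwich operators. By Lemma~\ref{lem:prop_geom_CV}, geometric convergence $(\Gamma_n)_{B_n}\wto_g \Gamma_B$ is equivalent to
$$[(\Gamma_n)_{B_n}]^{(p,q)}\;\wto\;[\Gamma_B]^{(p,q)}\qquad\text{weakly-$\ast$ in }\gS_1(\gH^q,\gH^p)$$
for every $0\le p,q\le N$ (and similarly when $B_n\equiv B$). The defining relation~\eqref{eq:def_localization_DM} of Proposition~\ref{def:localization} expresses both sides explicitly as $B_n^{\otimes p}[\Gamma_n]^{(p,q)}(B_n^*)^{\otimes q}$ and $B^{\otimes p}[\Gamma]^{(p,q)}(B^*)^{\otimes q}$, and the uniform trace-norm bound~\eqref{eq:estim_DM} makes everything well defined. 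Pairing with an arbitrary compact $K\in\cK(\gH^p,\gH^q)$ and using cyclicity of the trace, everything amounts to showing that $\tr\bigl((B_n^*)^{\otimes q}K B_n^{\otimes p}\,[\Gamma_n]^{(p,q)}\bigr)$ converges to $\tr\bigl((B^*)^{\otimes q}K B^{\otimes p}\,[\Gamma]^{(p,q)}\bigr)$.

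When $B_n\equiv B$, the operator $(B^*)^{\otimes q}K B^{\otimes p}$ is still compact, and the conclusion is immediate from the geometric convergence of $\{\Gamma_n\}$. This disposes of the first assertion. For the second, set $M_n:=(B_n^*)^{\otimes q}K B_n^{\otimes p}$ and $M:=(B^*)^{\otimes q}K B^{\otimes p}$, and split
$$\tr(M_n[\Gamma_n]^{(p,q)})-\tr(M[\Gamma]^{(p,q)})=\tr\bigl((M_n-M)[\Gamma_n]^{(p,q)}\bigr)+\tr\bigl(M([\Gamma_n]^{(p,q)}-[\Gamma]^{(p,q)})\bigr).$$
The second term vanishes by the first part since $M$ is compact; the first term is bounded by $\|M_n-M\|_{\rm op}$ times the uniform trace bound on $[\Gamma_n]^{(p,q)}$, so the whole question reduces to proving $\|M_n-M\|_{\rm op}\to 0$.

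This operator-norm convergence is the step I expect to be the main obstacle, because strong convergence alone does not suffice for norm convergence under composition with a compact operator on both sides. The \emph{key fact} I plan to invoke is that if a uniformly bounded family $S_n$ converges strongly to $S$ and $K$ is compact, then $S_n K\to SK$ in operator norm, because strong convergence is uniform on the precompact set $\overline{K(\text{unit ball})}$. The symmetric statement $KS_n\to KS$ in norm is \emph{false} in general under strong convergence alone, but it follows by taking adjoints when additionally $S_n^*\to S^*$ strongly, which is precisely the $\ast$-strong hypothesis made on $B_n$. Decomposing
$$M_n-M=\bigl[(B_n^*)^{\otimes q}-(B^*)^{\otimes q}\bigr]\,K\,B_n^{\otimes p}+(B^*)^{\otimes q}\,K\,\bigl[B_n^{\otimes p}-B^{\otimes p}\bigr],$$
the first summand is of the form (strongly-vanishing)$\cdot$(compact)$\cdot$(bounded) and tends to zero in norm by the key fact; the second summand, after taking adjoints inside the norm, becomes (strongly-vanishing)$\cdot$(compact) and vanishes for the same reason. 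Uniform boundedness transfers the $\ast$-strong convergence of $B_n$ to the tensor powers $B_n^{\otimes p}\to B^{\otimes p}$ and $(B_n^*)^{\otimes q}\to (B^*)^{\otimes q}$, closing the argument.
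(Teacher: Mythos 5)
Your proof is correct and follows essentially the same route as the paper: reduce geometric convergence to weak--$\ast$ convergence of the density matrices via Lemma~\ref{lem:prop_geom_CV}, use $[(\Gamma_n)_B]^{(p,q)}=B^{\otimes p}[\Gamma_n]^{(p,q)}(B^*)^{\otimes q}$ from \eqref{eq:def_localization_DM}, and observe that $(B^*)^{\otimes q}KB^{\otimes p}$ is still compact. The only difference is that the paper dispatches the variable-$B_n$ case with ``the argument is the same,'' whereas you supply the missing detail (the operator-norm convergence $\|M_n-M\|\to0$ via $\ast$-strong convergence against a compact operator), which is exactly the right way to fill that gap.
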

\begin{proof}
When $\Gamma_n\wto_g\Gamma$, that is $[\Gamma_n]^{(p,q)}\wto_\ast[\Gamma]^{(p,q)}$ for all $0\leq p,q\leq N$, we have that $[(\Gamma_n)_B]^{(p,q)}=B^{\otimes p}[\Gamma_n]^{(p,q)}(B^*)^{\otimes q}$ converges weakly--$\ast$ to $B^{\otimes p}[\Gamma]^{(p,q)}(B^*)^{\otimes q}$. This is by definition $[\Gamma_B]^{(p,q)}$, hence it holds  $(\Gamma_n)_B\wto_g\Gamma_B$. The argument is the same when $B_n\to B$ and $(B_n)^*\to B^*$ strongly.
\end{proof}

The next lemma explains how localization can be used to convert geometric convergence into strong convergence.

\begin{lemma}[Local compactness]\label{lem:local_compactness}
Let $T\geq0$ be a non-negative self-adjoint operator on $\gH^1$, and $B$ be a bounded operator such that $0\leq BB^*\leq 1$. We assume that $B$ is $T^{1/2}$-compact, that is $B(1+T^{1/2})^{-1}\in\cK(\gH^1)$. Let $\{\Gamma_n\}$ be a sequence of states in $\cS(\cFN)$ which converges geometrically to a state $\Gamma\in\cS(\cFN)$, $\Gamma_n\gto\Gamma$. If 
$$\tr_\gH\big(T^{1/2}[\Gamma_n]^{(1)}T^{1/2}\big)\leq C$$
for a constant independent of $n$, then $(\Gamma_n)_B\to\Gamma_B$ strongly in $\gS_1(\cFN)$.
\end{lemma}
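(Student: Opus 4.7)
The plan is to combine three earlier ingredients: Lemma~\ref{lem:continuity} already provides the geometric convergence $(\Gamma_n)_B \gto \Gamma_B$ for free; Lemma~\ref{lem:N_wlsc} tells us that geometric convergence is upgraded to strong convergence in $\gS_1(\cFN)$ as soon as the total average particle number is conserved; and the $T^{1/2}$-compactness of $B$, together with the uniform kinetic energy bound, is precisely what will be used to obtain this conservation. So the whole proof reduces to establishing
\begin{equation*}
\lim_{n\to\ii}\tr_{\cF}\big(\cN\,(\Gamma_n)_B\big)= \tr_{\cF}\big(\cN\,\Gamma_B\big),
\end{equation*}
which, by $[\Gamma_B]^{(1)}=B[\Gamma]^{(1)}B^*$ and cyclicity of the trace, is the same as
\begin{equation*}
\tr_{\gH}\big(B^*B\,[\Gamma_n]^{(1)}\big)\longrightarrow \tr_{\gH}\big(B^*B\,[\Gamma]^{(1)}\big).
\end{equation*}

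The subtlety is that $B^*B$ is merely a bounded operator on $\gH$ (with $\norm{B^*B}\leq 1$) and in general not compact, so it is not an admissible test operator for the weak--$\ast$ convergence $[\Gamma_n]^{(1)}\wto [\Gamma]^{(1)}$ in $\gS_1(\gH)$ (which follows from Lemma~\ref{lem:prop_geom_CV}). I would circumvent this by introducing the spectral cut-off $P_M:=\1_{[0,M]}(T)$ and decomposing
\begin{equation*}
B^*B \;=\; P_M B^*B P_M \;+\; P_MB^*B(1-P_M)\;+\;(1-P_M)B^*BP_M\;+\;(1-P_M)B^*B(1-P_M).
\end{equation*}
The main term $P_MB^*BP_M$ is compact, since $BP_M = \bigl[B(1+T^{1/2})^{-1}\bigr]\bigl[(1+T^{1/2})P_M\bigr]$ is a product of a compact and a bounded operator. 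Hence, for each fixed $M$, the weak--$\ast$ convergence of $[\Gamma_n]^{(1)}$ in $\gS_1(\gH)$ gives $\tr_\gH(P_MB^*BP_M[\Gamma_n]^{(1)})\to \tr_\gH(P_MB^*BP_M[\Gamma]^{(1)})$.

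For the three remaining terms, the elementary operator inequality $1-P_M\leq T/M$ together with the assumed uniform bound $\tr_\gH(T^{1/2}[\Gamma_n]^{(1)}T^{1/2})\leq C$ (which, by Fatou's lemma for trace-class operators, is inherited by the geometric limit $[\Gamma]^{(1)}$) yields $\tr_\gH((1-P_M)[\Gamma_n]^{(1)})\leq C/M$ and the analogous bound for $\Gamma$. A Cauchy--Schwarz estimate in $\gS_2(\gH)$, using $(\Gamma_n^{(1)})^{1/2}$ as a Hilbert--Schmidt factor, then controls the off-diagonal cross terms by $C/\sqrt{M}$, uniformly in $n$. Letting first $n\to\ii$ and then $M\to\ii$ produces the desired convergence of the average particle number, after which Lemma~\ref{lem:N_wlsc} yields the strong convergence $(\Gamma_n)_B\to\Gamma_B$ in $\gS_1(\cFN)$. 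The only genuinely delicate point is the uniform tail estimate for the cross terms; apart from that, everything reduces to standard weak--$\ast$ convergence against a compact test operator, in a form that makes the role of the $T^{1/2}$-compactness of $B$ transparent.
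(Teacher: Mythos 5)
Your proposal is correct. It shares the skeleton of the paper's argument — geometric convergence of the localized states from Lemma~\ref{lem:continuity}, followed by Lemma~\ref{lem:N_wlsc} to upgrade to strong convergence once the average particle number is shown to pass to the limit — but the central technical step is handled by a different decomposition. The paper factors $B=K(1+T^{1/2})$ with $K=B(1+T^{1/2})^{-1}$ compact, observes that the kinetic bound makes $(1+T^{1/2})[\Gamma_n]^{(1)}(1+T^{1/2})$ bounded and weakly--$\ast$ convergent in $\gS_1(\gH)$, and invokes the standard fact that $A_n\wto_\ast A$ in $\gS_1$ and $K$ compact imply $KA_nK^*\to KAK^*$ strongly in $\gS_1$; this yields the stronger conclusion $[(\Gamma_n)_B]^{(1)}\to[\Gamma_B]^{(1)}$ in trace norm in one line, and Lemma~\ref{lem:N_wlsc} then applies. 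You instead only target the scalar quantity $\tr_\gH(B^*B[\Gamma_n]^{(1)})$, splitting $B^*B$ via the spectral cut-off $P_M=\1_{[0,M]}(T)$: the main block $P_MB^*BP_M$ is compact (since $BP_M=[B(1+T^{1/2})^{-1}][(1+T^{1/2})P_M]$), so it is an admissible test operator for the weak--$\ast$ convergence of $[\Gamma_n]^{(1)}$, while $1-P_M\leq T/M$ together with the uniform kinetic bound (and Cauchy--Schwarz in $\gS_2$ for the cross terms) controls the tails by $C/\sqrt{M}$ uniformly in $n$. Both uses of the hypotheses are legitimate; your version is more hands-on and essentially re-proves, in this special case, the compact-compression fact the paper quotes, at the price of a longer estimate and a weaker intermediate statement (trace convergence rather than $\gS_1$ convergence of the one-body density matrices) — which is nonetheless exactly what Lemma~\ref{lem:N_wlsc} needs.
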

\begin{proof}
We have $(\Gamma_n)_B\wto_g\Gamma_B$ geometrically by Lemma \ref{lem:continuity} and it remains to prove that the convergence is strong. It holds
$$\big[(\Gamma_n)_B\big]^{(1)}=B[\Gamma_n]^{(1)}B^*=K(1+T^{1/2})[\Gamma_n]^{(1)}(1+T^{1/2})K^*$$
where $K=B(1+T^{1/2})^{-1}$ is compact by assumption. The sequence $(1+T^{1/2})[\Gamma_n]^{(1)}(1+T^{1/2})$ is bounded in $\gS_1(\gH)$, hence we have that $(1+T^{1/2})[\Gamma]^{(1)}(1+T^{1/2})\in\gS_1(\gH^1)$ and
$$(1+T^{1/2})[\Gamma_n]^{(1)}(1+T^{1/2})\wto_\ast (1+T^{1/2})[\Gamma]^{(1)}(1+T^{1/2})$$
weakly--$\ast$ in $\gS_1$. It is well known that if $A_n\wto A$ weakly--$\ast$ in $\gS_1(\gH^1)$ and $K$ is compact, then $KA_nK^*\to KAK^*$ strongly in $\gS_1(\gH)$. We deduce from the above calculation that 
$[(\Gamma_n)_B]^{(1)}\to [\Gamma_B]^{(1)}$
strongly in $\gS_1(\gH)$. By Lemma~\ref{lem:N_wlsc}, this shows that $(\Gamma_n)_B\to\Gamma_B$ strongly.
\end{proof}

\begin{example}\label{ex:local_compactness}
If $\Gamma_n\wto_g\Gamma$ geometrically in $\cFN$ and the kinetic energy $\tr\left((-\Delta)[\Gamma_n]^{(1)}\right)$ is uniformly bounded, then $(\Gamma_n)_\chi\to\Gamma_\chi$ strongly in $\gS_1$, for every localization function $\chi(x)$ of compact support (even tending to zero at infinity), since $\chi(x)(1+|-i\nabla|)^{-1}$ is always a compact operator. This can be viewed as a generalization to states in $\cFN$ of Rellich's local compactness in Sobolev spaces \cite{LieLos-01}.\dqed
\end{example}

The following is simple consequence of the previous result with $T=1$.

\begin{corollary}[Compact localization]\label{cor:compact_localization_operators}
Let $\{\Gamma_n\}$ be a sequence of states in $\cS(\cFN)$ which converges geometrically to a state $\Gamma\in\cS(\cFN)$, $\Gamma_n\wto_g\Gamma$. Then $(\Gamma_n)_K\to(\Gamma)_K$ strongly in $\gS_1(\cFN)$ for every fixed compact operator $K$ such that  $0\leq KK^*\leq1$.
\end{corollary}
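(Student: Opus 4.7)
The plan is to deduce this directly from Lemma~\ref{lem:local_compactness} by choosing the trivial operator $T = 1$ (the identity on $\gH^1$). With this choice, $T^{1/2} = 1$ is bounded and non-negative, and the $T^{1/2}$-compactness condition from Lemma~\ref{lem:local_compactness}, namely $B(1+T^{1/2})^{-1} \in \cK(\gH^1)$, reduces to $K/2 \in \cK(\gH^1)$. This is satisfied by assumption since $K$ is compact.

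The remaining hypothesis of Lemma~\ref{lem:local_compactness} is the uniform bound $\tr_\gH\big(T^{1/2}[\Gamma_n]^{(1)}T^{1/2}\big) \leq C$. With $T=1$ this becomes simply $\tr_\gH[\Gamma_n]^{(1)} \leq C$. But for any state $\Gamma_n \in \cS(\cFN)$ we have
\begin{equation*}
\tr_\gH[\Gamma_n]^{(1)} = \tr_\cF(\cN \Gamma_n) \leq N,
\end{equation*}
the bound being uniform in $n$, as $\cN \leq N$ on the truncated Fock space $\cFN$. (Alternatively this follows from the general bound \eqref{eq:estim_DM} in Lemma~\ref{lem:density_matrices}.) Hence both hypotheses of Lemma~\ref{lem:local_compactness} are met, and one concludes $(\Gamma_n)_K \to \Gamma_K$ strongly in $\gS_1(\cFN)$.

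There is no genuine obstacle here; the statement is essentially a repackaging of Lemma~\ref{lem:local_compactness} in the case where the auxiliary operator $T$ plays no role. The only thing to check is that the first-moment bound on $[\Gamma_n]^{(1)}$ is automatic on the truncated Fock space, which is immediate from $\cN \leq N$ on $\cFN$.
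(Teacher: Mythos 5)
Your proof is correct and is exactly the argument the paper intends: the paper simply states that the corollary is ``a simple consequence of the previous result with $T=1$,'' and you have filled in the two hypotheses of Lemma~\ref{lem:local_compactness} correctly, including the observation that $\tr_\gH[\Gamma_n]^{(1)}=\tr_\cF(\cN\Gamma_n)\leq N$ holds automatically on $\cFN$. Nothing further is needed.
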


Localization may also be used to approximate a given state by simpler states (for instance finite rank states, see Section~\ref{sec:HF_MCSCF}).
\begin{lemma}[Approximation by localized states]\label{lem:approx_localized}
Let $\{B_n\}$ be a sequence of bounded operators in $\gH$, such that $0\leq B_nB_n^*\leq1$, $B_n\to B$ and $B_n^*\to B^*$ strongly as $n\to\ii$. Then for any state $\Gamma\in\cS(\cFN)$, $\Gamma_{B_n}\to \Gamma_B$ strongly in $\gS_1(\gH)$ as $n\to\ii$.
\end{lemma}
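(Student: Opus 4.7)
The plan is to reduce the claim to geometric convergence plus conservation of the average particle number, invoking Lemmas~\ref{lem:continuity} and~\ref{lem:N_wlsc}. Applying the second part of Lemma~\ref{lem:continuity} with the constant sequence $\Gamma_n \equiv \Gamma$ (which trivially converges geometrically to $\Gamma$) and with the given sequence $B_n \to B$, $B_n^* \to B^*$ strongly, we immediately obtain that $\Gamma_{B_n} \gto \Gamma_B$ in the geometric sense. By Lemma~\ref{lem:N_wlsc}, to upgrade this to strong convergence in $\gS_1(\cFN)$ it suffices to check that
\begin{equation*}
\lim_{n\to\ii}\tr_\cF(\cN\,\Gamma_{B_n}) \;=\; \tr_\cF(\cN\,\Gamma_B).
\end{equation*}

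By the defining property \eqref{eq:def_localization_DM} of localization and the relation $\tr_\cF(\cN\,\Gamma') = \tr_\gH [\Gamma']^{(1)}$, this amounts to showing
\begin{equation*}
\tr_\gH\bigl(B_n [\Gamma]^{(1)} B_n^*\bigr) \;\xrightarrow[n\to\ii]{}\; \tr_\gH\bigl(B [\Gamma]^{(1)} B^*\bigr).
\end{equation*}
Since $\Gamma \in \cS(\cFN)$, Lemma~\ref{lem:density_matrices} ensures $[\Gamma]^{(1)} \in \gS_1(\gH)$, so we may use the spectral decomposition $[\Gamma]^{(1)} = \sum_k \lambda_k\, |e_k\rangle\langle e_k|$ with $\lambda_k \geq 0$ and $\sum_k \lambda_k < \ii$. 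Then
\begin{equation*}
\tr_\gH\bigl(B_n [\Gamma]^{(1)} B_n^*\bigr) = \sum_{k\geq 1} \lambda_k \,\|B_n e_k\|^2.
\end{equation*}
The assumption $B_nB_n^*\leq 1$ gives $\|B_n\|\leq 1$, hence each term is dominated by $\lambda_k \in \ell^1$, while the strong convergence $B_n \to B$ gives termwise convergence $\|B_n e_k\|^2 \to \|B e_k\|^2$. Dominated convergence then yields the desired limit $\sum_k \lambda_k \|Be_k\|^2 = \tr_\gH(B[\Gamma]^{(1)} B^*)$.

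There is no serious obstacle here: the argument is essentially the ``no loss of mass implies strong convergence'' principle for states on $\cFN$ (Lemma~\ref{lem:N_wlsc}) combined with the fact that strong convergence of bounded operators passes to trace-class operators when tested against a trace-class weight. The only minor point to watch is that strong convergence of $B_n$ alone (not $B_n^*$) would suffice for this scalar quantity, but the full hypothesis is needed to get the preliminary geometric convergence from Lemma~\ref{lem:continuity}.
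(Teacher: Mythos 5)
Your proof is correct and follows essentially the same route as the paper's: Lemma~\ref{lem:continuity} applied to the constant sequence $\Gamma_n\equiv\Gamma$ gives geometric convergence, and Lemma~\ref{lem:N_wlsc} upgrades it to strong convergence once the average particle number is seen to converge. The only cosmetic difference is that the paper asserts strong $\gS_1$ convergence of $[\Gamma_{B_n}]^{(1)}=B_n[\Gamma]^{(1)}B_n^*$, whereas you verify only the convergence of its trace by dominated convergence, which is exactly (and no more than) what Lemma~\ref{lem:N_wlsc} requires.
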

\begin{proof}
By Lemma \ref{lem:continuity}, we have at least $\Gamma_{B_n}\wto_g \Gamma_B$ geometrically. However, since
$[\Gamma_{B_n}]^{(1)}=(B_n)[\Gamma]^{(1)}(B_n)^*\to B[\Gamma]^{(1)}B^*=[\Gamma_{B}]^{(1)}$ strongly in $\gS_1(\gH^1)$, the convergence of $\Gamma_n$ must be strong by Lemma~\ref{lem:N_wlsc}.
\end{proof}

\subsection{Application: HVZ theorem in the general case}\label{sec:HVZ_general}

In Section~\ref{sec:HVZ_wlsc} we have proved the celebrated HVZ Theorem for systems with a repulsive interaction, $W\geq0$, using the lower semi-continuity of the energy with respect to geometric convergence. In particular it was essential that in the absence of external field, $V=0$, the ground state energy of the system vanishes: $E^0(N)=0$. 

When $W$ has no sign \emph{a priori}, the energy $\Gamma\mapsto\cE^V(\Gamma)$ is not necessarily lower semi-continuous, which can be seen by the fact that it may hold $E^0(N)<0$. Particles running off to infinity can carry a negative energy and in the HVZ theorem it is then necessary to take into account the energy of these particles. Separating the particles escaping to infinity from those which are bound by the external potential $V$ is then done via localization.

Let us recall the $N$-body Hamiltonian $H^V(N)$ defined in \eqref{def:Hamiltonian_W}. 
The bottom of its spectrum and of its essential spectrum are respectively denoted by $E^V(N)$ and $\Sigma^V(N)$. 
As usual we make the assumption that $W$ is even, and that $V$ and $W$ can both be written in the form $\sum_{i=1}^Kf_i$ with $f_i\in L^{p_i}(\R^d)$ where $\max(1,d/2)<p_i<\ii$ or $p_i=\ii$ but $f_i\to0$ at infinity.

The result in the general case is the following.
\begin{theorem}[HVZ in the general case]\label{thm:HVZ}
Under the previous assumptions on $V$ and $W$, we have
\begin{equation}
\boxed{\Sigma^V(N)=\inf\big\{E^V(N-k)+E^0(k),\ k=1,...,N\big\}.}
\label{eq:HVZ}
\end{equation}
\end{theorem}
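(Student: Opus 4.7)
The proof splits into two inequalities, following the scheme of Theorem~\ref{thm:HVZ_wlsc} but now using geometric localization to separate particles that stay bounded from those escaping to infinity.

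For the upper bound $\Sigma^V(N)\le E^V(N-k)+E^0(k)$ with $k\in\{1,\dots,N\}$, I would fix $\epsilon>0$, choose normalized approximate minimizers $\Phi^V\in\gH^{N-k}$ and $\Phi^0\in\gH^k$ for $H^V(N-k)$ and $H^0(k)$, fix $e\in\R^d\setminus\{0\}$, and form the translated product $\Psi_n:=\Phi^V\circ(\tau_{ne}\Phi^0)$, with $\circ\in\{\wedge,\vee\}$ according to the statistics and $\tau_z$ denoting translation by $z$. Since $V$ and $W$ vanish at infinity, the action of $V(x_j)$ on the translated block and of $W(x_i-x_j)$ between the two blocks tend to $0$; the kinetic energy is additive, and orthogonality for large $n$ gives $\|\Psi_n\|\to 1$. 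Hence $\Psi_n\wto 0$ and $\langle\Psi_n,H^V(N)\Psi_n\rangle\to\langle\Phi^V,H^V(N-k)\Phi^V\rangle+\langle\Phi^0,H^0(k)\Phi^0\rangle\le E^V(N-k)+E^0(k)+2\epsilon$, which exhibits a singular Weyl sequence; sending $\epsilon\to 0$ yields the bound.

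The heart of the proof is the reverse inequality. Take a singular Weyl sequence $(\Psi_n)$ for $\Sigma^V(N)$, which is $H^1$-bounded by~\eqref{eq:bound_Hamiltonian}. Set $\Gamma_n:=0\oplus\cdots\oplus|\Psi_n\rangle\langle\Psi_n|\in\cS(\cFN)$ and, by Lemma~\ref{lem:compact}, extract a subsequence with $\Gamma_n\gto\Gamma$ (necessarily $G_{NN}=0$ since $\Psi_n\wto 0$). Fix $\chi\in C^\infty_c(\R^d)$ with $0\le\chi\le1$ and $\chi\equiv 1$ near the origin, put $\chi_R(x):=\chi(x/R)$ and $\eta_R:=\sqrt{1-\chi_R^2}$, and decompose the energy via the many-body IMS formula built from the partition of unity $\chi_R^2+\eta_R^2=1$:
\[
\cE^V(\Gamma_n)=\cE^V\!\bigl((\Gamma_n)_{\chi_R}\bigr)+\cE^V\!\bigl((\Gamma_n)_{\eta_R}\bigr)+r(n,R),
\]
where $r(n,R)$ collects the $O(R^{-2})$ IMS kinetic remainder, the $V$-tail $\int V\eta_R^2\rho_{\Gamma_n}$, and the two-body cross pieces with one particle in $B_{2R}$ and the other outside $B_R$. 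Using the decay hypotheses on $V$ and $W$ together with uniform $L^p$ bounds on $\rho_{\Gamma_n}$ and $\rho^{(2)}_{\Gamma_n}$ inherited from the kinetic bound and Sobolev embedding (in the spirit of~\eqref{eq:estim_potential_energy}), I would show $\limsup_{n\to\infty}|r(n,R)|\to 0$ as $R\to\infty$.

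Decomposing $(\Gamma_n)_{\chi_R}=\bigoplus_k G^{n,R,\mathrm{in}}_k$ with sector weights $t^{\mathrm{in}}_{k,n,R}:=\tr G^{n,R,\mathrm{in}}_k$, and similarly for the outside, the bounds $H^V(k)\ge E^V(k)$ and $H^0(k)\ge E^0(k)$ on each sector yield $\cE^V((\Gamma_n)_{\chi_R})\ge\sum_k t^{\mathrm{in}}_{k,n,R}E^V(k)$ and $\cE^V((\Gamma_n)_{\eta_R})\ge\sum_k t^{\mathrm{out}}_{k,n,R}E^0(k)$ after absorbing the $V$-tail into $r$. The crucial identity~\eqref{eq:relation_trace_localized_N_body_fn} gives $t^{\mathrm{in}}_{k,n,R}=t^{\mathrm{out}}_{N-k,n,R}$, so adding and using $\sum_k t^{\mathrm{in}}_{k,n,R}=1$ produces
\[
\cE^V(\Gamma_n)\ge\sum_{k=0}^N t^{\mathrm{in}}_{k,n,R}\bigl[E^V(k)+E^0(N-k)\bigr]-|r(n,R)|\ge\min_{0\le k\le N}\bigl[E^V(k)+E^0(N-k)\bigr]-|r(n,R)|.
\]
To discard the term $k=N$ (corresponding to no particle escaping), observe that $t^{\mathrm{in}}_{N,n,R}=\int\prod_j\chi_R(x_j)^2|\Psi_n|^2\to 0$ as $n\to\infty$ for each fixed $R$, by Rellich compactness applied to the $H^1$-bounded sequence $\Psi_n\wto 0$. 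Letting $n\to\infty$ and then $R\to\infty$ delivers $\Sigma^V(N)\ge\min_{1\le k\le N}[E^V(N-k)+E^0(k)]$.

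The main obstacle is the control of the mixed two-body contribution to $r(n,R)$: proving that the integral of $W(x-y)$ against $[\Gamma_n]^{(2)}(x,y;x,y)$ restricted to pairs with $x\in B_{2R}$, $y\notin B_R$ vanishes uniformly in $n$ as $R\to\infty$. Since such pairs may still lie close together near the boundary of $B_R$, the decay of $W$ at infinity must be coupled with a uniform Sobolev-type bound on $\rho^{(2)}_{\Gamma_n}$ coming from the kinetic bound, splitting the region into $\{|x-y|\le\delta\}$ (small by shrinking annulus volume) and $\{|x-y|\ge\delta\}$ (small by decay of $W$ once both $x$ and $y$ are large).
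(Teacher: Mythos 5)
Your overall strategy coincides with the paper's: singular Weyl sequence, geometric limit $\Gamma_n\gto\Gamma$ with $G_{NN}=0$, IMS localization via $\chi_R^2+\eta_R^2=1$, sector-wise lower bounds $H^V(k)\geq E^V(k)$ and $H^0(k)\geq E^0(k)$, the trace identity \eqref{eq:relation_trace_localized_N_body_fn}, then $n\to\ii$ followed by $R\to\ii$. The upper bound, the $V$-tail, and your elimination of the $k=N$ sector are all fine (your Rellich argument showing $\tr G^{n,R,\mathrm{in}}_N=\int\prod_j\chi_R(x_j)^2|\Psi_n|^2\to0$ is a correct variant of the paper's observation that $G_{\chi_R,N}=\chi_R^{\otimes N}G_{NN}\chi_R^{\otimes N}=0$ in the limit). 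One presentational slip: your displayed bound by $\min_{0\leq k\leq N}\big[E^V(k)+E^0(N-k)\big]$ is vacuous, since the $k=N$ entry equals $E^V(N)\leq\Sigma^V(N)$; the argument must keep the convex combination and use that its $k=N$ weight vanishes, which is what you then do.

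The genuine gap is the step you yourself flag: the cross term $\int\!\!\int W(x-y)\chi_R(x)^2\eta_R(y)^2[\Gamma_n]^{(2)}(x,y;x,y)$. Neither half of your $\delta$-splitting works. On $\{|x-y|\geq\delta\}$, the decay of $W$ gives nothing, because $W$ decays in $|x-y|$ and not in $|x|$, $|y|$ separately: two particles sitting a fixed distance apart on either side of the sphere $|x|=R$ contribute $W(x-y)=O(1)$. On $\{|x-y|\leq\delta\}$, the constraint forces both points into the annulus $R-\delta\leq|\cdot|\leq 2R+\delta$, whose volume grows with $R$, and the pair mass carried there is \emph{not} uniformly small in $n$ (the escaping particles traverse every annulus); for the bounded component of $W$ no smallness can be extracted from $\delta$ either. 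Indeed no estimate uniform in $n$ is possible for this term. The paper's resolution is to split $\eta_R(y)^2=\eta_{3R}(y)^2+\eta_R(y)^2\chi_{3R}(y)^2$ as in \eqref{eq:loc_inter_2}--\eqref{eq:loc_inter_3}: the first piece forces $|x-y|\geq R$ and is $O\big(\sum_j\norm{W_j\1_{|z|\geq R}}_{L^{p_j}(\R^d)}\big)$ uniformly in $n$; the second piece, which is compactly supported in $y$, is not estimated at all but is passed to the limit $n\to\ii$ using the strong local convergence of the localized states (Lemma~\ref{lem:local_compactness}, Example~\ref{ex:local_compactness}), which replaces $[\Gamma_n]^{(2)}$ by the fixed trace-class operator $[\Gamma]^{(2)}$; only afterwards does one send $R\to\ii$, where this contribution vanishes by dominated convergence. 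The order of limits (first $n$, then $R$) for this piece, and the local compactness lemma that makes it licit, are the missing ingredients in your outline.
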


We now provide the proof of Theorem~\ref{thm:HVZ}. This serves as an illustration of the concepts of geometric convergence and localization that we have introduced, but also introduces the reader to the techniques that we use later for nonlinear systems.

\begin{proof}
As in the proof of Theorem~\ref{thm:HVZ_wlsc}, we only explain the lower bound $\geq$. We take the same singular Weyl sequence $\{\Psi_n\}$ such that $(H^V(N)-\Sigma^V(N))\Psi_n\to0$ and let $\Gamma_n=0\oplus\cdots\oplus|\Psi_n\rangle\langle\Psi_n|\in\cFN$. We assume (up to extraction of a subsequence and by Lemma~\ref{lem:compact}) that $\Gamma_n\wto_g\Gamma=G_{00}\oplus\cdots\oplus G_{NN}$ geometrically. Recall that $G_{NN}$ is the weak limit of $|\Psi_n\rangle\langle\Psi_n|$, hence $G_{NN}=0$ since $\Psi_n\wto0$ by assumption.

Our goal is to prove the following fundamental estimate
\begin{equation}
{\Sigma^V(N)=\lim_{n\to\ii}\cE^V(\Gamma_n)\geq \sum_{k=1}^N\Big(E^V(N-k)+E^0(k)\Big)\tr_{\gH^{N-k}}\left(G_{N-k\,N-k}\right).}
\label{eq:final_estimate_HVZ}
\end{equation}
Compared to \eqref{eq:estim_HVZ_wlsc_proof}, the bound now includes the energy $E^0(k)$ of particles running off to infinity, which can be nonzero.
Recall that $\Gamma$ is a state, that is $G_{kk}\geq0$ and $\sum_{k=1}^N\tr_{\gH^{N-k}}\left(G_{N-k\,N-k}\right)=1$ since $G_{NN}=0$. Therefore the right hand side of \eqref{eq:final_estimate_HVZ} is a convex combination and we have
$$\sum_{k=1}^N\Big(E^V(N-k)+E^0(k)\Big)\tr_{\gH^{N-k}}\left(G_{N-k\,N-k}\right)\geq \inf\left\{E^V(N-k)+E^0(k),\ k=1,...,N\right\},$$
which proves the lower bound in \eqref{eq:HVZ}.

In order to show the inequality \eqref{eq:final_estimate_HVZ}, we pick a smooth cut-off function $0\leq \chi\leq1$ which equals 1 on the ball $B(0,1)$ and 0 outside the ball $B(0,2)$, and let $\chi_R(x)=\chi(x/R)$ as well as $\eta_R=\sqrt{1-\chi_R^2}$. 
The rest of the proof goes as follows: 
\begin{enumerate}
 \item[\textit{(i)}] We geometrically localize in and outside the ball of radius $R$ by means of the smooth partition of unity $\chi_R^2+\eta_R^2=1$;
\item[(ii)] We use the fundamental equality \eqref{eq:relation_trace_localized_N_body_fn};
\item[\textit{(iii)}] We pass to the limit as $n\to\ii$;
\item[\textit{(iv)}] We take the limit $R\to\ii$.
\end{enumerate}
As we will explain later in the proof of Theorem \ref{thm:nonlinear}, it is possible to use an $n$-dependent radius of localization $R_n\to\ii$, and to perform the steps \textit{(iii)} and \textit{(iv)} simultaneously. As we do not need this technique here, we defer its use to Section \ref{sec:nonlinear}, for pedagogical purposes.

The so-called IMS formula reads:
\begin{equation}
-\Delta=\chi_R(-\Delta)\chi_R+\eta_R(-\Delta)\eta_R-|\nabla\chi_R|^2-|\nabla\eta_R|^2.
\label{eq:IMS} 
\end{equation}
Hence $-\Delta\geq \chi_R(-\Delta)\chi_R+\eta_R(-\Delta)\eta_R-C/R^2$. Using this for the kinetic energy as well as the partition of unity $1=\chi_R^2+\eta_R^2$ in the interaction energy, we deduce that
\begin{multline*}
\cE^V(\Gamma_n)\geq \cE^V\big((\Gamma_n)_{\chi_R}\big) +\cE^0\big((\Gamma_n)_{\eta_R}\big)+\int_{\R^d}\eta_R(x)^2V(x)\rho_{\Gamma_n}(x)\,dx\\
+2\int_{\R^d}\int_{\R^d}W(x-y)\chi_R(x)^2\eta_R(y)^2[\Gamma_n]^{(2)}(x,y;x,y)\,dx\,dy-CN/R^2.
\end{multline*}
Let us start by estimating the error terms. Since $\{\Psi_n\}$ is a Weyl sequence it is bounded in $H^1((\R^d)^N)$, thus  $\sqrt{\rho_{\Gamma_n}}$ is bounded in $H^1(\R^d)$, by \eqref{eq:Hoffmann-Ostenhof}. Since $V=\sum_{j=1}^KV_j$ with $V_j\in L^{p_j}(\R^d)$ where $\max(1,d/2)<p_j<\ii$ or $p_j=\ii$ but $V_j\to0$ at infinity, we have by Hölder's and Sobolev's inequalities
$$\left|\int_{\R^d}\eta_R(x)^2V(x)\rho_{\Gamma_n}(x)\,dx\right|\leq C\sum_{j=1}^k\norm{V_j\eta_R^2}_{L^{p_j}(\R^d)}$$
which tends to zero as $R\to\ii$. For the interaction term, we may write for instance
\begin{align}
&\int_{\R^d}\int_{\R^d}W(x-y)\chi_R(x)^2\eta_R(y)^2[\Gamma_n]^{(2)}(x,y;x,y)\,dx\,dy\nonumber\\ &\qquad\qquad=\int_{\R^d}\int_{\R^d}W(x-y)\chi_R(x)^2\eta_{3R}(y)^2[\Gamma_n]^{(2)}(x,y;x,y)\,dx\,dy\label{eq:loc_inter_2}\\
&\qquad\qquad+\int_{\R^d}\int_{\R^d}W(x-y)\chi_R(x)^2\eta_{R}(y)^2\chi_{3R}(y)^2[\Gamma_n]^{(2)}(x,y;x,y)\,dx\,dy.\label{eq:loc_inter_3}
\end{align}
In the first term of the right hand side, the integrand is zero except when $|x-y|\geq R$, hence it may be estimated similarly as before by
$$|\eqref{eq:loc_inter_2}|\leq C\sum_{j=1}^K\norm{W_j\1_{|x|\geq R}}_{L^{p_j}(\R^d)}$$
which also tends to zero when $R\to\ii$. Summarizing we have shown that
\begin{multline}
 \cE^V(\Gamma_n)\geq \cE^V\big((\Gamma_n)_{\chi_R}\big) +\cE^0\big((\Gamma_n)_{\eta_R}\big)\\
+2\int_{\R^d}\int_{\R^d}W(x-y)\chi_R(x)^2\eta_{R}(y)^2\chi_{3R}(y)^2[\Gamma_n]^{(2)}(x,y;x,y)\,dx\,dy+\epsilon_R
\end{multline}
where $\epsilon_R$ is independent of $n$ and tends to zero as $R\to\ii$. The total energy of the system can be estimated from below by the sum of the energies of the localized states in and outside the ball of radius $R$, plus error terms.

We now deal with the main two terms and write that
\begin{align}
\cE^V\big((\Gamma_n)_{\chi_R}\big) +\cE^0\big((\Gamma_n)_{\eta_R}\big)&=\sum_{k=0}^N\tr_{\gH^k}\left(H^V(k)G_{\chi_R,k}^n\right)+\sum_{k=0}^N\tr_{\gH^k}\left(H^0(k)G_{\eta_R,k}^n\right)\nonumber\\
&\geq \sum_{k=0}^NE^V(k)\tr_{\gH^k}\left(G_{\chi_R,k}^n\right)+\sum_{k=0}^NE^0(k)\tr_{\gH^k}\left(G_{\eta_R,k}^n\right),\label{eq:decomp_energy_HVZ}
\end{align}
where $(\Gamma_n)_{\chi_R}=G_{\chi_R,0}^n\oplus\cdots \oplus G_{\chi_R,N}^n$ and with a similar definition for $G_{\eta_R,k}^n$.
At this point we use the fundamental relation \eqref{eq:relation_trace_localized_N_body_fn} (valid since $\Gamma_n$ is an $N$-body state for all $n$), which tells us that
$$\tr_{\gH^{k}}\left(G_{\chi_R,k}^n\right)=\tr_{\gH^{N-k}}\left(G_{\eta_R,N-k}^n\right)$$
for all $k=0,...,N$. Inserting in \eqref{eq:decomp_energy_HVZ} and changing $k$ into $N-k$ in the first sum we get
$$\cE^V\big((\Gamma_n)_{\chi_R}\big) +\cE^0\big((\Gamma_n)_{\eta_R}\big)\geq 
\sum_{k=0}^N\Big(E^V(N-k)+E^0(k)\Big)\tr_{\gH^{N-k}}\left(G_{\chi_R,N-k}^n\right).$$
By Lemma~\ref{lem:local_compactness} (or more precisely Example~\ref{ex:local_compactness}), we have $(\Gamma_n)_{\chi_R}\to\Gamma_{\chi_R}$ strongly, therefore 
$$\lim_{n\to\ii}\tr_{\gH^{N-k}}\left(G_{\chi_R,N-k}^n\right)=\tr_{\gH^{N-k}}\left(G_{\chi_R,N-k}\right)$$
where $\Gamma_{\chi_R}=G_{\chi_R,0}\oplus\cdots \oplus G_{\chi_R,N}$. Recall $G_{NN}=0$ hence $G_{\chi_R,N}=(\chi_R)^{\otimes N}G_{NN}(\chi_R)^{\otimes N}=0$ also. 
As a consequence,
\begin{multline*}
\lim_{n\to\ii}\sum_{k=0}^N\Big(E^V(N-k)+E^0(k)\Big)\tr_{\gH^{N-k}}\left(G_{\chi_R,N-k}^n\right)\\
=\sum_{k=1}^N\Big(E^V(N-k)+E^0(k)\Big)\tr_{\gH^{N-k}}\left(G_{\chi_R,N-k}\right).
\end{multline*}
Using that the term in \eqref{eq:loc_inter_3} converges as $n\to\ii$ since $\chi_R(x)^2\eta_R(y)^2\chi_{3R}(y)^2$ has a compact support, we arrive at the estimate
\begin{multline}
\Sigma^V(N)=\lim_{n\to\ii}\cE^V(\Gamma_n)\geq \sum_{k=1}^N\Big(E^V(N-k)+E^0(k)\Big)\tr_{\gH^{N-k}}\left(G_{\chi_R,N-k}\right)\\
+2\int_{\R^d}\int_{\R^d}W(x-y)\chi_R(x)^2\eta_{R}(y)^2\chi_{3R}(y)^2[\Gamma]^{(2)}(x,y;x,y)\,dx\,dy+\epsilon_R.
\end{multline}
Passing finally to the limit $R\to\ii$ (using that $\Gamma_{\chi_R}\to\Gamma$ strongly by Lemma~\ref{lem:approx_localized}, hence $G_{\chi_R,k}\to G_{kk}$ as $R\to\ii$) gives the desired estimate \eqref{eq:final_estimate_HVZ} and ends the proof.
\end{proof}

\section{Finite rank approximation of many-body systems}\label{sec:HF_MCSCF}

In the previous two sections we have introduced geometric tools for many-body systems and we have illustrated their use on linear systems (the HVZ theorem). In practice, physicists and chemists resort to \emph{approximate models} which are simpler to handle and to simulate numerically. These approximations are usually classified in two different categories:
\begin{itemize}
\item those in which the set of states is reduced,

\medskip

\item those in which the energy is modified by adding nonlinear empirical terms.
\end{itemize}
These two methods can of course be combined: in the so-called Kohn-Sham method of atoms and molecules \cite{KohSha-65}, all states are assumed to be of Hartree-Fock type but the energy is further modified to take into account exchange-correlation effects. Both techniques usually lead to \emph{nonlinear models}, either because the class of states is replaced by a manifold or because the energy is itself nonlinear.

The purpose of this section is to study methods of the first kind in which the many-body energy is kept linear, but the set of states is reduced. Methods from the second category will be considered in Section~\ref{sec:nonlinear}. We study here the so-called \emph{finite rank approximation} which consists in assuming that the $N$-body wavefunction can be expanded as tensor products of finitely many unknown one-body functions $\{\phi_1,...,\phi_r\}$. For fermions, this leads to the celebrated \emph{Hartree-Fock method} \cite{LieSim-77} when $r=N$, and to the widely used \emph{multiconfiguration methods} \cite{Friesecke-03,Lewin-04a} when $r>N$. For bosons, the \emph{Hartree method} is obtained when $r=1$.

We investigate properties of geometric limits of finite-rank states, and deduce \emph{nonlinear} versions of the HVZ Theorem. As we will see, the situation is however still rather unclear for bosons and our results are only satisfactory for fermions in the Hartree-Fock approximation or for multiconfiguration methods with repulsive interactions. We hope to come back to the other interesting cases in the future.

\subsection{States living on a subspace of $\gH$, finite rank states}

\subsubsection{Definitions}

\begin{definition}[States living on a subspace]\label{def:subspace}
Let $\gH'\subset\gH$ be a closed subspace of the one-body space $\gH^1$ and $P$ be the orthogonal projection onto $\gH'$. A state $\Gamma\in\cS(\cFN)$ is said to \emph{live on $\gH'$} when $\Gamma_P=\Gamma$.
\end{definition}

The smallest subspace $\gH'$ such that $\Gamma$ lives on $\gH'$ can be called the \emph{support of $\Gamma$}. The following is a reformulation of a result of Löwdin \cite{Lowdin-55a} stating that the support can be found by means of the one-body density matrix $[\Gamma]^{(1)}$ only.

\begin{lemma}[Löwdin's criterion]\label{lem:Lowdin}
Let $\Gamma$ be a state on $\cFN$ and $P:\gH^1\to\gH^1$ be an orthogonal projector. The following assertions are equivalent:
\begin{enumerate}
\item $\Gamma$ lives on $P\gH^1$, that is $\Gamma_P=\Gamma$;
\item $P[\Gamma]^{(1)}P=[\Gamma]^{(1)}$;
\item $\mathbb{P}\Gamma\mathbb{P}=\Gamma$ where $\mathbb{P}=1\oplus P\oplus (P\otimes P)\oplus\cdots \oplus P^{\otimes N}$.
\end{enumerate}
\end{lemma}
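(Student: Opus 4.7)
The plan is to close a cycle of implications $(1) \Rightarrow (2) \Rightarrow (3) \Rightarrow (1)$. Two of these are short and the non-trivial step is $(2) \Rightarrow (3)$.

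The implication $(1) \Rightarrow (2)$ is immediate from Proposition~\ref{def:localization}: by definition $[\Gamma_P]^{(1)} = P\,[\Gamma]^{(1)}\,P$, so the assumption $\Gamma_P = \Gamma$ gives $(2)$.

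For $(3) \Rightarrow (1)$, I would first record the intertwining identities
$\mathbb{P}\,a^\dagger(f) = a^\dagger(Pf)\,\mathbb{P}$ and $a(g)\,\mathbb{P} = \mathbb{P}\,a(Pg)$,
which are a direct consequence of the fact that $P^{\otimes n}$ commutes with (anti)sym\-metrization, so that $P^{\otimes(n+1)}(f \circ \psi) = (Pf) \circ (P^{\otimes n}\psi)$. Since $\mathbb{P}$ is a projector and $\Gamma \geq 0$, the hypothesis $\mathbb{P}\Gamma\mathbb{P} = \Gamma$ is equivalent to $\mathbb{P}\Gamma = \Gamma = \Gamma\mathbb{P}$. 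Substituting into~\eqref{eq:def_density_matrix} and sliding $\mathbb{P}$ across every creation/annihilation operator using the above identities together with cyclicity of the trace, one obtains
$[\Gamma]^{(p,q)} = P^{\otimes p}\,[\Gamma]^{(p,q)}\,P^{\otimes q} = [\Gamma_P]^{(p,q)}$
for all $0 \leq p, q \leq N$, and Lemma~\ref{lem:density_matrices}(ii) then forces $\Gamma = \Gamma_P$.

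The main obstacle is $(2) \Rightarrow (3)$. Set $Q = 1 - P$ and let $\mathbb{Q}$ denote the second quantization of $Q$ on $\cFN$, namely $\mathbb{Q} := 0 \oplus \bigoplus_{n=1}^N \sum_{j=1}^n Q_{x_j}$. The key input is the operator inequality
$$1 - \mathbb{P} \;\leq\; \mathbb{Q} \qquad \text{on } \cFN,$$
which holds because, on each sector $\gH^n$, the non-negative integer-valued operator $\sum_j Q_{x_j}$ vanishes precisely on $P^{\otimes n}\gH^n$ and therefore dominates the orthogonal projection $1 - P^{\otimes n}$ onto its complement. Hypothesis $(2)$ is equivalent to $Q\,[\Gamma]^{(1)} = 0$; combined with~\eqref{eq:2nd_qtz_1_body} this yields
$$0 \leq \tr_\cF\!\bigl((1-\mathbb{P})\,\Gamma\,(1-\mathbb{P})\bigr) = \tr_\cF\!\bigl((1-\mathbb{P})\Gamma\bigr) \leq \tr_\cF\!\bigl(\mathbb{Q}\,\Gamma\bigr) = \tr_\gH\!\bigl(Q\,[\Gamma]^{(1)}\bigr) = 0.$$
The positive operator $(1-\mathbb{P})\,\Gamma\,(1-\mathbb{P})$ thus has vanishing trace, hence is zero; positivity of $\Gamma$ then forces $\Gamma^{1/2}(1-\mathbb{P}) = 0$, so $(1-\mathbb{P})\Gamma = \Gamma(1-\mathbb{P}) = 0$ and finally $\Gamma = \mathbb{P}\Gamma\mathbb{P}$, which is~$(3)$.
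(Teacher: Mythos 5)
Your proof is correct, and the decisive step $(2)\Rightarrow(3)$ is handled by a genuinely different mechanism than in the paper. The paper works block by block: it decomposes $[\Gamma]^{(1)}=\sum_{k}[G_{kk}]^{(1)}$, uses positivity of each summand to get $P^\perp[G_{kk}]^{(1)}P^\perp=0$, hence $P^\perp[G_{kk}]^{(1)}=0$, then runs a partial-trace argument one tensor factor at a time to conclude $P^{\otimes k}G_{kk}P^{\otimes k}=G_{kk}$, and finally controls the off-diagonal blocks separately via $G_{k\ell}G_{\ell k}\leq G_{kk}$. Your route replaces all of this with the single operator inequality $1-\mathbb{P}\leq \mathbb{Q}=\mathrm{d}\Gamma(1-P)$ on $\cFN$ (the standard fact that the ``number of particles outside $P\gH^1$'' dominates the projection onto its non-trivial eigenspaces) together with the identity $\tr_\cF(\mathbb{Q}\Gamma)=\tr_\gH\big((1-P)[\Gamma]^{(1)}\big)$; positivity of $\Gamma$ then kills $(1-\mathbb{P})\Gamma(1-\mathbb{P})$ in one stroke and constrains diagonal and off-diagonal blocks simultaneously. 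This is shorter and more conceptual, at the price of invoking second quantization of $1-P$ rather than only the explicit kernel formulas; the paper's version has the minor advantage of exhibiting the intermediate facts $P^\perp[G_{kk}]^{(1)}=0$ explicitly. Your $(3)\Rightarrow(1)$ via the intertwining relations $\mathbb{P}a^\dagger(f)=a^\dagger(Pf)\mathbb{P}$ and $a(g)\mathbb{P}=\mathbb{P}a(Pg)$ is also a clean substitute for the paper's appeal to the triangular system \eqref{eq:triangular_system}; both are routine. One small remark: your reduction of $\mathbb{P}\Gamma\mathbb{P}=\Gamma$ to $\mathbb{P}\Gamma=\Gamma=\Gamma\mathbb{P}$ only needs self-adjointness of $\Gamma$ (range inclusion plus adjoints), not positivity, but that is harmless.
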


\begin{proof}
It is clear from the definition of geometric localization that \textit{(1)} implies \textit{(2)}. If we denote by $G_{k\ell}$ the matrix elements of $\Gamma$, \textit{(3)} means that $P^{\otimes k}G_{k\ell}P^{\otimes \ell}=G_{k\ell}$ for every $0\leq k,\ell\leq N$. Using \eqref{eq:triangular_system}, this is easily seen to imply that $P^{\otimes p}[\Gamma]^{(p,q)}P^{\otimes q}=[\Gamma]^{(p,q)}$ for all $0\leq p,q\leq N$, hence $\Gamma_P=\Gamma$ and \textit{(1)} holds true.

It therefore only remains to show that \textit{(2)} implies \textit{(3)}. We denote as usual by $G_{k\ell}$ the matrix elements of $\Gamma$ and note that, by \eqref{eq:triangular_system}, 
$$[\Gamma]^{(1)}=\sum_{k=1}^N[G_{kk}]^{(1)}.$$
Our assumption that $P[\Gamma]^{(1)}P=[\Gamma]^{(1)}$ implies that 
\begin{equation}
P[G_{kk}]^{(1)}P=[G_{kk}]^{(1)}\qquad \text{for all $k=1,...,N$.}
\label{eq:DM1_vanish}
\end{equation}
 Indeed, we have $P^\perp[\Gamma]^{(1)}P^\perp=0=\sum_{k=1}^NP^\perp[G_{kk}]^{(1)}P^\perp$ where $P^\perp=1-P$. Since $[G_{kk}]^{(1)}\geq0$ for all $k=1,...,N$, this implies that $P^\perp[G_{kk}]^{(1)}P^\perp=0$. Now \eqref{eq:DM1_vanish} follows for instance from the fact that
\begin{equation}
\left(P^\perp [G_{kk}]^{(1)}\right)\left(P^\perp [G_{kk}]^{(1)}\right)^*\leq \norm{[G_{kk}]^{(1)}}\, P^\perp[G_{kk}]^{(1)}P^\perp=0
\end{equation}
which shows that $P^\perp [G_{kk}]^{(1)}= [G_{kk}]^{(1)}P^\perp=0$. 

We now prove that \eqref{eq:DM1_vanish} implies that $P^{\otimes k}G_{kk}P^{\otimes k}=G_{kk}$. We have for any $P_2,...,P_k\in\{P,P^\perp\}$,
\begin{multline*}
\tr_{\gH^k}\left(P^\perp\otimes P_2\otimes\cdots P_k\, G_{kk}\, P^\perp\otimes P_2\otimes\cdots P_k\right)\\
\leq  \tr_{\gH^k}\left(P^\perp\otimes 1\otimes\cdots 1\, G_{kk}\, P^\perp\otimes 1\otimes\cdots 1\right)=\frac{1}{k}\tr_{\gH^1}\left(P^\perp [G_{kk}]^{(1)}\right)=0,
\end{multline*}
by \eqref{eq:DM_N_body}. The argument is the same if $P^\perp$ is not in the first place of the tensor product but appears at another position.
Arguing as before, this implies $P^{\otimes k}G_{kk}P^{\otimes k}=G_{kk}$. For the off-diagonal terms, we have 
$G_{k\ell}G_{\ell k}\leq G_{kk}$
since $0\leq \Gamma\leq1$. This can be used to show that $P^\perp\otimes P_2\otimes\cdots P_k\, G_{k\ell}=0$, hence $P^{\otimes k}G_{k\ell}P^{\otimes\ell}=G_{k\ell}$.
\end{proof}

We now use the previous concept to define finite-rank states.

\begin{definition}[Finite rank states]\label{def:finite_rank}
A state $\Gamma\in\cS(\cFN)$ is said to \emph{have a finite rank} when it lives on a subspace of finite dimension, that is when there exists a projector $P$ of finite rank such that $\Gamma_P=\Gamma$. The \emph{rank of $\Gamma$} is then defined as
$$\rank(\Gamma)=\min\{{\rm rank}(P)\ :\ \Gamma_P=\Gamma,\ P^2=P=P^*\}=\rank\left([\Gamma]^{(1)}\right).$$
The last equality follows from Lemma~\ref{lem:Lowdin}.
\end{definition}

\begin{example}[Coherent, Hartree and Hartree-Fock states]
For bosons, both the Hartree state $|\phi^{\otimes N}\rangle$ and the coherent state $W(f)|\Omega\rangle$ have rank $r=1$. For fermions, a pure Hartree-Fock state $\phi_1\wedge\cdots \wedge\phi_N$ has rank $r=N$.\dqed
\end{example}

The following says that finite-rank states are dense in $\cS(\cFN)$.

\begin{lemma}[Approximation by finite-rank states]
Any state $\Gamma\in\cS(\cFN)$ is a strong limit of finite rank states.
\end{lemma}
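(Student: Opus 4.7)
The plan is to produce the approximating sequence by geometric localization onto a growing sequence of finite-dimensional subspaces. Since $\gH^1$ is separable, I would first fix any orthonormal basis $\{e_i\}_{i\geq1}$ of $\gH^1$ and set $P_n$ to be the orthogonal projection onto $\mathrm{span}(e_1,\ldots,e_n)$. Each $P_n$ is self-adjoint with $0\leq P_nP_n^*=P_n\leq 1$, and one has $P_n\to 1$ strongly on $\gH^1$ (and likewise $P_n^*=P_n\to 1$).

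Given $\Gamma\in\cS(\cFN)$, the candidate approximating sequence is then the sequence of localizations $\Gamma_{P_n}\in\cS(\cFN)$ provided by Proposition~\ref{def:localization}. Two things need to be checked: that each $\Gamma_{P_n}$ is of finite rank in the sense of Definition~\ref{def:finite_rank}, and that $\Gamma_{P_n}\to\Gamma$ strongly in $\gS_1(\cFN)$.

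For the first point, I would invoke Remark~\ref{rmk:localized_AB}: since $P_n^2=P_n$, one has $(\Gamma_{P_n})_{P_n}=\Gamma_{P_nP_n}=\Gamma_{P_n}$, so $\Gamma_{P_n}$ lives on the finite-dimensional subspace $P_n\gH^1$ and hence has rank at most $n$. Equivalently, $[\Gamma_{P_n}]^{(1)}=P_n[\Gamma]^{(1)}P_n$ has rank at most $n$, which is the rank characterization in Definition~\ref{def:finite_rank}.

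For the second point, the strong convergence $\Gamma_{P_n}\to\Gamma$ in the trace norm of $\gS_1(\cFN)$ is precisely the conclusion of Lemma~\ref{lem:approx_localized} applied to $B_n=P_n$ and $B=1$ (with $\Gamma_1=\Gamma$). So there is no real obstacle here; the work has already been done in Lemma~\ref{lem:approx_localized}. The only subtle ingredient is that the definition of ``finite rank'' used here is the one based on $\rank([\Gamma]^{(1)})$, which is why the identification $[\Gamma_{P_n}]^{(1)}=P_n[\Gamma]^{(1)}P_n$ (from \eqref{eq:def_localization_DM} with $p=q=1$) is crucial to see that $\Gamma_{P_n}$ genuinely falls within the class of finite-rank states.
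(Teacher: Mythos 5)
Your proposal is correct and follows exactly the paper's own argument: localize $\Gamma$ by the projections $P_n$ onto the span of the first $n$ basis vectors, use Lemma~\ref{lem:approx_localized} for the strong convergence $\Gamma_{P_n}\to\Gamma$, and Remark~\ref{rmk:localized_AB} to see that $(\Gamma_{P_n})_{P_n}=\Gamma_{P_n}$ has finite rank. No gaps.
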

\begin{proof}
Let $\{\phi_j\}$ be an orthonormal basis of $\gH$ and $P_n:=\sum_{j=1}^n|\phi_j\rangle\langle\phi_j|$. Then $P_n\to1$ strongly in $\gH$. Therefore by Lemma~\ref{lem:approx_localized}, it holds $\Gamma_{P_n}\to\Gamma$ strongly. But $\Gamma_{P_n}$ has finite rank since $\big(\Gamma_{P_n}\big)_{P_n}=\Gamma_{(P_n)^2}=\Gamma_{P_n}$ by Remark~\ref{rmk:localized_AB}.
\end{proof}

We now show that any state of finite rank is a finite linear combination of monomials in the creation and annihilation operators.

\begin{lemma}[Expansion of finite rank states]\label{lem:Lowdin2}
Assume that $\Gamma_P=\Gamma$ for some orthogonal projector $P=\sum_{j=1}^r|\phi_j\rangle\langle\phi_j|$ of finite rank $r$, and let $(G_{k\ell})_{1\leq k,\ell\leq N}$ be the matrix elements of $\Gamma$. Then each $G_{k\ell}$ can be expanded as follows:
\begin{equation}
G_{k\ell}=\sum_{\substack{I=\{i_1\leq\cdots\leq i_k\}\subset\{1,...,r\}\\ J=\{j_1\leq\cdots\leq j_\ell\}\subset \{1,...,r\}}}c_{IJ}\;a^\dagger(\phi_{i_1})\cdots a^\dagger(\phi_{i_k})|\Omega\rangle\langle\Omega|a(\phi_{j_\ell})\cdots a(\phi_{j_1})
\end{equation}
for some $c_{IJ}\in\C$.
\end{lemma}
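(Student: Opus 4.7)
The proof would be essentially an unfolding of the assumption $\Gamma_P=\Gamma$ combined with the explicit structure of the (anti)symmetric tensor product of $P\gH$.

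First, I would invoke Lemma~\ref{lem:Lowdin}: since $\Gamma$ lives on $P\gH$, we have $\mathbb{P}\Gamma\mathbb{P}=\Gamma$, which means $P^{\otimes k}G_{k\ell}P^{\otimes\ell}=G_{k\ell}$ for every $0\leq k,\ell\leq N$. Consequently $G_{k\ell}$ is an operator whose range is contained in $P^{\otimes k}\gH^k_{a/s}$ and whose ``corange'' is contained in $P^{\otimes\ell}\gH^\ell_{a/s}$, both of which are finite-dimensional spaces (of dimension at most $\binom{r}{k}$ for fermions or $\binom{r+k-1}{k}$ for bosons).

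Next I would identify explicit bases of these finite-dimensional subspaces. Since $\{\phi_1,\ldots,\phi_r\}$ is an orthonormal basis of $P\gH$, the family
\begin{equation*}
\{\phi_{i_1}\circ\cdots\circ\phi_{i_k}\,:\, i_1\leq\cdots\leq i_k,\ i_j\in\{1,\ldots,r\}\}
\end{equation*}
spans $P^{\otimes k}\gH^k_{a/s}$ (with $\circ=\wedge$ for fermions and $\circ=\vee$ for bosons), as recalled in Section~\ref{sec:notation}. Therefore $G_{k\ell}$, viewed as a finite-rank operator between these finite-dimensional spaces, admits an expansion
\begin{equation*}
G_{k\ell}=\sum_{I,J}\tilde c_{IJ}\,|\phi_{i_1}\circ\cdots\circ\phi_{i_k}\rangle\langle\phi_{j_1}\circ\cdots\circ\phi_{j_\ell}|
\end{equation*}
for some coefficients $\tilde c_{IJ}\in\C$, where $I=\{i_1\leq\cdots\leq i_k\}$ and $J=\{j_1\leq\cdots\leq j_\ell\}$ range over multi-indices in $\{1,\ldots,r\}$.

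Finally I would translate back to the language of creation and annihilation operators using the identity $a^\dagger(\phi_{i_1})\cdots a^\dagger(\phi_{i_k})\Omega = \phi_{i_1}\circ\cdots\circ\phi_{i_k}$ (up to a combinatorial normalization factor in the bosonic case, which is $1$ for fermions and $\sqrt{\prod_j m_j!}$ for bosons where $m_j$ counts the multiplicity of the index $j$ in $I$). Taking adjoints on the right produces $\langle\Omega|a(\phi_{j_\ell})\cdots a(\phi_{j_1})$, and absorbing the normalization factors into new coefficients $c_{IJ}$ yields the stated expansion. There is no real obstacle here; the only mild bookkeeping point is keeping track of the bosonic normalizations, which was already accounted for in the conventions of Section~\ref{sec:notation} (see the factor $(1+\delta_{ij})(1+\delta_{k\ell})$ appearing in \eqref{eq:def_W_ij}).
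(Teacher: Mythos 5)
Your proposal is correct and follows essentially the same route as the paper, whose entire proof consists of invoking assertion \textit{(3)} of Lemma~\ref{lem:Lowdin} (i.e.\ $P^{\otimes k}G_{k\ell}P^{\otimes\ell}=G_{k\ell}$) and leaving the expansion in the finite basis $\{\phi_{i_1}\circ\cdots\circ\phi_{i_k}\}$ implicit. You have merely spelled out the bookkeeping (finite-dimensionality of $P^{\otimes k}\gH^k_{a/s}$ and the identity $a^\dagger(\phi_{i_1})\cdots a^\dagger(\phi_{i_k})\Omega=\phi_{i_1}\circ\cdots\circ\phi_{i_k}$, with bosonic normalizations absorbed into the coefficients), which is exactly what the paper's one-line proof tacitly relies on.
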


\begin{proof}
This follows from the fact that $P^{\otimes k}G_{k\ell}P^{\otimes \ell}=G_{k\ell}$, see \textit{(3)} in Lemma~\ref{lem:Lowdin}.
\end{proof}

Consider a finite rank state, that is such that $[\Gamma]^{(1)}$ has finite rank $r$ (Lemma~\ref{lem:Lowdin}). Then we can write $[\Gamma]^{(1)}=\sum_{j=1}^r n_j|\phi_j\rangle\langle\phi_j|$ for an orthonormal system $\{\phi_j\}_{j=1}^r$ of eigenvectors of $[\Gamma]^{(1)}$. The $n_j$ are usually called the \emph{occupation numbers} and the $\phi_j$ the \emph{natural orbitals of $\Gamma$}.
Lemma~\ref{lem:Lowdin2} then shows that any finite rank state can be expanded by means of its \emph{natural orbitals}. This is the original version of Löwdin's Expansion Theorem \cite{Lowdin-55a} (see also Lemma 1.1 (ii) in \cite{Friesecke-03} and Lemma 1 in \cite{Lewin-04a}).

The simplest example is that of a state of the form $\Gamma=0\oplus\cdots\oplus|\Psi\rangle\langle\Psi|$, that is a pure $N$-body state. Then if $\rank([\Gamma]^{(1)})\leq r$ and $\{\phi_1,...,\phi_r\}$ is an associated orthonormal system of natural orbitals, it holds
$$\Psi=\sum_{1\leq i_1\leq\cdots\leq i_N\leq r}c_{i_1,...,i_N}\phi_{i_1}\circ\cdots \circ\phi_{i_N}$$
where $\circ=\wedge$ (fermions) or $\vee$ (bosons).

\subsubsection{Geometric properties of finite rank states}\label{sec:geom_prop}
We now turn to properties of finite-rank states with regard to geometric localization and convergence. The following is a simple consequence of the characterization of the rank in terms of the one-body density matrix (Lemma~\ref{lem:Lowdin}).

\begin{lemma}[Localization and geometric limit of finite rank states]\label{lem:loc_geom_finite_rank}

\textit{(1)} If a state $\Gamma\in\cS(\cFN)$ has rank $\leq r$, then for every localization operator $B$, $0\leq BB^*\leq1$, the corresponding localized state $\Gamma_B$ has rank $\leq r$.

\smallskip

\textit{(2)} If $\{\Gamma_n\}$ is a sequence of states on $\cFN$ of rank $\leq r$ and $\Gamma_n\wto_g\Gamma$ geometrically, then $\Gamma$ has rank $\leq r$.
\end{lemma}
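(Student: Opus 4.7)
The plan is to reduce both claims to the corresponding statements about the one-body density matrix, using the identity $\rank(\Gamma) = \rank([\Gamma]^{(1)})$ from Definition~\ref{def:finite_rank}, which itself rests on L\"owdin's criterion (Lemma~\ref{lem:Lowdin}).

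For part \textit{(1)}, the argument I have in mind is immediate: specializing \eqref{eq:def_localization_DM} to $p=q=1$ gives $[\Gamma_B]^{(1)} = B\,[\Gamma]^{(1)}\,B^*$. The range of $B[\Gamma]^{(1)}B^*$ is contained in $B\bigl(\mathrm{Ran}\,[\Gamma]^{(1)}\bigr)$, hence its rank is bounded by $\rank([\Gamma]^{(1)}) \leq r$, and I conclude via Definition~\ref{def:finite_rank}.

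For part \textit{(2)}, geometric convergence together with Lemma~\ref{lem:prop_geom_CV} yields $[\Gamma_n]^{(1)} \wto [\Gamma]^{(1)}$ weakly--$\ast$ in $\gS_1(\gH^1)$, with each $[\Gamma_n]^{(1)}$ of rank at most $r$. The main step is therefore to verify that the class of non-negative self-adjoint trace-class operators of rank $\leq r$ is closed under weak--$\ast$ convergence in $\gS_1$. For this I will rely on the following Gram-determinant characterization: a non-negative self-adjoint operator $A$ has rank $\leq r$ if and only if, for every $(r+1)$-tuple $(\phi_1,\dots,\phi_{r+1}) \in (\gH^1)^{r+1}$, the matrix $M_{ij}:=\langle \phi_i, A\phi_j\rangle$ satisfies $\det M = 0$. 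One direction uses $M_{ij} = \langle A^{1/2}\phi_i, A^{1/2}\phi_j\rangle$, which is the Gram matrix of $r+1$ vectors lying in a space of dimension at most $r$; the other direction picks $r+1$ orthonormal eigenvectors with non-zero eigenvalues to exhibit a diagonal $M$ with non-zero determinant. Since weak--$\ast$ convergence in $\gS_1$ forces $\langle \phi_i, [\Gamma_n]^{(1)}\phi_j\rangle \to \langle \phi_i, [\Gamma]^{(1)}\phi_j\rangle$ (test against the rank-one compact operators $|\phi_j\rangle\langle\phi_i|$) and the determinant is continuous in its entries, the Gram condition survives in the limit. Non-negativity of $[\Gamma]^{(1)}$ is inherited for the same reason, and the characterization then forces $\rank([\Gamma]^{(1)})\leq r$, whence $\rank(\Gamma)\leq r$.

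The only ingredient requiring a moment of thought is the Gram-determinant lemma invoked in \textit{(2)}; once it is in hand, the weak--$\ast$ passage to the limit is automatic, so I do not anticipate any genuine obstacle beyond a clean statement of that characterization.
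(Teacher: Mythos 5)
Your proof is correct and follows essentially the same route as the paper: both reduce the claim to the one-body density matrix via the identity $\rank(\Gamma)=\rank([\Gamma]^{(1)})$ and then use that $\rank(B[\Gamma]^{(1)}B^*)\leq\rank([\Gamma]^{(1)})$ for part \textit{(1)} and the weak--$\ast$ lower semi-continuity of the rank for part \textit{(2)}. The only difference is that you supply, via the Gram-determinant characterization, a detailed justification of the lower semi-continuity that the paper simply asserts; that argument is sound.
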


\begin{proof}
The result follows from the fact that when $\rank([\Gamma]^{(1)})\leq r$, then $\rank\, (B[\Gamma]^{(1)}B^*)\leq r$ for every localization operator $B$. Similarly, when $[\Gamma_n]^{(1)}\wto_\ast [\Gamma]^{(1)}$ weakly--$\ast$ in $\gH$, then $\rank( [\Gamma]^{(1)})\leq \liminf_{n\to\ii}\rank([\Gamma_n]^{(1)})$.
\end{proof}

For $N$-body systems we often have to study sequences of states of the form $\Gamma_n=0\oplus\cdots\oplus|\Psi_n\rangle\langle\Psi_n|$. When $\Gamma_n\wto_g\Gamma$ geometrically and when each $\Gamma_n$ has rank $\leq r$, then we have by Lemma~\ref{lem:loc_geom_finite_rank}, $\Gamma=G_{00}\oplus\cdots\oplus G_{NN}$ where each $G_{kk}$ has rank $\leq r$. A similar property holds for a localized state $\Gamma_B$. This information is unfortunately not enough to be really useful in applications. It is fortunate that this can be precised in the fermionic case, as expressed in the following important result.

\begin{lemma}[Localization of a fermionic $N$-body finite-rank state]\label{lem:localization_MCSCF}
Let $G\in\cS(\gH^N_a)$ be a \emph{fermionic} state of the $N$-body space $\gH^N_a$, of rank $\leq r$, and $\Gamma=0\oplus\cdots\oplus 0\oplus G$ be the corresponding state in $\cFN_a$. Let $B$ be a localization operator, $0\leq BB^*\leq1$, and denote by $\Gamma_B=G_{00}^B\oplus\cdots\oplus G_{NN}^B$ the corresponding localized state in $\cFN_a$. Then each $G_{kk}^B$ belongs to the convex hull of $k$-body states of rank at most $r-N+k$: we have
$$G_{kk}^B=\sum_{j}\alpha^k_j S^k_j$$
with $S^k_j\in\cS(\gH^k_a)$, $\alpha^k_j\geq0$ and
$$\rank(S^k_j)\leq r-N+k.$$
\end{lemma}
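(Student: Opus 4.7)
My plan is to first reduce to the case of a pure state $G=|\Psi\rangle\langle\Psi|$, and then to select a basis of the support of $G$ that is specifically adapted to the localization operator $B$, in such a way that the partial trace defining $G_{kk}^B$ becomes diagonal in a natural combinatorial index.

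By Löwdin's criterion (Lemma~\ref{lem:Lowdin}), $G$ lives on a subspace $\gH'\subset\gH$ with $\dim\gH'\leq r$. Diagonalizing $G=\sum_\alpha\lambda_\alpha|\Psi_\alpha\rangle\langle\Psi_\alpha|$, each $\Psi_\alpha$ belongs to $\bigwedge^N\gH'$. Because the rank of a positive sum of operators whose supports lie in a common subspace $\bigwedge^k W$ is bounded by $\dim W$, it suffices to exhibit a decomposition of each $(|\Psi_\alpha\rangle\langle\Psi_\alpha|)_k^B$ as a nonnegative combination of terms $|\Phi_{\alpha,K}\rangle\langle\Phi_{\alpha,K}|$ with $\Phi_{\alpha,K}\in\bigwedge^k W_K$, using weights and subspaces $W_K\subset\gH$ that do not depend on $\alpha$ and satisfy $\dim W_K\leq r-N+k$.

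The key device is to choose an orthonormal basis $\{\phi_i\}_{i=1}^r$ of $\gH'$ that diagonalizes the compression $P(1-BB^*)P$ of $1-BB^*$ to $\gH'$, where $P$ is the orthogonal projection onto $\gH'$, with eigenvalues $\mu_i\geq0$. Setting $C:=\sqrt{1-BB^*}$, one computes $\langle C\phi_i,C\phi_j\rangle=\mu_i\,\delta_{ij}$. Expanding $\Psi_\alpha$ in the Slater basis $\{\phi_I\}_{|I|=N}$ of $\bigwedge^N\gH'$, rewriting each $\phi_I$ in $\bigwedge^k\gH\otimes\bigwedge^{N-k}\gH$ via the standard decomposition, and applying the operator $B^{\otimes k}\otimes C^{\otimes(N-k)}$, one regroups the resulting sum according to the subset $K\subset\{1,\ldots,r\}$ of size $N-k$ that labels the outside factor, to obtain
\begin{equation*}
\bigl(B^{\otimes k}\otimes C^{\otimes(N-k)}\bigr)\Psi_\alpha=\binom{N}{k}^{-1/2}\sum_{|K|=N-k}\Psi_\alpha^{[K]}\otimes\bigl(C\phi_{k_1}\wedge\cdots\wedge C\phi_{k_{N-k}}\bigr),
\end{equation*}
with $\Psi_\alpha^{[K]}\in\bigwedge^k W_K$ and $W_K:=B\,\mathrm{span}\{\phi_i:i\notin K\}$, so that indeed $\dim W_K\leq r-(N-k)=r-N+k$.

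By construction, the ``outside'' factors $C\phi_{k_1}\wedge\cdots\wedge C\phi_{k_{N-k}}$ are pairwise orthogonal in $\bigwedge^{N-k}\gH$: their Gram matrix is the determinant of a matrix built from $\mu_{k_i}\delta_{k_i,k'_j}$, which has a zero row unless $K=K'$, and equals $\prod_{i\in K}\mu_i$ when $K=K'$. Inserting this into formula~\eqref{eq:localization_N_body} and computing the partial trace thus produces no cross terms, and one obtains
\begin{equation*}
G_{kk}^B=\sum_{|K|=N-k}\Big(\prod_{i\in K}\mu_i\Big)\sum_\alpha\lambda_\alpha\,|\Psi_\alpha^{[K]}\rangle\langle\Psi_\alpha^{[K]}|.
\end{equation*}
Each inner sum is a positive operator supported on $\bigwedge^k W_K$, hence of rank at most $r-N+k$, and normalization yields the claimed decomposition. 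The main subtlety is the basis choice: without the diagonalization of $P(1-BB^*)P$, the partial trace retains off-diagonal terms $|\Psi_\alpha^{[K]}\rangle\langle\Psi_\beta^{[K']}|$ with $K\neq K'$, and any spectral decomposition of the resulting operator would use eigenvectors spread across $\sum_K\bigwedge^k W_K$, giving only the trivial bound $r$; orthogonalizing the outside factors $C\phi_i$ is exactly what forces the partial trace to decouple in $K$ and tightens the bound down to $r-N+k$.
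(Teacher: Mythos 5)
Your proof is correct and follows essentially the same route as the paper: reduce to a pure state, pick an orthonormal basis of the $r$-dimensional support in which the localized ``outside'' orbitals are pairwise orthogonal (the paper diagonalizes $(\pscal{B\phi_i,B\phi_j})_{ij}$, you diagonalize $P(1-BB^*)P$ --- the same thing in the relevant self-adjoint case, and yours is the version directly matched to the Gram matrix appearing in \eqref{eq:localization_N_body}), and then use the fermionic exclusion of repeated indices to see that each diagonal block lives on $\bigwedge^k W_K$ with $\dim W_K\leq r-N+k$. The organization by subsets $K$ rather than by index tuples is only a cosmetic difference.
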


This result does not hold in general for bosons. In Example~\ref{ex:Hartree_loc} we have seen that the localization $\Gamma=G_{00}^B\oplus\cdots\oplus G_{NN}^B$ of a Hartree state $\Gamma=0\oplus\cdots\oplus|\phi^{\otimes N}\rangle\langle\phi^{\otimes N}|$ with rank $r=1$ satisfies $\rank(G^B_{kk})=1$ for all $k=1,...,N$. We now provide the proof of Lemma~\ref{lem:localization_MCSCF}.

\begin{proof}
Since $G$ has rank at most $r$, there exists a projector $P=\sum_{j=1}^r|\phi_j\rangle\langle\phi_j|$ of rank $r$ such that $\Gamma_{P}=\Gamma$.
By linearity we can assume that $G$ is a pure state, that is $G=|\Psi\rangle\langle\Psi|$ where
\begin{equation}
\Psi=\sum_{1\leq i_1,\cdots , i_N\leq r}c_{i_1\cdots i_N}\phi_{i_1}\otimes\cdots\otimes\phi_{i_N}.
\label{eq:expansion_MCSCF} 
\end{equation}
We follow here the notation of \cite{Lewin-04a}: $c_{i_1...i_N}$ reflects the symmetry of the wavefunction, that is $c_{i_{\sigma(1)}...i_{\sigma(N)}}=\epsilon(\sigma)c_{i_1...i_N}$ and $c_{i_1,...,i_N}=0$ as soon as two indices are equal.

We have the freedom to choose any orthonormal basis of the finite-dimensional space $V=\text{span}(\phi_j)=\text{Range}(P)$.
Indeed, if we replace the functions $\phi_j$ by $\phi_j'=\sum_{i=1}^rU_{ij}\phi_j$ for an $r\times r$ unitary matrix $U=(U_{ij})$, then \eqref{eq:expansion_MCSCF} is still valid, with adequately modified configuration coefficients $c'_{i_1,...,i_N}$ (see Formula (12) in \cite{Lewin-04a}). Taking advantage of this gauge freedom, we can diagonalize any matrix of the form $\pscal{\phi_i,M\phi_j}$ for a well-chosen one-body self-adjoint operator $M$. Here we choose $M=B^*B$, that is we work with an orthonormal system for which the $r\times r$ hermitian matrix $(\pscal{B\phi_i,B\phi_j})_{1\leq i,j\leq r}$ is diagonal. In particular we have $\pscal{B\phi_j,B\phi_j}=0$ if $i\neq j$, which dramatically simplifies the expression of $\Gamma_B$. Using Formula \eqref{eq:localization_N_body}, we find that
\begin{equation}
G_{kk}^B={N\choose k}\!\!\sum_{1\leq \ell_{k+1},\cdots, \ell_N\leq r}\!\!\!\left(1-\norm{B\phi_{\ell_{k+1}}}^2\right)\cdots\left(1-\norm{B\phi_{\ell_{N}}}^2\right)|\Psi_{\ell_{k+1}...\ell_N}\rangle\langle \Psi_{\ell_{r+1}...\ell_N}|
\end{equation}
where
\begin{align}
\Psi_{\ell_{k+1}...\ell_N}&=\sum_{i_{1},\cdots, i_k\in\{1,...,r\}}c_{i_1...i_k\ell_{k+1}...\ell_N}\;B\phi_{i_1}\otimes\cdots\otimes B\phi_{i_k}\nonumber\\
&=\sum_{i_{1},\cdots, i_k\in\{1,...,r\}\setminus\{\ell_{k+1},...,\ell_{N}\}}c_{i_1...i_k\ell_{k+1}...\ell_N}\;B\phi_{i_1}\otimes\cdots\otimes B\phi_{i_k}.\label{eq:c_i_vanish}
\end{align}
In \eqref{eq:c_i_vanish}, we have used that, for fermions, $c_{i_1,...,i_N}=0$ when two indices coincide. Clearly, $\Psi_{\ell_ {k+1}...\ell_N}$ has rank $\leq r-N+k$ and the result follows.
\end{proof}

\begin{example}[Localization of pure Hartree-Fock states]\label{ex:localization_HF}
Let $\Psi:=\phi_1\wedge\cdots\wedge\phi_N$ be a pure Hartree-Fock state and $\Gamma:=0\oplus\cdots\oplus|\Psi\rangle\langle\Psi|$ be the corresponding state in $\cFN_a$.
The localization of $\Gamma$ is $\Gamma_B=G_{00}^B\oplus\cdots\oplus G_{NN}^B$, with
\begin{multline}
G^B_{kk}=\sum_{I=\{i_1<\cdots< i_k\}\subset\{1,...,N\}}\\
\left(\prod_{\alpha\in\{1,...,N\}\setminus I}\left(1-\norm{B\phi_{\alpha}}^2\right)\right)\big|B\phi_{i_1}\wedge\cdots\wedge B\phi_{i_k}\big\rangle\big\langle B\phi_{i_1}\wedge\cdots\wedge B\phi_{i_k}\big|,
\end{multline}
assuming that the orbitals have been chosen such as to ensure $\pscal{B\phi_i,B\phi_j}=0$ when $i\neq j$.\dqed
\end{example}

From Lemma~\ref{lem:localization_MCSCF} we can deduce the general form of the geometric limit of fermionic $N$-body finite-rank states.

\begin{lemma}[Geometric limit of fermionic $N$-body finite-rank states]\label{lem:rank_MCSCF}
Let $\Gamma_n=0\oplus\cdots\oplus G^n\in\cFN$ be a sequence of fermionic $N$-body states, with $\rank(\Gamma_n)\leq r$ for all $n$. If $\Gamma_n\wto_g\Gamma=G_{00}\oplus\cdots\oplus G_{NN}$ geometrically, then each $G_{kk}$ belongs to the convex hull of $k$-body states of rank at most $r-N+k$.
\end{lemma}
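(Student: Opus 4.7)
The plan is to reduce to the $N$-body finite-rank setting of Lemma~\ref{lem:localization_MCSCF} by geometrically localizing with finite-rank orthogonal projectors $P_m\to 1_\gH$ strongly, and then to perform two successive limits ($n\to\ii$ for fixed $m$, then $m\to\ii$), the second one being upgraded from weak--$\ast$ to strong by the reciprocal of Fatou's lemma for trace-class operators. Since $\rank(\Gamma_n)\leq r$ implies $\rank(G^n)\leq r$, Lemma~\ref{lem:localization_MCSCF} applied to the $N$-body state $G^n$ with $B=P_m$ provides for every $0\leq k\leq N$ a decomposition of the $k$-sector of $(\Gamma_n)_{P_m}$ of the form $G_{kk}^{n,m}=\sum_{j=1}^{J}\alpha_j^{n,m}\,S_j^{n,m}$, where $\alpha_j^{n,m}\geq0$, each $S_j^{n,m}\in\cS(\gH^k_a)$ satisfies $\rank([S_j^{n,m}]^{(1)})\leq s:=r-N+k$, and the number of summands $J\leq r^{N-k}$ is \emph{uniformly bounded in $n$ and $m$} (by inspection of the explicit formula in the proof of Lemma~\ref{lem:localization_MCSCF}). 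Moreover each $S_j^{n,m}$ is supported in the finite-dimensional subspace $\bigwedge^k\mathrm{Range}(P_m)$.

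For fixed $m$, first I would let $n\to\ii$. Since $P_m$ is compact, Corollary~\ref{cor:compact_localization_operators} gives $(\Gamma_n)_{P_m}\to\Gamma_{P_m}$ strongly in $\gS_1(\cFN)$, in particular $G_{kk}^{n,m}\to G_{kk}^{m}$ strongly in $\gS_1(\gH^k_a)$, where I denote $\Gamma_{P_m}=:G_{00}^m\oplus\cdots\oplus G_{NN}^m$. Because each $S_j^{n,m}$ lives in the fixed finite-dimensional space $\bigwedge^k\mathrm{Range}(P_m)$, on which the set of rank--$\leq s$ states is compact, I may extract a subsequence along which $\alpha_j^{n,m}\to\alpha_j^m\in[0,1]$ and $S_j^{n,m}\to S_j^m$ strongly with $\rank([S_j^m]^{(1)})\leq s$, and pass to the limit in the finite sum to get $G_{kk}^m=\sum_{j=1}^{J}\alpha_j^m\,S_j^m$. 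Next I would let $m\to\ii$. By Lemma~\ref{lem:approx_localized}, $\Gamma_{P_m}\to\Gamma$ strongly, hence $G_{kk}^m\to G_{kk}$ strongly in $\gS_1(\gH^k_a)$. Setting $T_j^m:=\alpha_j^m S_j^m\geq 0$, I extract a common subsequence (legitimate because $J$ is fixed) along which $\alpha_j^m\to\alpha_j$ and $T_j^m\rightharpoonup T_j$ weakly--$\ast$ in $\gS_1(\gH^k_a)$, whence $\sum_j T_j=G_{kk}$. Fatou for the trace gives $\tr(T_j)\leq\alpha_j$, and the chain of equalities $\sum_j\tr(T_j)=\tr(G_{kk})=\lim_m\sum_j\alpha_j^m=\sum_j\alpha_j$ forces $\tr(T_j)=\alpha_j$ for every $j$.

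The reciprocal of Fatou's lemma for trace-class operators~\cite{dellAntonio-67,Simon-79} now upgrades $T_j^m\rightharpoonup T_j$ to $T_j^m\to T_j$ strongly in $\gS_1$. For each $j$ with $\alpha_j>0$, $S_j^m=T_j^m/\alpha_j^m$ therefore converges strongly to a state $S_j^\infty:=T_j/\alpha_j\in\cS(\gH^k_a)$; continuity of the density-matrix map (Lemma~\ref{lem:density_matrices}) gives $[S_j^m]^{(1)}\to[S_j^\infty]^{(1)}$ strongly in $\gS_1(\gH)$, and lower semi-continuity of the rank under trace-norm convergence yields $\rank([S_j^\infty]^{(1)})\leq s$. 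The resulting identity $G_{kk}=\sum_{j:\alpha_j>0}\alpha_j\,S_j^\infty$ is the desired decomposition. The main obstacle is the $m\to\ii$ step, where the supporting subspaces $\bigwedge^k\mathrm{Range}(P_m)$ grow to fill the whole of $\gH^k_a$ and the rank constraint is not a priori preserved by weak--$\ast$ limits; what saves the argument is the combination of the uniformly bounded number $J$ of summands (allowing a diagonal extraction) with the exact trace identity (allowing the weak-to-strong upgrade of the $T_j^m$).
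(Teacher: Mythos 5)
Your proof is correct and follows essentially the same route as the paper: localize with finite-rank projectors $P_m$, apply Lemma~\ref{lem:localization_MCSCF} to each $(\Gamma_n)_{P_m}$, pass to the limit $n\to\ii$ via Corollary~\ref{cor:compact_localization_operators} and then $m\to\ii$ via Lemma~\ref{lem:approx_localized}. The only difference is that you make explicit (via the uniform bound on the number of summands and the Fatou/weak-to-strong upgrade) the closedness under strong $\gS_1$-limits of the set of convex combinations of rank-$\leq(r-N+k)$ states, a step the paper asserts with the single phrase ``by strong convergence, we infer that $(\Gamma)_{P_J}$ has the same property.''
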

\begin{proof}
Let $\{\phi_i\}$ be any fixed orthonormal basis of $\gH$ and $P_J:=\sum_{j=1}^J|\phi_j\rangle\langle\phi_j|$ be the projector onto the space spanned by the first $J$ elements of this basis. Since $P_J$ is compact for every fixed $J$, we have by Corollary~\ref{cor:compact_localization_operators}, $(\Gamma_n)_{P_J}\to (\Gamma)_{P_J}$ strongly. By Lemma~\ref{lem:localization_MCSCF}, we know that each $(\Gamma_n)_{P_J}$ can be written in the form $(\Gamma_n)_{P_J}=\oplus_{k=0}^NG^{J,n}_{kk}$, each $G_{kk}^{J,n}$ being a convex combinations of states of rank at most $r-N+k$. By strong convergence, we infer that $(\Gamma)_{P_J}$ has the same property. Now since $(\Gamma)_{P_J}\to\Gamma$ strongly as $J\to\ii$ by Lemma~\ref{lem:approx_localized}, we conclude that $\Gamma$ also satisfies the same property.
\end{proof}

\begin{example}[Geometric limit of pure Hartree-Fock states]\label{ex:geom_limit_HF}
Let $\Psi_n:=\phi_1^n\wedge\cdots\wedge\phi^n_N$ be a pure Hartree-Fock state and $\Gamma_n:=0\oplus\cdots\oplus|\Psi_n\rangle\langle\Psi_n|$ be the corresponding state in $\cFN_a$. We assume that $\phi^n_j\wto\phi_j$ weakly in $\gH$, for $j=1,...,N$. Up to applying an $n$-independent unitary transform $U$ to the $\phi_j^n$'s we may also suppose that $\pscal{\phi_i,\phi_j}=0$ when $i\neq j$. We then have that $\Gamma_n\wto_g G_{00}\oplus\cdots\oplus G_{NN}$ geometrically, with
\begin{equation}
G_{kk}=\sum_{I=\{i_1<\cdots< i_k\}\subset\{1,...,N\}}\left(\prod_{\alpha\in\{1,...,N\}\setminus I}\left(1-\norm{\phi_{\alpha}}^2\right)\right)\big|\phi_{i_1}\wedge\cdots\wedge\phi_{i_k}\big\rangle\big\langle\phi_{i_1}\wedge\cdots\wedge\phi_{i_k}\big|.
\end{equation}
We see that 
\begin{description}
\item[either] there is \textit{strong convergence}, $\phi_j^n\to\phi_j$ for all $j=1,...,N$, hence $G_{kk}=0$ for all $k=0,...,N-1$;
\item[or] \textit{all the particle are lost}, $\phi_j=0$ for all $j=1,...,N$, thus $G_{kk}=0$ for all $k=1,...,N$, that is $\Gamma=|\Omega\rangle\langle\Omega|$; 
\item[or] \textit{not all the particle are lost}, that is $0<\norm{\phi_j}<1$ for at least one $\phi_j$, and there exists $1\leq k\leq N -1$ such that $G_{kk}\neq0$.
\end{description}
Indeed if we assume
 (up to reordering) that $\phi_1,...,\phi_{N_1}\neq0$ but $\phi_{N_1+1}=\cdots=\phi_N=0$, we see that 
$\tr(G_{N_1\,N_1})\geq \prod_{j=1}^{N_1}\norm{\phi_j}^2>0$. The fact that we cannot have $G_{kk}=0$ for all $k=1,...,N-1$ while both $G_{00}$ and $G_{NN}$ are $\neq0$ will be very useful later in the proof of Theorem~\ref{thm:HF_translation_invariant}.\dqed
\end{example}

\subsection{HVZ-type results for finite-rank many-body systems}
\subsubsection{A general result}

Let us come back to the $N$-body Hamiltonian
$$H^V(N)=\sum_{j=1}^N\left(-\frac{\Delta_{x_j}}{2}+V(x_j)\right)+\sum_{1\leq k\leq \ell\leq N}W(x_k-x_\ell)$$
which we have already introduced in \eqref{def:Hamiltonian_W}. 
As usual we make the assumption that $W$ is even, and that $V$ and $W$ can both be written in the form $\sum_{i=1}^Kf_i$ with $f_i\in L^{p_i}(\R^d)$ where $\max(1,d/2)<p_i<\ii$ or $p_i=\ii$ but $f_i\to0$ at infinity.

For bosons or fermions we may introduce the approximated ground state energy obtained by restricting to finite-rank states:
\begin{equation}
E^V_r(N):=\inf_{\substack{\Psi\in H^1_{a/s}((\R^d)^N)\\ \rank(\Psi)\leq r\\ \norm{\Psi}=1}}\pscal{\Psi,\,H^V(N)\Psi}.
\label{def:E_V_N_R}
\end{equation}
We clearly have $E^V_r(N)\geq E^V(N)$ for all $r$, and $\lim_{r\to\ii}E^V_r(N)=E^V(N)$. 

Let us emphasize that, although the energy functional is the same as for the full linear model, we now have the additional constraint that $\rank(\Psi)\leq r$ which is itself highly nonlinear. Thus the so-obtained Euler-Lagrange equations are themselves nonlinear. If $r=1$, one gets for bosons the Hartree nonlinear equation. For fermions, one obtains the Hartree-Fock equations \cite{LieSim-77,Lions-87} for $r=N$ and the multiconfiguration equations \cite{Friesecke-03,Lewin-04a} for $r>N$.

We are interested here in existence results for ground states by means of geometric methods. The following theorem is a generalization to the nonlinear case of the HVZ Theorem~\ref{thm:HVZ}.

\begin{theorem}[Finite rank HVZ-type result, general case]\label{thm:HVZ_finite_rank_general}
If the following inequalities hold true
\begin{equation}
E_r^V(N)<E^V_r(N-k)+E^0_r(k),\qquad \forall k=1,...,N,
\label{eq:binding_finite_rank_general}
\end{equation}
then all the minimizing sequences $\{\Psi_n\}$ for the variational problem $E^V_r(N)$ are precompact, hence converge, up to a subsequence, to a ground state of rank $\leq r$.

If all the particles are fermions,  \eqref{eq:binding_finite_rank_general} can be replaced by
\begin{equation}
E_r^V(N)<E^V_{r-k}(N-k)+E^0_{r-N+k}(k),\qquad \forall k=1,...,N.
\label{eq:binding_finite_rank_fermions}
\end{equation}
\end{theorem}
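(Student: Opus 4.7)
The plan is to adapt the proof of Theorem~\ref{thm:HVZ} to the finite-rank setting, keeping careful track of ranks through both the localization and the geometric limit. Let $\{\Psi_n\}$ be a minimizing sequence for $E^V_r(N)$ and set $\Gamma_n := 0\oplus\cdots\oplus 0\oplus|\Psi_n\rangle\langle\Psi_n|\in\cS(\cFN)$, with $\rank(\Gamma_n)\leq r$. By Lemma~\ref{lem:compact}, after extraction $\Gamma_n\gto \Gamma = G_{00}\oplus\cdots\oplus G_{NN}$ geometrically. Lemma~\ref{lem:loc_geom_finite_rank} tells me that $\rank(\Gamma)\leq r$; in the fermionic case, Lemma~\ref{lem:rank_MCSCF} gives the stronger information that each $G_{kk}$ lies in the convex hull of $k$-body states of rank at most $r-N+k$.

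I then follow verbatim the localization step of the proof of Theorem~\ref{thm:HVZ}: I pick a smooth $0\leq\chi_R\leq 1$ equal to $1$ on $B(0,R)$ and vanishing off $B(0,2R)$, set $\eta_R=\sqrt{1-\chi_R^2}$, apply the IMS identity~\eqref{eq:IMS} to the kinetic energy and the partition $\chi_R^2+\eta_R^2=1$ to the potential and the interaction. The very same Hoffmann--Ostenhof and H\"older/Sobolev estimates used there yield
\begin{equation*}
\cE^V(\Gamma_n) \;\geq\; \cE^V\bigl((\Gamma_n)_{\chi_R}\bigr) + \cE^0\bigl((\Gamma_n)_{\eta_R}\bigr) + \epsilon_R(n),
\end{equation*}
with $\epsilon_R(n)\to 0$ as first $n\to\ii$ and then $R\to\ii$. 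Writing $(\Gamma_n)_{\chi_R}=\bigoplus_k G^{n,\chi_R}_{kk}$ and $(\Gamma_n)_{\eta_R}=\bigoplus_k G^{n,\eta_R}_{kk}$ and using that localization preserves rank (Lemma~\ref{lem:loc_geom_finite_rank} in general, Lemma~\ref{lem:localization_MCSCF} sector by sector for fermions), I bound each sector by the corresponding finite-rank ground state energy times its trace; the pivotal identity \eqref{eq:relation_trace_localized_N_body_fn}, namely $\tr G^{n,\eta_R}_{k}=\tr G^{n,\chi_R}_{N-k}$, collapses the two sums into
\begin{equation*}
\cE^V\bigl((\Gamma_n)_{\chi_R}\bigr)+\cE^0\bigl((\Gamma_n)_{\eta_R}\bigr)\;\geq\;\sum_{k=0}^N\bigl[E^V_{r(N-k)}(N-k)+E^0_{r(k)}(k)\bigr]\,\tr G^{n,\chi_R}_{N-k,N-k},
\end{equation*}
where $r(k)=r$ in the general case and $r(k)=r-N+k$ in the fermionic case. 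By Lemma~\ref{lem:local_compactness} (Example~\ref{ex:local_compactness}), $(\Gamma_n)_{\chi_R}\to \Gamma_{\chi_R}$ strongly as $n\to\ii$; letting $n\to\ii$ and then $R\to\ii$ (with $\Gamma_{\chi_R}\to\Gamma$ strongly by Lemma~\ref{lem:approx_localized}) yields
\begin{equation*}
E^V_r(N)\;\geq\;\sum_{k=0}^N\bigl[E^V_{r(N-k)}(N-k)+E^0_{r(k)}(k)\bigr]\,\tr_{\gH^{N-k}}G_{N-k,N-k}.
\end{equation*}

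To conclude, the right-hand side is a convex combination (since $\sum_k\tr G_{N-k,N-k}=1$) whose $k=0$ contribution is exactly $E^V_r(N)$; the binding hypothesis \eqref{eq:binding_finite_rank_general}, or its fermionic refinement \eqref{eq:binding_finite_rank_fermions}, makes every $k\geq 1$ term strictly bigger than $E^V_r(N)$. Therefore $\tr G_{N-k,N-k}=0$ for $k\geq 1$ and $\tr G_{NN}=1$, so $\tr(\cN\Gamma)=N=\lim_n\tr(\cN\Gamma_n)$. Lemma~\ref{lem:N_wlsc} then promotes geometric convergence to strong convergence, $\Gamma_n\to\Gamma$ in $\gS_1(\cFN)$, which means $|\Psi_n\rangle\langle\Psi_n|\to G_{NN}$ in trace norm; up to a phase a subsequence of $\Psi_n$ converges in $L^2$ to a ground state of rank $\leq r$. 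The main obstacle is precisely the fermionic bookkeeping: a Slater determinant loses a one-body degree of freedom each time a particle is pulled out (Example~\ref{ex:localization_HF}), and propagating this through both geometric localization and the geometric limit is exactly what Lemmas~\ref{lem:localization_MCSCF}--\ref{lem:rank_MCSCF} deliver; without them only the coarser inequality \eqref{eq:binding_finite_rank_general} is available, as is the case for bosons.
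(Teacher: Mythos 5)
Your proof follows the paper's argument essentially verbatim: the same IMS localization, the same rank bookkeeping via Lemmas~\ref{lem:loc_geom_finite_rank} and~\ref{lem:localization_MCSCF}, the same use of the trace identity \eqref{eq:relation_trace_localized_N_body_fn} to collapse the two localized sums into a convex combination, and the same conclusion that the binding inequalities force $G_{kk}=0$ for $k<N$. The only point you leave implicit is the final upgrade: after $L^2$-convergence one still needs the strong convergence of the interaction term (plus weak lower semi-continuity of the kinetic and potential terms) to conclude that the limit is a minimizer, and then the convergence of the total energies to deduce convergence of the kinetic energies and hence precompactness in $H^1$, as the paper spells out.
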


\begin{proof}
Up to a subsequence we may assume that $\Psi_n\wto\Psi$ weakly in $H^1_{a/s}((\R^d)^N)$, and that the corresponding state $\Gamma_n:=0\oplus\cdots\oplus|\Psi_n\rangle\langle\Psi_n|\in\cS(\cFN)$ converges geometrically to $\Gamma=G_{00}\oplus\cdots\oplus G_{NN}$. If $\norm{\Psi}^2=\tr(G_{NN})=1$ then we have strong convergence $\Gamma_n\to\Gamma$ in $\gS_1(\cFN)$, hence $\Psi_n\to\Psi$ in $L^2$. Under our assumptions on $W$, this can then be used to prove that the two-body term converges strongly:
\begin{multline*}
\lim_{n\to\ii}\sum_{1\leq i<j\leq N}\int_{\R^d}dx_1\cdots\int_{\R^d}dx_N\, W(x_i-x_j)|\Psi_n(x_1,...,x_N)|^2\\
=\sum_{1\leq i<j\leq N}\int_{\R^d}dx_1\cdots\int_{\R^d}dx_N\, W(x_i-x_j)|\Psi(x_1,...,x_N)|^2. 
\end{multline*}
Since the interaction term is the only one which can fail from being weakly lower semi-continuous, we deduce that
$$E^V_r(N)=\lim_{n\to\ii}\cE^V(\Psi_n)\geq \cE^V(\Psi)\geq E^V_r(N),$$
hence that $\Psi$ is a ground state for $E^V(N)$. Finally, strong convergence in $H^1_{a/s}((\R^d)^N)$ is obtained by noting that $\lim_{n\to\ii}\cE^V(\Psi_n)= \cE^V(\Psi)$, hence that the kinetic energy must also converge.

Summarizing the previous paragraph, we only have to prove that $G_{kk}=0$ for all $k=0,...,N-1$. We follow the proof of Theorem~\ref{thm:HVZ}: we localize the system in and outside a ball of radius $R$, by means of a smooth partition of unity, $\chi_R^2+\eta_R^2=1$. In the lower bound corresponding to \eqref{eq:decomp_energy_HVZ}, we may use that each $G^n_{\chi_R,k}$ has rank $\leq r$ by Lemma~\ref{lem:loc_geom_finite_rank} (or rank $\leq r-N+k$ for fermions, by Lemma~\ref{lem:localization_MCSCF}). 
To be more precise, each $G^n_{\chi_R,k}$ can be diagonalized as follows 
$$G^n_{\chi_R,k}=\sum_{j}g_j^{R,k,n}|\Psi^{R,k,n}_j\rangle\langle\Psi^{R,k,n}_j|$$
where $g_j^{R,k,n}\geq0$ and $(P_n)^{\otimes k}\Psi^{R,k,n}_j=\Psi^{R,k,n}_j$ for an orthogonal projector $P_n$ of rank $\leq r$ (or $r+N-k$ for fermions). Saying differently each $G^n_{\chi_R,k}$ is a convex combination of pure states of rank $\leq r$. Hence we have an estimate of the form
$$\tr_{\gH^k}\left(H^V(k)G^n_{\chi_R,k}\right)\geq E^V_r(k)\;\tr_{\gH^k}\left(G^n_{\chi_R,k}\right),$$
with $E^V_r(k)$ replaced by $E^V_{r-N+k}(k)$ for fermions. A similar argument applies to the terms involving $G^n_{\eta_R,k}$.
Taking the limit $n\to\ii$ first and then removing the radius $R$ of the localization, following the proof of Theorem~\ref{thm:HVZ}, we arrive at the following estimate, similar to \eqref{eq:final_estimate_HVZ}:
$${E^V_r(N)\geq \sum_{k=0}^N\left(E^V_r(k)+E^0_r(N-k)\right)\tr_{\gH^{k}}(G_{kk})}$$
(with an obvious modification for fermions).
The term on the right is a convex combination of $E^V_r(N)$ (for $k=N$) and $E^V_r(k)+E^0_r(N-k)$ for $k=0,...,N-1$.
When \eqref{eq:binding_finite_rank_general} holds, this is only possible if $G_{kk}=0$ for all $k=0,...,N-1$.
\end{proof}

Unfortunately Theorem~\ref{thm:HVZ_finite_rank_general} only provides a sufficient condition for the compactness of minimizing sequences. In general we do not expect that \eqref{eq:binding_finite_rank_general} (or \eqref{eq:binding_finite_rank_fermions} for fermions) is also a necessary condition. The reason is that when two systems are placed far away in space, the rank of the whole system becomes the sum of the ranks of the two subsystems. This sum being $2r$ for \eqref{eq:binding_finite_rank_general} and $2r-N$ for \eqref{eq:binding_finite_rank_fermions}, the inequalities \eqref{eq:binding_finite_rank_general} and \eqref{eq:binding_finite_rank_fermions} are not expected to be correct in general when the strict inequality $<$ is replaced by a large inequality $\leq$. It is usually when large inequalities hold true that one can get necessary and sufficient conditions.

In the next section we will give two examples for fermions, due to Friesecke \cite{Friesecke-03}, for which one can reduce \eqref{eq:binding_finite_rank_fermions} to inequalities of the form 
\begin{equation}
 E^V_r(N)< E^V_{r-r'}(N-k)+E^0_{r'}(k),
\label{eq:desired_binding_rank}
\end{equation}
hence providing a necessary and sufficient condition of compactness of minimizing sequences. The case of geometric methods for finite-rank bosonic systems is still largely unexplored.

\subsubsection{Two corollaries for fermions}

We give two corollaries of Theorem~\ref{thm:HVZ_finite_rank_general} in the fermionic case. These two results are contained in a paper \cite{Friesecke-03} of Friesecke (see in particular Corollary 6.1 of \cite{Friesecke-03}), with a proof that is not very much different from our approach. Our formalism automatically takes care of the complicated geometrical methods for finite-rank states which was detailed in \cite{Friesecke-03} (in particular, the reader should compare Friesecke's Lemma 4.1 in \cite{Friesecke-03} with our Lemma~\ref{lem:localization_MCSCF}).

The first result deals with the Hartree-Fock case, corresponding to having rank $r=N$.

\begin{corollary}[Hartree-Fock HVZ-type]\label{cor:HF_V}
Assume that all the particles are fermions, and that $V$ and $W$ satisfy the same assumptions as before. Then the following assertions are equivalent:
\begin{enumerate}
 \item $E^V_N(N)<E^V_{N-k}(N-k)+E^0_k(k)$ for all $k=1,...,N$;

\medskip

\item all the minimizing sequences $\{\Psi_n\}$ for the Hartree-Fock ground state energy $E^V_N(N)$ are precompact in $H^1_a((\R^d)^N)$, hence converge, up to a subsequence, to a minimizer for $E^V_N(N)$.
\end{enumerate}
\end{corollary}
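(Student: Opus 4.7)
The implication $(1)\Rightarrow(2)$ is a direct specialization of Theorem~\ref{thm:HVZ_finite_rank_general} to the fermionic case with $r=N$. Indeed, substituting $r=N$ in the binding condition \eqref{eq:binding_finite_rank_fermions} gives $E_N^V(N)<E_{N-k}^V(N-k)+E_k^0(k)$ for all $k=1,\dots,N$, which is exactly $(1)$. The theorem then yields precompactness in $H^1_a((\R^d)^N)$ of every minimizing sequence for $E_N^V(N)$.

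For the converse $(2)\Rightarrow(1)$, I argue by contraposition. First I establish the complementary large inequality
\[E_N^V(N)\leq E_{N-k}^V(N-k)+E_k^0(k),\qquad k=1,\dots,N,\]
by a splitting trial-state argument. Fix $\epsilon>0$ and choose Slater determinants $\Phi_1=\phi_1\wedge\cdots\wedge\phi_{N-k}\in\gH_a^{N-k}$ and $\Phi_2=\psi_1\wedge\cdots\wedge\psi_k\in\gH_a^k$ with $\pscal{\Phi_1,H^V(N-k)\Phi_1}\leq E_{N-k}^V(N-k)+\epsilon$ and $\pscal{\Phi_2,H^0(k)\Phi_2}\leq E_k^0(k)+\epsilon$. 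Pick a unit vector $\vec{e}\in\R^d$ and set, for $R>0$ large,
\[\Psi_R:=\phi_1\wedge\cdots\wedge\phi_{N-k}\wedge\psi_1(\cdot-R\vec{e})\wedge\cdots\wedge\psi_k(\cdot-R\vec{e}),\]
renormalized via Gram--Schmidt. As $R\to\infty$, the cross inner products $\pscal{\phi_i,\psi_j(\cdot-R\vec{e})}$ tend to $0$, so the Gram matrix of the $N$ orbitals tends to the identity and the renormalization cost is $o(1)$. Using the decay of $V$ and $W$ at infinity, together with the spatial separation of the two orbital packets, the cross one- and two-body terms in $\pscal{\Psi_R,H^V(N)\Psi_R}$ vanish in the limit and
\[\lim_{R\to\infty}\pscal{\Psi_R,H^V(N)\Psi_R}=\pscal{\Phi_1,H^V(N-k)\Phi_1}+\pscal{\Phi_2,H^0(k)\Phi_2}.\]
Since $\Psi_R$ has rank $N$, it is admissible in \eqref{def:E_V_N_R}; letting $R\to\infty$ then $\epsilon\to0$ yields the claimed large inequality.

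Now suppose $(1)$ fails, so that $E_N^V(N)=E_{N-k}^V(N-k)+E_k^0(k)$ for some $k\in\{1,\dots,N\}$. Taking sequences $\epsilon_n\downarrow0$ and $R_n\uparrow\infty$, the construction above yields a sequence of rank-$N$ normalized antisymmetric wavefunctions $\{\Psi_n\}$ with $\pscal{\Psi_n,H^V(N)\Psi_n}\to E_N^V(N)$, i.e., a minimizing sequence for $E_N^V(N)$. However, its one-body density $\rho_{\Psi_n}$ contains a contribution of integrated mass at least $k-o(1)$ concentrating around $R_n\vec{e}$. Hence $\rho_{\Psi_n}$ cannot converge in $L^1(\R^d)$, which by Lemma~\ref{lem:N_wlsc} (applied to the associated states $0\oplus\cdots\oplus|\Psi_n\rangle\langle\Psi_n|$) rules out strong convergence of $\Psi_n$ in $L^2((\R^d)^N)$, and \emph{a fortiori} precompactness in $H^1_a((\R^d)^N)$. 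This contradicts $(2)$ and completes the proof.

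The main obstacle is the splitting construction for the complementary inequality: one must cope with the possible absence of actual minimizers for $E_{N-k}^V(N-k)$ or $E_k^0(k)$ (by using near-minimizers), and one must keep the Gram--Schmidt correction and all cross potential/interaction terms at size $o(1)$ uniformly as $R\to\infty$. This is exactly what the decay hypotheses on $V$ and $W$ provide, via H\"older and Sobolev estimates as already used in the proof of Theorem~\ref{thm:HVZ}.
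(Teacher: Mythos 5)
Your proposal is correct and follows essentially the same route as the paper: the implication $(1)\Rightarrow(2)$ is obtained by specializing Theorem~\ref{thm:HVZ_finite_rank_general} to fermions with $r=N$, and the converse is obtained by proving the large inequality $E^V_N(N)\leq E^V_{N-k}(N-k)+E^0_k(k)$ with a translated-Slater-determinant trial state and noting that equality produces a non-compact minimizing sequence. The only cosmetic difference is that the paper truncates the orbitals to compact supports so that the two packets are exactly orthogonal for large separation, whereas you keep general near-minimizers and absorb the overlap via a Gram--Schmidt correction of size $o(1)$; both are valid.
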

\begin{proof}
The implication \textit{(1)}$\Rightarrow$\textit{(2)} follows from Theorem~\ref{thm:HVZ_finite_rank_general} in the fermionic case, with $r=N$. To prove the converse inequality we first notice that it always holds $E^V_N(N)\leq E^V_{N-k}(N-k)+E^0_k(k)$ for all $k=1,...,N$. This is easily seen by taking a trial function of the form
\begin{equation}
\Psi_n=\Psi^1\wedge \Psi^2(\cdot-n\vec{v})
\label{eq:form_Psi_n_non_compact}
\end{equation}
where $\vec{v}\in\R^d\setminus\{0\}$, $\Psi^1=\phi_1\wedge\cdots\wedge\phi_{N-k}$ and $\Psi^2=\phi_{N-k+1}\wedge\cdots\wedge\phi_N$ are trial functions for, respectively, the problems $E^V_{N-k}(N-k)$ and $E^0_k(k)$. For simplicity one can take all the $\phi_j$'s of compact support. If there is equality $E^V_N(N)= E^V_{N-k}(N-k)+E^0_k(k)$ for some $k\in\{1,...,N\}$, then a minimizing sequence for $E^V_N(N)$ of the same form as \eqref{eq:form_Psi_n_non_compact} can be constructed and it is clearly not compact. This shows the converse implication \textit{(2)}$\Rightarrow$\textit{(1)}.
\end{proof}

There are now many different proofs for the existence of ground states in Hartree-Fock theory. For atoms and molecules, the first is due to Lieb and Simon \cite{LieSim-77}. An approach based on a second-order Palais-Smale information was proposed later by Lions \cite{Lions-87}. These two methods rely on a formulation of the problem in terms of the $N$ orbitals $\phi_1,...,\phi_N$ of the Hartree-Fock state as well as on the assumption that $W\geq0$. A different approach due to Lieb \cite{Lieb-81} (see also \cite{Bach-92,BacLieSol-94,BacLieLosSol-94}) uses generalized Hartree-Fock states and the fact that, when $W\geq0$, a generalized ground state is necessarily a pure state. In this formulation the minimization problem is expressed using as main variable the one-body density matrix $[\Gamma]^{(1)}$ which completely characterizes the Hartree-Fock state. When $W$ is not positive, it cannot be guaranteed that a generalized ground state is necessarily a pure state, and Lieb's variational principle of \cite{Lieb-81} cannot be employed.

Our approach here (due first to Friesecke \cite{Friesecke-03}) is completely different and it is based on geometric properties of $N$-body Hartree-Fock states. It leads to \emph{quantized inequalities} of the form of that of Corollary~\ref{cor:HF_V}, without any assumption on the sign of $W$.

Of course, the next step when studying a specific model is to prove that the binding inequality holds true. As explained by Friesecke in \cite{Friesecke-03}, this can be done by induction: using that there exist ground states for the problems with $k$ particles ($1\leq k<N$), one tries to prove by a convenient trial state that $E^V_N(N)<E^V_{N-k}(N-k)+E^0_k(k)$, showing the existence of a ground state for $E^V_N(N)$. For atoms and molecules, this argument can be carried over as soon as $N-1<Z$, where $Z$ is the total charge of the nuclei.

\medskip

Our second application of Theorem~\ref{thm:HVZ_finite_rank_general} in the fermionic case is the multiconfiguration case $N\leq r$ for \emph{repulsive} interactions. 

\begin{corollary}[Multiconfigurational HVZ-type in the repulsive case]
We assume that all the particles are fermions, that $V$ and $W$ satisfy the same assumptions as before and, additionally, that $W\geq0$. For every $r\geq N$, the following two assertions are equivalent:
\begin{enumerate}
\item $E^V_r(N)<E^V_{r-1}(N-1)$;

\medskip

\item all the minimizing sequences $\{\Psi_n\}$ for $E^V_r(N)$ are precompact in $H^1_a((\R^d)^N)$, hence converge, up to a subsequence, to a minimizer for $E^V_r(N)$.
\end{enumerate}
\end{corollary}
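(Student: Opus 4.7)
The plan is to apply Theorem~\ref{thm:HVZ_finite_rank_general} in its fermionic form after dramatically simplifying its hypotheses in the repulsive case, and to rely on a single trial-state construction that simultaneously yields the upper bound $E^V_r(N) \leq E^V_{r-1}(N-1)$ and, if equality holds, an explicit non-compact minimizing sequence. Two preliminary facts are needed. First, $W \geq 0$ forces $E^0_s(k) = 0$ for every $s \geq k$: by Theorem~\ref{thm:HVZ_wlsc} one has $E^0(k) = 0$ hence $E^0_s(k) \geq 0$, and the reverse inequality follows from the Slater determinant built from $k$ widely separated spread-out wave packets $\sigma^{-d/2}\eta((\,\cdot\, - x_j)/\sigma)$, whose kinetic and two-body energies vanish as $\sigma \to \infty$ and $|x_i-x_j|/\sigma \to \infty$. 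Second, one has the monotonicity $E^V_{r-1}(N-1) \leq E^V_{r-k}(N-k)$ for every $k=1,\ldots,N$, obtained by antisymmetrizing a near-minimizer $\Phi$ of the right-hand side (of rank $\leq r-k$, with $N-k$ particles) with $k-1$ additional spread-out orbitals placed mutually far apart and far from the $L^2$-mass of $\Phi$; the decay of $V$, $W$ and the smallness of the extra kinetic energy make the energy correction vanish while the rank grows by exactly $k-1$.

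With these two ingredients, direction $(1)\Rightarrow(2)$ reduces cleanly to Theorem~\ref{thm:HVZ_finite_rank_general}. For $1 \leq k \leq N-1$ the binding inequality $E^V_r(N) < E^V_{r-k}(N-k) + E^0_{r-N+k}(k)$ simplifies, since $r-N+k \geq k$, to $E^V_r(N) < E^V_{r-k}(N-k)$, which follows from the hypothesis $E^V_r(N) < E^V_{r-1}(N-1)$ combined with the monotonicity. For $k=N$ it simplifies to $E^V_r(N) < 0$, which follows from the chain $E^V_r(N) < E^V_{r-1}(N-1) \leq E^V_{r-N}(0) = 0$.

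For the converse $(2)\Rightarrow(1)$, the $k=1$ case of the trial-state construction already gives the unconditional bound $E^V_r(N) \leq E^V_{r-1}(N-1)$. Assuming equality holds, I would pick $\epsilon_m$-minimizers $\Phi^{(m)}$ for $E^V_{r-1}(N-1)$ of rank $\leq r-1$, a fixed profile $\eta$ and spread-out orbitals $\psi_m(x) = m^{-d/2}\eta(x/m)$, and translation vectors $y_m$ going to infinity fast enough. After a mild Gram--Schmidt correction orthogonalizing $\psi_m(\cdot - y_m)$ against the natural orbitals of $\Phi^{(m)}$, the antisymmetrized state $\Psi_m := \Phi^{(m)} \wedge \psi_m(\cdot - y_m)$ has rank $\leq r$, unit norm, and energy converging to $E^V_{r-1}(N-1) = E^V_r(N)$. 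It is however not precompact: the associated state $0\oplus\cdots\oplus|\Psi_m\rangle\langle\Psi_m| \in \cS(\cFN)$ loses one particle at infinity, so its geometric limit has average particle number at most $N-1$, and Lemma~\ref{lem:N_wlsc} rules out strong convergence, contradicting $(2)$.

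The main obstacle in this plan lies in the trial-state construction: one must orthogonalize the added orbitals against those of the non-compactly supported near-minimizer $\Phi^{(m)}$ without inflating the rank, and simultaneously show that all interaction cross terms as well as the Gram--Schmidt remainder can be made arbitrarily small. This is the only place where the decay assumptions on $V$ and $W$ and the sharp rank bookkeeping interact delicately; everything else is a bookkeeping exercise on top of Theorem~\ref{thm:HVZ_finite_rank_general} and the preceding geometric machinery.
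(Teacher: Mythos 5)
Your proposal is correct and follows essentially the same route as the paper: the paper's own proof simply invokes Theorem~\ref{thm:HVZ_finite_rank_general} together with the facts that $E^0_{r-N+k}(k)=0$ when $W\geq0$ and that $\inf_{k}\{E^V_{r-k}(N-k)\}=E^V_{r-1}(N-1)$, and handles the converse by the same wedge-product trial state $\Psi^1\wedge\Psi^2(\cdot-n\vec v)$ you describe. The only difference is that you spell out the monotonicity $E^V_{r-1}(N-1)\leq E^V_{r-k}(N-k)$ and the vanishing of $E^0_s(k)$, which the paper asserts without detail.
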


The reason why we restrict to $W\geq0$ is because it then holds
$$E^0_{r-N+k}(k)=E^0_{k}(k)=0.$$
Hence if we insert this in \eqref{eq:binding_finite_rank_fermions} we are left with an inequality of the form of \eqref{eq:desired_binding_rank}.
It is still an open question to understand the geometric behavior of multiconfiguration methods for non-repulsive interaction potentials (see, in particular, the comments on page 56 of \cite{Friesecke-03}).

\begin{proof}
The proof follows that of Corollary~\ref{cor:HF_V}, using that $E^0_{r-N+k}(k)=0$ since $W\geq0$, and that $\inf_{k=1,...,N}\{E^V_{r-k}(N-k)\}=E^V_{r-1}(N-1)$.
\end{proof}

Again for atoms and molecules, one can prove by induction the existence of a ground state as soon as $N<Z+1$, see \cite{Friesecke-03}.

\subsubsection{Translation-invariant Hartree-Fock theory}

In this subsection we study a translation-invariant Hartree-Fock model, that is we assume that $V=0$. It is known that (by translation-invariance) the $N$-body Hamiltonian $H^0(N)$ never has any ground state, but it can happen that there is one when restricting to Hartree-Fock states. Of course translation-invariance is not really broken: minimizers are not unique as they can be translated anywhere in space and it is the whole set of minimizers which is invariant under translations.

Because of the action of the group of translations it can only be hoped to prove compactness of all minimizing sequences \emph{up to translations}. 

\begin{theorem}[Translation-invariant Hartree-Fock]\label{thm:HF_translation_invariant}
We assume that $W$ satisfies the same assumptions as before (but $W$ need not be non-negative). Then for all $N\geq2$, the following assertions are equivalent:
\begin{enumerate}
\item $E^0_N(N)<E^0_{N-k}(N-k)+E^0_k(k)$ for all $k=1,...,N-1$;

\medskip

\item all the minimizing sequences $\{\Psi_n\}$ for $E^0_N(N)$ are precompact in $H^1_a((\R^d)^N)$ up to translations. Hence there exists $\{v_n\}\subset \R^d$ such that $\Psi_n(\cdot-v_n)$ converges, up to a subsequence, to a Hartree-Fock minimizer for $E^0_N(N)$.
\end{enumerate}
\end{theorem}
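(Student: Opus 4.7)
The implication (2)$\Rightarrow$(1) follows from the usual trial-state construction: if equality $E^0_N(N)=E^0_{N-k}(N-k)+E^0_k(k)$ held for some $k\in\{1,\dots,N-1\}$, I could build a minimizing sequence by placing a Hartree-Fock ground state on $N-k$ particles next to a translate (by $n\vec{v}$, $|\vec{v}|>0$) of one on $k$ particles and antisymmetrising; such a sequence admits no $L^2$-compact subsequence modulo any single family of translations, since the two clumps drift infinitely far apart from each other.

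For (1)$\Rightarrow$(2), the plan is to combine a Lions-type concentration argument with the Hartree-Fock HVZ reasoning that already underlies Theorem~\ref{thm:HVZ_finite_rank_general}. First, I would rule out vanishing of the density: the binding inequality with $k=1$ gives
\begin{equation*}
E^0_N(N)<E^0_{N-1}(N-1)+E^0_1(1)=E^0_{N-1}(N-1)\leq 0,
\end{equation*}
the last bound coming from a widely-separated Slater trial state. On the other hand, $\sqrt{\rho_{\Psi_n}}$ is bounded in $H^1(\R^d)$ by Hoffmann-Ostenhof, so if $\sup_{v\in\R^d}\int_{B(v,R)}\rho_{\Psi_n}\to 0$ for every $R>0$, then Lions' classical lemma yields $\rho_{\Psi_n}\to 0$ strongly in $L^q(\R^d)$ for each $q$ strictly between $1$ and $d/(d-2)$; running the same estimate as in the continuity step of the proof of Lemma~\ref{lem:wlsc}, together with the Hartree-Fock identity $[\Gamma_n]^{(2)}(x,y;x,y)=\rho_{\Psi_n}(x)\rho_{\Psi_n}(y)-|\gamma_n(x,y)|^2$, shows that the two-body interaction term vanishes in the limit, hence $\liminf\cE^0(\Psi_n)\geq 0$, contradiction. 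Therefore there exist $\epsilon,R>0$ and translates $v_n\in\R^d$ with $\tilde\Psi_n:=\Psi_n(\cdot-v_n)$ satisfying $\int_{B(0,R)}\rho_{\tilde\Psi_n}\geq \epsilon$.

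I would then write $\tilde\Psi_n=\tilde\phi_1^n\wedge\cdots\wedge\tilde\phi_N^n$ and, after applying an $n$-independent unitary rotation inside the $N$-dimensional span so as to diagonalise the Gram matrix of the (subsequential) weak limits, assume $\tilde\phi_j^n\wto\phi_j$ with $\pscal{\phi_i,\phi_j}=0$ for $i\neq j$ and at least one $\phi_j\neq 0$, following Example~\ref{ex:geom_limit_HF}. Setting $\tilde\Gamma_n=0\oplus\cdots\oplus|\tilde\Psi_n\rangle\langle\tilde\Psi_n|$ and extracting via Lemma~\ref{lem:compact}, I get $\tilde\Gamma_n\wto_g\Gamma=G_{00}\oplus\cdots\oplus G_{NN}$ geometrically. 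Replaying the IMS localisation of the proof of Theorem~\ref{thm:HVZ_finite_rank_general} with $n$-dependent radii, and using Lemma~\ref{lem:localization_MCSCF} which tells us that the $k$-particle sector of a localised pure fermionic rank-$N$ state is a convex combination of pure states of rank at most $k$, I arrive at
\begin{equation*}
E^0_N(N)\;\geq\;\sum_{k=0}^{N}\bigl(E^0_k(k)+E^0_{N-k}(N-k)\bigr)\,\tr_{\gH^k_a}(G_{kk}).
\end{equation*}
Both endpoint terms $k=0$ and $k=N$ contribute exactly $E^0_N(N)$; since $\sum_{k=0}^{N}\tr(G_{kk})=1$, the strict binding inequalities (1) force $\tr(G_{kk})=0$ for every intermediate $k\in\{1,\dots,N-1\}$.

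To finish, I would invoke the trichotomy of Example~\ref{ex:geom_limit_HF}: if $N_1:=\#\{j:\phi_j\neq 0\}$ were in $\{1,\dots,N-1\}$ then $G_{N_1,N_1}\neq 0$, contradicting the previous step; since some $\phi_j\neq 0$, the only surviving alternative is $N_1=N$ with every $\norm{\phi_j}=1$. By the reciprocal of Fatou's lemma $\tilde\phi_j^n\to\phi_j$ strongly in $L^2$, whence $\tilde\Psi_n\to\Psi:=\phi_1\wedge\cdots\wedge\phi_N$ strongly in $L^2$, and strong $H^1$ convergence then follows from $\lim \cE^0(\tilde\Psi_n)=\cE^0(\Psi)=E^0_N(N)$ together with the lower semi-continuity of the kinetic and interaction parts. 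The main obstacle is the no-vanishing step: translation invariance precludes any direct appeal to local compactness, and the binding inequalities only range over $k\leq N-1$, so the argument crucially exploits that just one of them already forces $E^0_N(N)<0$, which is the feature that distinguishes the translation-invariant Hartree-Fock HVZ from its linear analogue.
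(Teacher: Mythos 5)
Your proposal is correct and follows essentially the same route as the paper: rule out vanishing via the binding inequality forcing $E^0_N(N)<0$ together with the Lions-type vanishing lemma and the Hartree--Fock bound on the interaction, translate and pass to a geometric limit, run the IMS localization of Theorem~\ref{thm:HVZ_finite_rank_general} to obtain the convex-combination lower bound that kills the intermediate sectors, and conclude with the trichotomy of Example~\ref{ex:geom_limit_HF}. The only cosmetic difference is your mention of $n$-dependent localization radii, whereas the paper uses a fixed radius $R$ followed by $R\to\ii$ here.
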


The notation $\Psi_n(\cdot-v_n)$ is interpreted in the sense of $(x_1,...,x_N)\mapsto \Psi_n(x_1-v_n,...,x_N-v_n)$. A result of the same kind was shown for the first time by Lenzmann and the author in \cite{LenLew-10}, for a model of neutron stars with a pseudo-relativistic kinetic energy and the gravitational Newton interaction. The pseudo-relativistic kinetic energy yields new difficulties concerning boundedness from below of the energy and localization errors (see Lemma A.1 in \cite{LenLew-10}). For nonrelativistic systems one easily arrives at the following result:

\begin{corollary}[Nonrelativistic Newtonian Hartree-Fock systems]
Assume that all the particles are fermions, that $d=3$ and $W(x-y)=-g/|x-y|$ with $g>0$. Then $E^0_N(N)$ has a Hartree-Fock ground state for all $N\geq2$ (hence infinitely many by translation-invariance).
\end{corollary}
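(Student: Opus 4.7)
The plan is to invoke Theorem~\ref{thm:HF_translation_invariant} and establish, by induction on $N\geq 2$, the strict binding inequalities
$$E^0_N(N) < E^0_{N-k}(N-k) + E^0_k(k), \qquad k=1,\dots,N-1.$$
First I would verify that $W(x)=-g/|x|$ satisfies the hypotheses of Theorem~\ref{thm:HF_translation_invariant}: the splitting $W = W\1_{|x|\leq 1} + W\1_{|x|>1}$ places the first piece in $L^2(\R^3)$ and the second in $L^\infty$ with $W\1_{|x|>1}(x) \to 0$ as $|x|\to\infty$. Boundedness from below of $\cE^0$ follows from Hardy's inequality controlling the Coulomb attraction by the kinetic energy.

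For the base case $N=2$, since $E^0_1(1)=0$ I only need $E^0_2(2)<0$, which I would obtain from a scaled antisymmetric trial state $\Psi_\lambda := \lambda^3(\phi_1\wedge\phi_2)(\lambda\cdot)$ with $\phi_1,\phi_2$ a fixed orthonormal pair in $H^1(\R^3)$: the Hartree--Fock energy takes the form $a\lambda^2 - bg\lambda$ with $a,b>0$, whose minimum over $\lambda>0$ is strictly negative. For the inductive step when $2\leq k \leq N-2$, the induction hypothesis together with Theorem~\ref{thm:HF_translation_invariant} supplies HF minimizers $\Psi_k,\Psi_{N-k}$, both at strictly negative energy. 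I would then take the trial state $\Psi_{N-k}\wedge\Psi_k(\cdot - v)$ for $v\in\R^3$ with $|v|$ large: the intra-cluster energies sum to $E^0_{N-k}(N-k)+E^0_k(k)$, the inter-cluster Hartree contribution has the attractive tail $\sim -g(N-k)k/|v|$ by Newton's shell theorem, and the inter-cluster exchange term decays exponentially. Provided these error terms can be controlled, the strict inequality follows for $|v|$ large.

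The delicate case is $k=1$, where I must show $E^0_N(N)<E^0_{N-1}(N-1)$ although there is no single-particle ground state to attach. I would instead use $\Psi_{N-1}\wedge\phi_\lambda(\cdot - v)$ with $\phi_\lambda(x) := \lambda^{3/2}\phi(\lambda x)$ a rescaled bump: the added kinetic cost is $O(\lambda^2)$ while the attractive Hartree coupling to the bound cluster contributes $\sim -g(N-1)/|v|$ once $|v|\gg 1/\lambda$. Balancing $\lambda \sim |v|^{-2/3}$ with $|v|$ large, the attractive tail of order $1/|v|$ dominates the kinetic cost of order $|v|^{-4/3}$, yielding the strict decrease.

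The main technical obstacle throughout will be ensuring that the inter-cluster exchange and orthogonalization corrections are negligible compared with the $-1/|v|$ Hartree tail; this reduces to the exponential decay of HF ground-state orbitals at strictly negative single-particle eigenvalues, which follows from Agmon-type estimates applied to the nonlinear HF Euler--Lagrange system satisfied by the minimizers constructed at each earlier inductive step. Once the three cases above are assembled, the inductive argument closes and Theorem~\ref{thm:HF_translation_invariant} delivers a Hartree--Fock ground state for $E^0_N(N)$ for every $N\geq 2$.
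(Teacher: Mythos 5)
Your proposal is correct and follows essentially the same route as the paper, which disposes of the corollary in one line by invoking Theorem~\ref{thm:HF_translation_invariant} and proving the binding inequalities by induction using Newton's theorem (referring to \cite{LenLew-10} for the details you have spelled out). Your treatment of the $k=1$ case via a spread-out bump at scale $\lambda\sim|v|^{-2/3}$, and your reliance on exponential decay of the lower-order minimizers to control truncation errors against the $-1/|v|$ Coulomb tail, are exactly the ingredients that make that induction close.
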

\begin{proof}
The binding inequality $E^0_N(N)<E^0_{N-k}(N-k)+E^0_k(k)$ can be proved by induction using Newton's theorem, as explained in \cite{LenLew-10}.
\end{proof}

We are now ready to prove Theorem~\ref{thm:HF_translation_invariant}.

\begin{proof}
It was already shown in the proof of Corollary~\ref{cor:HF_V} that $E^0_N(N)\leq E^0_{N-k}(N-k)+E^0_k(k)$ for all $k=1,...,N-1$. Furthermore, if there is equality for some $k$, then one can construct a minimizing sequence which is not compact, even up to translations. Therefore we only have to prove that \textit{(1)}$\Rightarrow$\textit{(2)}.

To this end, we consider one minimizing sequence $\Psi_n=\phi_1^n\wedge\cdots\wedge\phi_N^n$ for $E^0_N(N)$ and we define the associated state in $\cFN$, $\Gamma_n=0\oplus\cdots\oplus|\Psi_n\rangle\langle\Psi_n|$. Since $\cE^0(\Gamma_n)$ is bounded, by \eqref{eq:bound_Hamiltonian} we have a uniform bound on the kinetic energy:
$$\tr_\gH\big((-\Delta)[\Gamma_n]^{(1)}\big)\leq C.$$
This itself implies a uniform bound on the $H^1(\R^d)$ norm of $\sqrt{\rho_{\Gamma_n}}$, by the Hoffmann-Ostenhof inequality \eqref{eq:Hoffmann-Ostenhof}.

Our goal is to prove convergence of $\Psi_n(\cdot-v_n)$ for an appropriate translation $v_n$. The first step is to determine this translation $v_n$ by detecting a piece of mass which retains its shape for $n$ large and, possibly, escapes to infinity. We therefore consider all the possible geometric limits, up to translations, of subsequences of $\{\Gamma_n\}$ and we define the largest possible average particle number that these limits can have:
\begin{equation}
m(\{\Gamma_n\}):=\sup\bigg\{\tr_\cF(\cN\Gamma)\ :\  \exists\{\vec{v}_{k}\}\subset\R^d,\ \tau_{\vec{v}_k}\Gamma_{n_k}\tau_{-\vec{v}_k}\wto_g\Gamma\bigg\}.
\label{eq:def_m_vanishing}
\end{equation}
Here $\tau_{\vec{v}}$ is the translation unitary operator defined by $(\tau_{\vec{v}}\Psi)(x_1,...,x_N)=\Psi(x_1-\vec{v},...,x_N-\vec{v})$ when $\Psi\in\gH^N$ and extended by linearity on the whole Fock space. By the strong convergence  $\rho_{\Gamma_n}\to\rho_\Gamma$ in $L^1_{\text{loc}}(\R^d)$ when $\Gamma_n\wto_g\Gamma$ (with bounded kinetic energy), we also have that
\begin{equation}
m(\{\Gamma_n\})=\sup\bigg\{\int_{\R^d}\rho\ :\  \exists\{\vec{v}_{k}\}\subset\R^d,\ \rho_{\Gamma_{n_k}}(\cdot-\vec{v}_k)^{1/2}\wto\rho^{1/2}\text{ weakly in $H^1(\R^d)$}\bigg\}.
\label{eq:def_m_vanishing_rho}
\end{equation}
The definition of $m(\{\Gamma_n\})$ is inspired of a result of Lieb \cite{Lieb-83} as well as of the concentration-compactness method of Lions \cite{Lions-84,Lions-84b}. 
The purpose of $m\big(\{\Gamma_n\}\big)$ is to detect the piece containing the largest average number of particles, which possibly escape to infinity (when $|{\vec{v}}_{k}|\to\ii$). 
Following Lions' terminology, a sequence $\{\Gamma_n\}$ is said to \emph{vanish} when $m(\{\Gamma_n\})=0$, which is equivalent to the property that
$$\forall\{\vec{v}_n\}\subset\R^d,\qquad \tau_{\vec{v}_n}\Gamma_n\tau_{-\vec{v}_n}\gto|\Omega\rangle\langle\Omega|$$
or that
$$\forall\{\vec{v}_n\}\subset\R^d,\qquad \rho_{\Gamma_n}(\cdot-\vec{v}_n)\to0 \text{ a.e.}$$
As we now explain, saying that $m\big(\{\Gamma_n\}\big)=0$ is actually quite a strong statement.

\begin{lemma}[Vanishing]\label{lem:vanishing}
Let $\{\Gamma_n\}$ be any sequence of states on $\cFN$, with a uniformly bounded kinetic energy.
The following assertions are equivalent:

\smallskip

\noindent$(i)$ $m\big(\{\Gamma_n\}\big)=0$;

\smallskip

\noindent$(ii)$ for all $R>0$, one has $\displaystyle\lim_{n\to\ii}\sup_{x\in\R^d}\int_{B(x,R)}\rho_{\Gamma_n}=0$;

\smallskip

\noindent$(iii)$ $\rho_{\Gamma_n}\to0$ strongly in $L^p(\R^d)$ for all $1<p<p^*$, where
$p^*=d/(d-2)$ if $d\geq3$, $p^*=\ii$ if $d=1,2$.
\end{lemma}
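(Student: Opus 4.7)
The plan is to establish (ii)$\Leftrightarrow$(iii) by the classical Lions vanishing argument applied to $\sqrt{\rho_{\Gamma_n}}$, and (i)$\Leftrightarrow$(ii) by combining the geometric compactness of $\cS(\cFN)$ (Lemma~\ref{lem:compact}) with the local strong convergence of densities coming from Lemma~\ref{lem:local_compactness}. The uniform kinetic energy bound, together with the Hoffmann-Ostenhof inequality \eqref{eq:Hoffmann-Ostenhof}, provides a uniform $H^1(\R^d)$ bound on $\sqrt{\rho_{\Gamma_n}}$ that will be used throughout.

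For (ii)$\Leftrightarrow$(iii), I would invoke Lions' vanishing lemma directly on $u_n:=\sqrt{\rho_{\Gamma_n}}\in H^1(\R^d)$: condition (ii) reads $\sup_{x}\int_{B(x,R)}u_n^{2}\to 0$, which implies $u_n\to 0$ strongly in $L^q(\R^d)$ for all $2<q<2^{*}$, and squaring gives $\rho_{\Gamma_n}\to 0$ in $L^p(\R^d)$ for all $1<p<p^{*}$. The converse (iii)$\Rightarrow$(ii) is immediate from H\"older's inequality: $\int_{B(x,R)}\rho_{\Gamma_n}\leq |B(0,R)|^{1-1/p}\,\norm{\rho_{\Gamma_n}}_{L^p}\to 0$ uniformly in $x$.

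For (i)$\Rightarrow$(ii), I would argue by contradiction. If (ii) fails, there exist $R>0$, $\epsilon>0$, a subsequence (relabelled) and points $x_n\in\R^d$ with $\int_{B(x_n,R)}\rho_{\Gamma_n}\geq\epsilon$. Setting $v_n=-x_n$ and $\tilde{\Gamma}_n:=\tau_{v_n}\Gamma_n\tau_{-v_n}$, the density becomes $\rho_{\tilde{\Gamma}_n}(x)=\rho_{\Gamma_n}(x+x_n)$ so that $\int_{B(0,R)}\rho_{\tilde\Gamma_n}\geq\epsilon$. The kinetic energy bound being translation-invariant, Lemma~\ref{lem:compact} gives a further subsequence with $\tilde\Gamma_n\gto\Gamma$, and Lemma~\ref{lem:local_compactness} (applied with a smooth cutoff $\chi$ compactly supported and equal to $1$ on $B(0,R)$) provides strong convergence $(\tilde\Gamma_n)_\chi\to\Gamma_\chi$ in $\gS_1(\cFN)$; in particular $\rho_{\tilde\Gamma_n}\to\rho_\Gamma$ strongly in $L^1(B(0,R))$. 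Hence $\int_{B(0,R)}\rho_\Gamma\geq\epsilon$, so $\tr_\cF(\cN\Gamma)=\int\rho_\Gamma>0$, contradicting $m(\{\Gamma_n\})=0$. The converse (ii)$\Rightarrow$(i) follows the same scheme: for any admissible sequence $v_{n_k}$ with $\tau_{v_{n_k}}\Gamma_{n_k}\tau_{-v_{n_k}}\gto\Gamma$, condition (ii) forces $\int_{B(0,R)}\rho_{\Gamma_{n_k}}(\cdot-v_{n_k})=\int_{B(v_{n_k},R)}\rho_{\Gamma_{n_k}}\to 0$ for every $R$; the same local strong convergence identifies this limit with $\int_{B(0,R)}\rho_\Gamma$, so $\rho_\Gamma\equiv 0$ and $\tr_\cF(\cN\Gamma)=0$.

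The only genuinely delicate point is the passage from geometric convergence to strong $L^1_{\text{loc}}$ convergence of the density, and this is precisely what Lemma~\ref{lem:local_compactness} delivers once the kinetic energy is uniformly bounded. Everything else is either a direct translation of Lions' classical vanishing lemma to our setting, or a routine use of the translation-invariance of the kinetic energy bound.
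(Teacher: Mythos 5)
Your proof is correct and follows essentially the same route as the paper: Lions' vanishing lemma applied to $\sqrt{\rho_{\Gamma_n}}$ for $(ii)\Rightarrow(iii)$, and the combination of geometric compactness with strong local convergence of the densities to link these conditions to $m(\{\Gamma_n\})$. The only (harmless) difference is organizational: the paper closes the loop with the cheaper implication $(iii)\Rightarrow(i)$ (strong $L^p$ convergence of $\rho_{\Gamma_n}$ is inherited by all translates), whereas you prove $(ii)\Rightarrow(i)$ by running the local-compactness argument a second time.
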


\begin{proof}
The fact that $(i)\Rightarrow(ii)$ follows from the strong local convergence of $\rho_{\Gamma_n}$. The implication $(ii)\Rightarrow(iii)$ was proved first by Lions in \cite{Lions-84b} (Lemma I.1). Finally, it is clear that if $\rho_{\Gamma_n}\to0$ strongly in one $L^p(\R^d)$, then $\rho_{\Gamma_n}(\cdot-x_n)\to0$ strongly in $L^p(\R^d)$ for every sequence $\{x_n\}\subset\R^d$, hence $(i)$ follows.
\end{proof}

We will now show using Lemma~\ref{lem:vanishing} that our Hartree-Fock minimizing sequence  cannot vanish. We have, using Wick's Theorem for generalized Hartree-Fock states \cite{LieSim-77,BacLieSol-94},
\begin{align*}
&\left|\pscal{\Psi_n,\left(\sum_{1\leq i<j\leq N}W(x_i-x_j)\right)\Psi_n}\right|\\
&\qquad\qquad\qquad\qquad\leq \pscal{\Psi_n,\left(\sum_{1\leq i<j\leq N}|W|(x_i-x_j)\right)\Psi_n}\\
&\qquad\qquad\qquad\qquad=\frac12\int_{\R^d}\int_{\R^d}|W(x-y)|\left(\rho_{\Gamma_n}(x)\rho_{\Gamma_n}(y)-|[\Gamma_n]^{(1)}(x,y)|^2\right)\,dx\,dy\\
&\qquad\qquad\qquad\qquad\leq\frac12\int_{\R^d}\rho_{\Gamma_n}\left(\rho_{\Gamma_n}\ast|W|\right).
\end{align*}
When $m(\{\Gamma_n\})=0$, we have that $\rho_{\Gamma_n}\to0$ in $L^p(\R^d)$ for all $1<p<p^*$ by Lemma~\ref{lem:vanishing}. Under our assumptions on $W$, this implies that the interaction term converges to 0. The kinetic energy being non-negative, this shows that in the case of vanishing
$$E^0_N(N)=\lim_{n\to\ii}\cE^0(\Gamma_n)\geq0,$$
which contradicts the assumption that \textit{(1)} holds true (it is clear that \textit{(1)} implies that $E^0_k(k)\leq k\,E^0_1(1)=0$ hence, since the inequality is strict in \textit{(1)}, that $E^0_N(N)<0$).

We have shown that $m\big(\{\Gamma_n\}\big)>0$. This proves that there exists a sequence $\{\vec{v}_k\}\subset\R^d$ and a subsequence $\Gamma_{n_k}$ such that $\Gamma_k':=\tau_{\vec{v}_k}\Gamma_{n_k}\tau_{-\vec{v}_k}\wto_g\Gamma$ with $\Gamma\neq|\Omega\rangle\langle\Omega|$. Since the problem $E^0_N(N)$ is invariant under translations, the new sequence $\Gamma'_k$ is also a minimizing sequence for $E^0_N(N)$. To simplify our exposition, we do not change our original notation and we assume that $\Gamma_n\wto_g\Gamma$ with $\Gamma=G_{00}\oplus\cdots\oplus G_{NN}$. The assumption that $\Gamma\neq|\Omega\rangle\langle\Omega|$ means that $0\leq G_{00}<1$. As usual strong convergence of $\{\Psi_n\}$ in $L^2$ implies strong convergence in $H^1$ and it suffices to prove that $G_{kk}=0$ for all $k=0,...,N-1$.

We can now follow the proof of Theorem~\ref{thm:HVZ_finite_rank_general} which uses a localization in a ball of radius $R$ as well as strong convergence in this ball, before passing to the limit $R\to\ii$. This yields an inequality of the form
$${E^0_N(N)\geq \sum_{k=0}^N\big(E^0_{N-k}(N-k)+E^0_k(k)\big)\,\tr_{\gH^{N-k}}(G_{kk}).}$$
Note that in comparison with Theorem~\ref{thm:HVZ_finite_rank_general}, the terms corresponding to $k=0$ and $k=N$ are equal.
When the binding inequality holds, this is only possible when $G_{kk}=0$ for all $k=1,...,N-1$. Hence we have
$$\Gamma=G_{00}\oplus0\oplus\cdots\oplus0\oplus G_{NN}.$$
We also know that $G_{00}\neq1$, hence $G_{NN}\neq0$. We have already explained in Example~\ref{ex:geom_limit_HF} that the only geometric limit of a sequence of pure Hartree-Fock states of this form must have $G_{00}=0$. This ends the proof of Theorem~\ref{thm:HF_translation_invariant}.\hfill \qed
\end{proof}

\section{Many-body systems with effective nonlinear interactions}\label{sec:nonlinear}

In this section we consider a system of $N$ quantum particles whose many-body energy is not linear with respect to the state $|\Psi\rangle\langle\Psi|$ of the system, but also contains a nonlinear term $F$:
$$\cE(\Psi)=\pscal{\Psi,H(N)\Psi}+F\big(|\Psi\rangle\langle\Psi|\big).$$
The purpose of the last term is often to effectively describe complicated interactions between our $N$ particles, through a second quantum system which has been eliminated from the model. Even when the model is translation-invariant, the $N$ particles can form bound systems thanks to the nonlinear term $F$.

Situations of this kind are ubiquitous in quantum physics. In Section \ref{sec:polaron}, we study the example of the $N$-polaron, which is a system of $N$ electrons in a polar crystal. In the so-called Pekar-Tomasevich model, the crystal is eliminated and replaced by an effective nonlinear Coulomb-like force between the electrons.

In nuclear physics, strong forces between nucleons are also often described by effective nonlinear terms. The most celebrated ones are the Skyrme \cite{VauBri-72} and the Gogny \cite{DecGog-80} forces. Although these methods have been mainly used in the context of mean-field theory, their extension to correlated models was recently considered in \cite{PilBerCau-08}.

In this section we illustrate our geometric techniques by studying the simple case of a concave nonlinear term $F$ depending only on the density $\rho_\Psi$ of the system. We state a general theorem in Section \ref{sec:nonlinear_general} and apply it to the multi-polaron in Section \ref{sec:polaron}.

\subsection{A general result}\label{sec:nonlinear_general}
Let us consider a system of $N$ spinless particles (bosons or fermions) in $\R^d$, interacting via a potential $W$ and a nonlinear effective term $F$. For simplicity we assume that $F$ only depends on the density of charge $\rho_\Psi$ of the many-body state $\Psi$:
\begin{equation}
\boxed{\cE(\Psi):=\pscal{\Psi,\left(\sum_{j=1}^N\frac{-\Delta_{x_j}}{2}+\sum_{1\leq k<\ell\leq N}W(x_k-x_\ell)\right)\Psi}+F(\rho_\Psi).}
\label{eq:def_nonlinear_many_body}
\end{equation}
We also introduce the corresponding ground state energy, for bosons or fermions,
\begin{equation}
E(N)=\inf_{\substack{\Psi\in H^1_{a/s}((\R^d)^N)\\ \norm{\Psi}=1}}\cE(\Psi).
\label{eq:def_nonlinear_GS}
\end{equation}

As before we make the assumption that $W$ can be written in the form $\sum_{i=1}^KW_i$ with $W_i\in L^{p_i}(\R^d)$ where $\max(1,d/2)<p_i<\ii$ or $p_i=\ii$ but $W_i\to0$ at infinity. As for the functional $F$, we assume that it satisfies the following assumptions:
\begin{description}
\item[\textbf{(A1)}] \textsl{(Subcriticality)} $F$ is a locally uniformly continuous functional on $L^{p_1}(\R^d)\cap L^{p_2}(\R^d)$, for some $1<p_1\leq p_2<p^*$, where $p^*=d/(d-2)$ when $d>2$ and $p^*=\ii$ when $d=1,2$, and such that $F(0)=0$. Furthermore, there exists $0<\epsilon<1$ and $C>0$ such that
\begin{equation}
\forall \phi\in H^1(\R^d), \quad \int_{\R^d}|\phi|^2\leq N\ \Longrightarrow\ F(|\phi|^2)\geq -\frac\epsilon2\int_{\R^d}|\nabla\phi|^2-C; 
\label{eq:lower_bound_F}
\end{equation}
\end{description}

\begin{description}
\item[\textbf{(A2)}] \textsl{(Translation invariance)} $F(\rho(\vec{v}+\cdot))=F(\rho)$ for all $\rho\in L^{p_1}(\R^d)\cap L^{p_2}(\R^d)$ and all $\vec{v}\in\R^d$;
\end{description}

\begin{description}
\item[\textbf{(A3)}] \textsl{(Decoupling at infinity)} If $\{\rho_n^1\}$ and $\{\rho_n^2\}$ are two bounded sequences of $L^1(\R^d)\cap L^{p_2}(\R^d)$ such that $\text{d}(\text{supp}(\rho_n^1)\,,\,\text{supp}(\rho_n^1))\to\ii$, then it holds 
$$F(\rho_n^1+\rho_n^2)-F(\rho_n^1)-F(\rho_n^2)\to0\text{ as $n\to\ii$};$$
\end{description}

\begin{description}
\item[\textbf{(A4)}] \textsl{(Concavity)} $F$ is concave on the cone $\{\rho\in L^{p_1}(\R^d)\cap L^{p_2}(\R^d)\ :\ \rho\geq0\}$;
\end{description}

\begin{description}
\item[\textbf{(A5)}] \textsl{(Strict concavity at the origin)} For all $\rho\in L^{p_1}(\R^d)\cap L^{p_2}(\R^d)$ with $\rho\geq0$ and $\rho\neq0$, one has $F(t\rho)>t\,F(\rho)$ for all $0<t<1$.
\end{description}

\begin{example}
Consider the following functional:
$$F(\rho)=-\alpha\int_{\R^d}\rho^\beta+\rho\left(\rho\ast h\right).$$
It can be verified that $F$ satisfies all the previous assumptions when $\alpha>0$, $1<\beta<1+2/d$, and when the function $h$ is of positive type ($\widehat{h}\geq0$) and can be written in the form $h=\sum_{i=1}^kh_i$ with $h_i\in L^{q_i}(\R^d)$ for some $\max\big(1,(d+1)/2\big)<q_i<\ii$. When $d=3$, this covers Coulomb interactions $h(x)=1/|x|$, as well as Dirac's term corresponding to $\beta=4/3$.\dqed
\end{example}

In the proof, the concavity of the functional $F$ is crucially used to extend the energy $\cE$ to mixed states in the truncated Fock space $\cFN$, making possible the use of geometric methods. Concavity might seem a very strong assumption but it is indeed very natural from a physical point of view. As we have explained the term $F(\rho_\Psi)$ usually empirically describes the interaction of our $N$ particles with a second (infinite) system (for instance phonons of a crystal for the multi-polaron studied in Section \ref{sec:polaron}). In most physical models the real coupling between the two systems is \emph{linear} with respect to the state of the $N$ particles (for instance linear with respect to $\rho_\Psi$). Eliminating the degrees of freedom of the second system by simple perturbation theory or minimization over product states always leads to concave functionals $F$.

The assumption \textbf{(A1)} that $F$ is subcritical will be used in the proof to discard the possibility that minimizing sequences vanish. 
The other assumptions on $F$ are of a more technical nature, and they can certainly be relaxed a bit. It is possible to treat non translation-invariant functionals but in this case the main result below is not stated the same. It is also easy to generalize the main theorem below to the case of a functional $F$ which is not a simple function of the density (for instance when $F$ is a function of the one-body density matrix), with appropriate assumptions.

\medskip

It is a simple exercise to verify that, under the previous assumptions, the energy functional $\cE$ is well-defined and continuous on $H^1_{a/s}((\R^d)^N)$. Moreover, using \eqref{eq:lower_bound_F} in \textbf{(A1)} and the Hoffmann-Ostenhof inequality \eqref{eq:Hoffmann-Ostenhof}, we have
\begin{align}
\cE(\Psi)&\geq \pscal{\Psi,\left((1-\epsilon)\sum_{j=1}^N\frac{-\Delta_{x_j}}{2}+\sum_{1\leq k<\ell\leq N}W(x_k-x_\ell)\right)\Psi}-C\nonumber\\
&\geq \frac{1-\epsilon}{2}\pscal{\Psi,\left(\sum_{j=1}^N\frac{-\Delta_{x_j}}{2}\right)\Psi}-C'.\label{eq:coercive_nonlinear}
\end{align}
In the second line we have used the assumptions on $W$, similarly as in \eqref{eq:bound_Hamiltonian}. This shows that $\cE$ is bounded from below, hence that $E(N)$ is finite.
In the following we denote by
$$H(N):=\sum_{j=1}^N\frac{-\Delta_{x_j}}{2}+\sum_{1\leq k<\ell\leq N}W(x_k-x_\ell)$$
the translation-invariant many-body Hamiltonian. The main theorem is the following:

\begin{theorem}[Nonlinear HVZ for many-body systems]\label{thm:nonlinear}
Under the previous assumptions, the following assertions are equivalent:
\begin{enumerate}
\item One has
\begin{equation}
E(N)<E(N-k)+E(k)\quad\text{for all $k=1,...,N-1$,} 
\label{eq:HVZ_nonlinear}
\end{equation}
and 
\begin{equation}
E(N)<\inf\sigma \left(H(N)\right);
\label{eq:cond_no_vanishing_nonlinear}
\end{equation}

\medskip

\item All the minimizing sequences $\{\Psi_n\}$ for $E(N)$ are precompact in $H^1_{a/s}((\R^d)^N)$ up to translations. Hence there exists $\{\vec{v}_n\}\subset \R^d$ such that $\Psi_n(\cdot-\vec{v}_n)$ converges, up to a subsequence, to a minimizer for $E(N)$.
\end{enumerate}
\end{theorem}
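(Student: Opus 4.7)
The plan is to prove the two directions separately. For $(2)\Rightarrow(1)$, if $E(N)=E(N-k)+E(k)$ for some $k\in\{1,\dots,N-1\}$ one constructs a non-compact minimizing sequence by antisymmetrizing (or symmetrizing) $\Psi^1\otimes\Psi^2(\cdot-n\vec v)$ for near-minimizers $\Psi^1,\Psi^2$ of $E(N-k),E(k)$ with compact support; the translation invariance (A2), decoupling at infinity (A3), and the decay of $W$ give $\cE(\Psi_n)\to E(N)$, and these sequences are obviously not compact even up to translation. Similarly, if $E(N)=\inf\sigma(H(N))$, a singular Weyl sequence for $\inf\sigma(H(N))$ whose density spreads out in every subcritical $L^p$ satisfies $F(\rho_n)\to 0$ by the local uniform continuity of $F$ on $L^{p_1}\cap L^{p_2}$ with $F(0)=0$ (from (A1)), again giving a non-compact minimizing sequence.

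For the more substantial direction $(1)\Rightarrow(2)$, start with a minimizing sequence $\{\Psi_n\}$, which is $H^1$-bounded by \eqref{eq:coercive_nonlinear}, and form $\Gamma_n:=0\oplus\cdots\oplus|\Psi_n\rangle\langle\Psi_n|\in\cS(\cFN)$. Define $m(\{\Gamma_n\})$ exactly as in \eqref{eq:def_m_vanishing}. If $m(\{\Gamma_n\})=0$, Lemma \ref{lem:vanishing} yields $\rho_{\Gamma_n}\to 0$ in every $L^p$ with $1<p<p^*$, so by (A1), $F(\rho_{\Gamma_n})\to 0$, forcing $E(N)=\lim\pscal{\Psi_n,H(N)\Psi_n}\geq\inf\sigma(H(N))$, contradicting \eqref{eq:cond_no_vanishing_nonlinear}. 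Therefore $m(\{\Gamma_n\})>0$; translating by an appropriate sequence $\vec v_n$ and extracting, $\Gamma_n\wto_g \Gamma=G_{00}\oplus\cdots\oplus G_{NN}\neq|\Omega\rangle\langle\Omega|$.

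I then apply the IMS-type geometric localization from the proof of Theorem \ref{thm:HVZ}, using a smooth partition of unity with a buffer annulus of width comparable to $R_n$, with $R_n\to\infty$ chosen (via dyadic pigeonhole on the bound $\int\rho_{\Gamma_n}=N$, combined with the $H^1$-bound \eqref{eq:Hoffmann-Ostenhof} on $\sqrt{\rho_{\Gamma_n}}$ and interpolation) so that the buffer density is $o(1)$ in $L^{p_1}\cap L^{p_2}$. Extending $\cE$ to $\cS(\cFN)$ via $\cE(\Gamma):=\sum_k\tr(H(k)G_{kk})+F(\rho_\Gamma)$, the combination of IMS, decay of $W$, and (A3) yields
\begin{equation*}
\cE(\Gamma_n)\geq \cE\bigl((\Gamma_n)_{\chi_{R_n}}\bigr)+\cE\bigl((\Gamma_n)_{\eta_{R_n}}\bigr)+o(1).
\end{equation*}
The concavity (A4) together with $F(0)=0$ gives, for each piece, $F(\sum_k\rho_{G_{\chi,k}^n})\geq\sum_k\tr(G_{\chi,k}^n)\,F(\rho_{G_{\chi,k}^n}/\tr(G_{\chi,k}^n))$, which combined with the definition of $E(k)$ applied to the normalized $k$-body blocks yields $\cE((\Gamma_n)_{\chi_{R_n}})\geq\sum_k\tr(G_{\chi,k}^n)\,E(k)$; the fundamental identity \eqref{eq:relation_trace_localized_N_body_fn} treats the $\eta$-piece symmetrically. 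Passing to the limit via Lemma \ref{lem:local_compactness} (strong convergence on the $\chi$-piece) and Lemma \ref{lem:approx_localized} gives
\begin{equation*}
E(N)\geq \sum_{k=0}^N \bigl[E(k)+E(N-k)\bigr]\,\tr(G_{kk}),
\end{equation*}
and the strict binding inequality \eqref{eq:HVZ_nonlinear} forces $\tr(G_{kk})=0$ for $1\leq k\leq N-1$, so only $G_{00}$ and $G_{NN}$ survive.

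Finally, strict concavity (A5) excludes partial loss. Writing $t:=\tr(G_{NN})$ and $G_{NN}=t\tilde G$ for a state $\tilde G\in\cS(\gH^N_{a/s})$, if $0<t<1$ then
\begin{equation*}
\cE(\Gamma)=t\,\tr(H(N)\tilde G)+F(t\rho_{\tilde G})>t\bigl[\tr(H(N)\tilde G)+F(\rho_{\tilde G})\bigr]\geq t\,E(N),
\end{equation*}
whereas the splitting lower bound together with $\cE((\Gamma_n)_{\eta_{R_n}})\geq (1-t)E(N)+o(1)$ and strong convergence on the $\chi$-piece force $\cE(\Gamma)\leq t\,E(N)$; contradiction. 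Hence $t=1$, Lemma \ref{lem:N_wlsc} upgrades geometric to strong trace-norm convergence of $\Gamma_n$, whence $\Psi_n\to\Psi$ in $L^2((\R^d)^N)$, and the convergence of the energies combined with \eqref{eq:coercive_nonlinear} upgrades this to $H^1$-convergence. The principal obstacle is the simultaneous handling of the limits $n\to\infty$ and $R_n\to\infty$ in the localization step, so that the IMS kinetic error, the $W$-interaction cross terms, and the decoupling error for $F$ all vanish at once; choosing $R_n$ via dyadic pigeonhole is precisely what makes (A3) applicable to truly separated supports despite the overlap inherent in a smooth partition of unity.
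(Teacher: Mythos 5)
Your argument for $(1)\Rightarrow(2)$ --- excluding vanishing via Lemma~\ref{lem:vanishing} and condition \eqref{eq:cond_no_vanishing_nonlinear}, translating and extracting a geometric limit, IMS localization at an $n$-dependent radius chosen by a concentration-function argument, concavity of $F$ to reduce the localized pieces to convex combinations of the $E(k)$'s, the identity \eqref{eq:relation_trace_localized_N_body_fn} to pair the weights, and finally strict concavity \textbf{(A5)} to rule out $0<\tr(G_{NN})<1$ --- is precisely the paper's proof; your Step-5 variant (working with $G_{NN}=t\tilde G$ and the bound $\cE(\Gamma)\leq t\,E(N)$ rather than directly with the weak limit $\Psi$) is an equivalent reformulation of the paper's final step.

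The one genuine gap is in $(2)\Rightarrow(1)$, for the condition \eqref{eq:cond_no_vanishing_nonlinear}. You invoke ``a singular Weyl sequence for $\inf\sigma(H(N))$ whose density spreads out in every subcritical $L^p$'' as if such a sequence were automatic. It is not: by translation invariance \textbf{(A2)}, merely sending a Weyl sequence off to infinity leaves $F(\rho_{\Psi_n})$ unchanged, and a Weyl sequence for the bottom of the spectrum of a translation-invariant Hamiltonian can perfectly well consist of a tightly bound cluster whose center of mass drifts, so that $\rho_{\Psi_n}(\cdot-\vec{v}_n)$ does not tend to zero in any $L^p$. Producing a Weyl sequence with genuinely vanishing density is exactly what has to be constructed, and the same construction is what proves the large inequality $E(N)\leq\inf\sigma(H(N))$ in the first place --- without which the dichotomy ``strict inequality or equality'' underlying your contrapositive is not available. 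The paper does this by separating the center of mass, writing $H(N)=|p'_0|^2/2N+H'(N-1)$ in relative coordinates, taking a Weyl sequence $\Phi_n$ for $H'(N-1)$ in the relative variables, and multiplying by a dilated profile $\phi_n=n^{-d/2}\phi(\cdot/n)$ of the center-of-mass variable; this gives $\norm{\rho_{\Psi_n}}_{L^\ii}\leq N\norm{\phi}_{L^\ii}^2\,n^{-d}\to0$ while $\pscal{\Psi_n,H(N)\Psi_n}\to\inf\sigma(H(N))$. You need to supply this (or an equivalent) construction; the rest of your proof stands.
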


As we will explain in the proof, the role of the additional condition \eqref{eq:cond_no_vanishing_nonlinear} is to avoid vanishing.

\begin{proof}
We split the proof in several steps. We start by proving that the inequalities \eqref{eq:HVZ_nonlinear} and \eqref{eq:cond_no_vanishing_nonlinear} always hold true when the strict inequality $<$ is replaced by a large inequality $\leq$, and that if there is equality, then there exists a minimizing sequence which is non-compact, for any translations. This shows that \textit{(2)} implies \textit{(1)}.

\medskip

\noindent\textbf{Step 1: Large binding inequalities.} \textit{The inequalities in \eqref{eq:HVZ_nonlinear} always hold true when the strict inequality $<$ is replaced by $\leq$. If there is equality for some $1\leq k\leq N-1$, then there exists a minimizing sequence $\{\Psi_n\}$ for $E(N)$ which is not compact, even up to translations.}

\begin{proof}
The proof proceeds as usual by constructing a trial sequence $\Psi_n=\Psi_n^1\circ\Psi^2_n(\cdot-R_n\vec{v})$ (with $\circ=\wedge$ for fermions and $\circ=\vee$ for bosons), where $\Psi_n^1$ and $\Psi_n^2$ are minimizing sequences of compact support for $E(N-k)$ and $E(k)$ and $R_n$ is large enough. The energy is decoupled by \textbf{(A3)}. We omit the details.
\end{proof}

\smallskip

\noindent\textbf{Step 2: Large inequality \eqref{eq:cond_no_vanishing_nonlinear}.} \textit{The inequality \eqref{eq:cond_no_vanishing_nonlinear} always holds true when the strict inequality $<$ is replaced by $\leq$. If there is equality, then there exists a minimizing sequence $\{\Psi_n\}$ for $E(N)$ which is not compact, even up to translations.}

\begin{proof}
Removing the center of mass by performing the change of variables $x_0'=\sum_{j=1}^Nx_j/N$, $x'_1=x_2-x_1$, ...,  $x'_{N-1}=x_{N}-x_1$, we see that the original Hamiltonian $H(N)$ can be rewritten as
\begin{align*}
H(N)&=\frac{|p'_0|^2}{2N}+\left(\sum_{j=1}^{N-1}\frac{|p'_j|^2}2+\frac12\left|\sum_{j=1}^{N-1}p'_j\right|^2+\sum_{j=1}^{N-1}W(x'_j)+\sum_{1\leq k<\ell\leq N-1}W(x'_k-x'_\ell)\right)\\
&:=\frac{|p'_0|^2}{2N}+H'(N-1).
\end{align*}
This shows that the bottom of the spectrum of $H(N)$ is also the bottom of the spectrum of $H'(N-1)$. To account for the original statistics of our particles, the latter Hamiltonian $H'(N-1)$ is restricted to $(N-1)$--body functions $\Phi$ that are symmetric (bosons) or antisymmetric (fermions), and additionally satisfy the following relation
$$\Phi(-x'_1,x'_2-x'_1,\cdots,x'_{N-1}-x'_1)=\tau\;\Phi(x'_1,x'_2,\cdots,x'_{N-1})$$
with $\tau=1$ for bosons and $\tau=-1$ for fermions. Let $\{\Phi_n\}$ be a Weyl sequence for the bottom of the spectrum of the Hamiltonian $H'(N-1)$ (under the appropriate symmetry constraints) and let $\phi_n:=n^{-d/2}\phi(\cdot/n)$ for a fixed normalized function $\phi\in H^2(\R^d)\cap L^\ii(\R^d)$. We take as test function the product state
$$\Psi_n(x_1,\cdots ,x_N)=\phi_n\left(\frac{\sum_{j=1}^Nx_j}{N}\right)\;\Phi_n(x_2-x_1,\cdots,x_{N}-x_1)$$
whose density is
\begin{equation}
\rho_{\Psi_n}(x)=N\int_{\R^d}dx_2\cdots\int_{\R^d}dx_{N} \left|\phi_n\left(\frac{\sum_{j=2}^Nx_j}{N}+\frac{x}{N}\right)\right|^2\left|\Phi_n(x_2-x,\cdots,x_{N}-x)\right|^2.
\end{equation}
This proves that
\begin{equation*}
\norm{\rho_{\Psi_n}}_{L^\ii(\R^d)}\leq \frac{N\norm{\phi}_{L^\ii(\R^d)}^2}{n^d}\to_{n\to\ii}0,
\end{equation*}
hence that $\rho_{\Psi_n}\to0$ in $L^p(\R^d)$ for all $1<p\leq \ii$. Under our assumption \textbf{(A1)} on the nonlinearity $F$, this implies that $F(\rho_{\Psi_n})\to0$. On the other hand we have by construction 
$$\lim_{n\to\ii}\pscal{\Psi_n,H(N)\Psi_n}=\inf\sigma(H(N))$$
and it follows that $E(N)\leq \inf\sigma(H(N))$. 
If there is equality, the previous sequence $\{\Psi_n\}$ furnishes a vanishing minimizing sequence. It is not compact, even up to translations. This ends the proof of Step 2.
\end{proof}

\smallskip

The previous steps show that \textit{(2)} implies \textit{(1)}. We now turn to the proof of the converse implication. We consider a minimizing sequence $\{\Psi_n\}$ and note that it is necessarily bounded in $H^1_{a/s}((\R^d)^N)$, by \eqref{eq:coercive_nonlinear}. As usual we denote by $\Gamma_n=0\oplus\cdots\oplus|\Psi_n\rangle\langle\Psi_n|$ the associated mixed state in the truncated Fock space. We define like in the proof of Theorem \ref{thm:HF_translation_invariant} the number
\begin{equation}
m(\{\Gamma_n\}):=\sup\bigg\{\tr_\cF(\cN\Gamma)\ :\  \exists\{\vec{v}_{k}\}\subset\R^d,\ \tau_{\vec{v}_k}\Gamma_{n_k}\tau_{-\vec{v}_k}\wto_g\Gamma\bigg\}.
\label{eq:def_m_vanishing_nonlinear}
\end{equation}
We start by proving that vanishing does not hold, that is $m(\{\Gamma_n\})>0$.

\bigskip

\noindent\textbf{Step 3: Absence of vanishing.} \textit{One has $m(\{\Gamma_n\})>0$.}

\begin{proof}
As we have already seen in Lemma \ref{lem:vanishing}, $m(\{\Gamma_n\})=0$ is equivalent to having $\rho_{\Psi_n}\to0$ strongly in $L^p((\R^d)^N)$, for all $1<p<p^*$. By Assumption \textbf{(A1)}, the function $F$ is uniformly continuous on $L^{p_1}(\R^d)\cap L^{p_2}(\R^d)$ for some $1<p_1\leq p_2<p^*$. Hence $m(\{\Gamma_n\})=0$ implies that $F(\rho_{\Psi_n})\to0$ and therefore that 
$$E(N)=\lim_{n\to\ii}\cE(\Psi)=\lim_{n\to\ii}\pscal{\Psi,H(N)\Psi}\geq \inf\sigma(H(N)).$$
This contradicts \textit{(1)}, hence shows that it must hold  $m(\{\Gamma_n\})>0$.
\end{proof}

\smallskip

Up to a translation (we use that $\cE$ is translation-invariant) and extraction of a subsequence, we may therefore assume that $\Gamma_n\wto_g\Gamma$ geometrically, with $\tr(\cN\Gamma)>0$, that is $\Gamma=G_{00}\oplus\cdots\oplus G_{NN}$ with $0\leq G_{00}<1$. In order to show that $\{\Gamma_n\}$ is compact, we have to prove that $\tr(G_{NN})=1$. This only shows that $\Psi_n\to\Psi$ strongly in $L^2_{a/s}((\R^d)^N)$ but strong convergence in $H^1_{a/s}((\R^d)^N)$  follows by usual arguments.

\bigskip

\noindent\textbf{Step 4: Decoupling via localization.} In this step we split $\Gamma_n$ into a part which converges to $\Gamma$ strongly and a part which escapes to infinity. Contrary to the previous sections, we use a radius of localization which depends on $n$, following Lions \cite{Lions-84,Lions-84b}. The following is a well-known result:
\begin{lemma}[Dichotomy]\label{lem:dichotomy}
Up to extraction of a subsequence, it holds
$$\lim_{n\to\ii}\int_{|x|\leq R_n} \rho_{\Psi_n}(x)\,dx=\int_{\R^d} \rho_{\Gamma}(x)\,dx,$$
\begin{multline*}
\lim_{n\to\ii}\int_{R_n\leq |x|\leq 6R_n}\left(\rho_{\Psi_n}(x)+|\nabla\sqrt{\rho_{\Psi_n}(x)}|^2\right)\,dx\\=\lim_{n\to\ii}\int_{R_n\leq |x_1|\leq 6R_n}dx_1\,\int_{\R^d}dx_2\cdots\int_{\R^d}dx_N\;|\nabla_{x_1}\Psi_n(x_1,...,x_N)|^2=0 
\end{multline*}
for a sequence $R_n\to\ii$.
\end{lemma}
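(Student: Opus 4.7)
The lemma is a version of the classical dichotomy from Lions' concentration-compactness method. Three ingredients drive the argument. First, since $\{\Psi_n\}$ is a minimizing sequence it is uniformly bounded in $H^1_{a/s}((\R^d)^N)$ by the coercivity estimate \eqref{eq:coercive_nonlinear}, so $\int\rho_{\Psi_n}=N$ and, by the Hoffmann-Ostenhof inequality \eqref{eq:Hoffmann-Ostenhof}, $\int|\nabla\sqrt{\rho_{\Psi_n}}|^2\leq C$ uniformly in $n$. Second, introducing the one-body kinetic density
$$\tau_n(x):=\int_{\R^{d(N-1)}}|\nabla_{x_1}\Psi_n(x,x_2,\dots,x_N)|^2\,dx_2\cdots dx_N,$$
we have $\int\tau_n=\frac{1}{N}\tr((-\Delta)[\Gamma_n]^{(1)})\leq C$ and, by a pointwise form of Hoffmann-Ostenhof obtained via Cauchy-Schwarz, $|\nabla\sqrt{\rho_{\Psi_n}}(x)|^2\leq N\tau_n(x)$. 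Third, by Lemma~\ref{lem:local_compactness} applied to $\Gamma_n\gto\Gamma$ together with the kinetic bound, $\rho_{\Psi_n}\to\rho_\Gamma$ strongly in $L^1_{\rm loc}(\R^d)$, with $\lambda:=\int\rho_\Gamma=\tr(\cN\Gamma)>0$ thanks to Step~3.

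The strategy is a diagonal-plus-pigeonhole. I would first use the $L^1_{\rm loc}$ convergence, together with $\int_{|x|\leq K}\rho_\Gamma\to\lambda$ as $K\to\infty$, to extract (by a diagonal argument) a slowly growing $R_n^{*}\to\infty$ with $\int_{|x|\leq R_n^{*}}\rho_{\Psi_n}\to\lambda$. I would then apply pigeonhole to the dyadic annuli $A_k^n:=\{R_n^{*}\cdot 6^{k-1}\leq|x|\leq R_n^{*}\cdot 6^k\}$: since $\sum_k\int_{A_k^n}(\rho_{\Psi_n}+\tau_n)\leq N+C$ uniformly in $n$, among any first $CL$ annuli at most $O(L)$ can have integral $>1/L$, so there exists a good index $k_n$ in a window of size $O(L_n)$ with $\int_{A_{k_n}^n}(\rho_{\Psi_n}+\tau_n)\leq (N+C)/L_n$. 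Taking $L_n\to\infty$ along a subsequence and setting $R_n:=R_n^{*}\cdot 6^{k_n-1}\geq R_n^{*}\to\infty$, the annulus estimate in the $\tau_n$ variable is the second equality of the lemma, while the local Hoffmann-Ostenhof bound $|\nabla\sqrt{\rho_{\Psi_n}}|^2\leq N\tau_n$ yields the corresponding estimate for $|\nabla\sqrt{\rho_{\Psi_n}}|^2$; together these give both the second and the third equalities.

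The main obstacle is securing the first equality $\int_{|x|\leq R_n}\rho_{\Psi_n}\to\lambda$, since the pigeonhole only guarantees $R_n\in[R_n^{*},R_n^{*}\cdot 6^{L_n-1}]$ and intermediate mass in $[R_n^{*},R_n]$ could in principle inflate $\int_{|x|\leq R_n}\rho_{\Psi_n}$ above $\lambda$. The classical resolution, which I would follow, is to choose $k_n$ as the \emph{smallest} good index, forcing each intermediate annulus to carry $\rho_{\Psi_n}+\tau_n$-mass $>1/L_n$, and to extract a further subsequence isolating the concentration-compactness bubble structure: the escaping mass $N-\lambda$ decomposes into at most countably many pieces at scales that, along the subsequence, become arbitrarily well separated, so that the smallest good index lies strictly before the first bubble and the cumulative intermediate mass tends to zero. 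This gives $\int_{|x|\leq R_n}\rho_{\Psi_n}=\int_{|x|\leq R_n^{*}}\rho_{\Psi_n}+o(1)\to\lambda$, completing the proof.
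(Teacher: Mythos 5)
Your preliminary ingredients are sound: the uniform $H^1$ bound, the pointwise Hoffmann--Ostenhof estimate $|\nabla\sqrt{\rho_{\Psi_n}}|^2\leq N\tau_n$, the strong $L^1_{\rm loc}$ convergence $\rho_{\Psi_n}\to\rho_\Gamma$, the diagonal extraction of $R_n^*$, and the pigeonhole producing an annulus where $\rho_{\Psi_n}+\tau_n$ is small. The genuine gap is exactly the one you flag, and your repair does not close it. Because your annuli are anchored at $R_n^*$, nothing prevents the escaping mass from sitting at the scale of $R_n^*$ itself: take for instance $\rho_{\Psi_n}=\rho^0+\sigma_n$ with $\int\rho^0=\int_{\R^d}\rho_\Gamma=\lambda<N$ and $\sigma_n$ of mass $N-\lambda$ supported in $\{2R_n^*\leq|x|\leq3R_n^*\}$ (e.g.\ $\Psi_n=\Psi^1\wedge\Psi^2(\cdot-v_n)$ with $|v_n|\sim 2R_n^*$). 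This is compatible with $\rho_{\Psi_n}\to\rho_\Gamma$ in $L^1_{\rm loc}$ and with $\int_{|x|\leq R_n^*}\rho_{\Psi_n}\to\lambda$, yet the very first annulus $A_1^n$ carries mass $N-\lambda$ for every $n$, so the \emph{smallest} good index is $\geq2$, hence $R_n\geq6R_n^*$ and $\int_{|x|\leq R_n}\rho_{\Psi_n}\to N\neq\lambda$. In this configuration there is no good index "before the first bubble", and the cumulative intermediate mass is bounded below; moreover the bubble decomposition you invoke is itself a substantial unproved assertion here, not a tool the paper has available.

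The route the paper points to (Lions' concentration functions; Friesecke's Lemma 3.1) avoids the problem because the radius is extracted from the \emph{limiting} profile rather than from an a priori anchor. Set $Q_n(R):=\int_{|x|\leq R}(\rho_{\Psi_n}+\tau_n)$ and $M_n(R):=\int_{|x|\leq R}\rho_{\Psi_n}$; these are nondecreasing and uniformly bounded, so by Helly's selection theorem a subsequence satisfies $Q_n(R)\to Q(R)$ and $M_n(R)\to M(R)=\int_{|x|\leq R}\rho_\Gamma$ pointwise. Let $\ell=\lim_{R\to\ii}Q(R)$, pick $R^{(k)}\geq k$ with $Q(R^{(k)})\geq\ell-1/k$, and choose $n_k\nearrow\ii$ so that for $n\geq n_k$ the quantities $Q_n$ and $M_n$ at the radii $R^{(k)}$ and $6R^{(k)}$ are within $1/k$ of their limits; set $R_n:=R^{(k)}$ for $n_k\leq n<n_{k+1}$. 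Then $Q_n(6R_n)-Q_n(R_n)\leq Q(6R^{(k)})-Q(R^{(k)})+2/k\leq 3/k\to0$, which gives the second and third equalities via $|\nabla\sqrt{\rho_{\Psi_n}}|^2\leq N\tau_n$, while $M_n(R_n)\to\lim_kM(R^{(k)})=\int_{\R^d}\rho_\Gamma$ gives the first, precisely because $R_n$ is locally constant in $n$ and the mass is evaluated on a fixed ball where the local convergence applies. This locally-constant choice is the step your annulus pigeonhole cannot reproduce.
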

The proof of this lemma uses concentration functions in the spirit of Lions \cite{Lions-84,Lions-84b} as well as the strong local compactness of $\rho_{\Psi_n}$. See for instance Lemma 3.1 in \cite{Friesecke-03} for a similar result. Let $\chi$ be a smooth radial localization function with $0\leq\chi\leq1$, $\chi(x)=1$ if $|x|\leq 1$ and $\chi(x)=0$ if $|x|\geq2$, and let $\eta:=\sqrt{1-\chi^2}$. Let us consider the smooth localization functions $\chi_n:=\chi(\cdot/R_n)$ and $\eta_n=\eta(\cdot/R_n)$, in and outside the ball of radius $R_n$. By Lemma \ref{lem:continuity}, we have $(\Gamma_n)_{\chi_n}\wto_g \Gamma$ geometrically. However by Lemma \ref{lem:dichotomy} it holds 
$$\lim_{n\to\ii}\tr\;[(\Gamma_n)_{\chi_n}]^{(1)}=\lim_{n\to\ii}\int_{\R^d}(\chi_n)^2\rho_{\Gamma_n}=\int_{\R^d}\rho_{\Gamma}=\tr\;[\Gamma]^{(1)}.$$
This shows that $[(\Gamma_n)_{\chi_n}]^{(1)}\to [\Gamma]^{(1)}$ strongly in the trace class, hence by Lemma \ref{lem:N_wlsc} that
$$(\Gamma_n)_{\chi_n}\to \Gamma\ \text{ strongly in $\cS(\cFN)$ as $n\to\ii$}.$$

We can now show that the energy decouples. For the linear part we have by the IMS formula (like in the proof of Theorem \ref{thm:HVZ})
\begin{multline}
\pscal{\Psi_n,H(N)\Psi_n}\geq \tr_{\cFN}\left(\bH\,(\Gamma_n)_{\chi_n}\right)+ \tr_{\cFN}\left(\bH\,(\Gamma_n)_{\eta_n}\right)-\frac{CN}{R_n^2}\\
+N(N-1)\int_{\R^d}dx_1\cdots\int_{\R^d}dx_N\;W(x_1-x_2)\chi_n(x_1)^2\eta_n(x_2)^2|\Psi_n(x_1,...,x_N)|^2,\label{eq:decomp_energy_nonlinear}
\end{multline}
where $\bH=0\oplus \bigoplus_{n=1}^NH(n)$ is the second quantization of $H(N)$ in $\cFN$. Performing a decomposition similar to \eqref{eq:loc_inter_3} and using Lemma \ref{lem:dichotomy}, one sees that the last term of \eqref{eq:decomp_energy_nonlinear} goes to zero as $n\to\ii$. For the nonlinear term, we write
$$\rho_{\Psi_n}=|\chi_n|^2\rho_{\Psi_n}+|\eta_n|^2\rho_{\Psi_n}=|\chi_n|^2\rho_{\Psi_n}+|\eta_n|^2|\chi_{3R_n}|^2\rho_{\Psi_n}+|\eta_{3R_n}|^2\rho_{\Psi_n}.$$
By Lemma \ref{lem:dichotomy} we have that $|\eta_n|^2|\chi_{3R_n}|^2\rho_{\Psi_n}\to0$ in $L^1(\R^d)\cap L^{p^*}(\R^d)$, hence in $L^{p_1}(\R^d)\cap L^{p_2}(\R^d)$. Using that $F$ is locally uniformly continuous on $L^{p_1}(\R^d)\cap L^{p_2}(\R^d)$, we deduce since $\rho_{\Psi_n}$ is bounded in $L^{p_1}(\R^d)\cap L^{p_2}(\R^d)$, that
$$F(\rho_{\Psi_n})=F\left(|\chi_n|^2\rho_{\Psi_n}+|\eta_{3R_n}|^2\rho_{\Psi_n}\right)+o(1).$$
By Assumption \textbf{(A3)} we have
$$F\left(|\chi_n|^2\rho_{\Psi_n}+|\eta_{3R_n}|^2\rho_{\Psi_n}\right)=F\left(|\chi_n|^2\rho_{\Psi_n}\right)+F\left(|\eta_{3R_n}|^2\rho_{\Psi_n}\right)+o(1).$$
Using again that $|\eta_n|^2|\chi_{3R_n}|^2\rho_{\Psi_n}\to0$ we finally deduce that 
$$F(\rho_{\Psi_n})=F\left(|\chi_n|^2\rho_{\Psi_n}\right)+F\left(|\eta_n|^2\rho_{\Psi_n}\right)+o(1).$$
Hence we arrive at the following estimate
\begin{equation}
\pscal{\Psi_n,H(N)\Psi_n}\geq \tr_{\cFN}\!\left(\bH\,(\Gamma_n)_{\chi_n}\right)+F\left(\rho_{(\Gamma_n)_{\chi_n}}\right)+ \tr_{\cFN}\!\left(\bH\,(\Gamma_n)_{\eta_n}\right)+F\left(\rho_{(\Gamma_n)_{\eta_n}}\right)+o(1).
\label{eq:lower_bound_nonlinear}
\end{equation}

Let us write the localized states on $\cFN$ as
$$(\Gamma_n)_{\chi_n}=G_0^{\chi,n}\oplus\cdots\oplus G_N^{\chi,n},\qquad (\Gamma_n)_{\eta_n}=G_0^{\eta,n}\oplus\cdots\oplus G_N^{\eta,n}.$$
By the concavity of $F$, we have 
\begin{equation}
F\left(\rho_{(\Gamma_n)_{\eta_n}}\right)\geq \sum_{j=0}^N\tr(G_j^{\eta,n})\; F\left(\rho_{\tilde{G}^{\eta,n}_j}\right),
\end{equation}
with $\tilde{G}^{\eta,n}_j:=G^{\eta,n}_j/\tr(G^{\eta,n}_j)$ (and an obvious convention when $G_j^{\eta,n}=0$).
Using the fundamental relation $\tr(G_j^{\chi,n})=\tr(G_{N-j}^{\eta,n})$, we arrive at the lower bound
\begin{equation}
\tr_{\cFN}\left(\bH\,(\Gamma_n)_{\eta_n}\right)+F\left(\rho_{(\Gamma_n)_{\eta_n}}\right)
\geq \sum_{j=0}^N\tr(G_j^{\chi,n})\;\cE(\tilde{G}^{\eta,n}_{N-j})
\geq \sum_{j=0}^N\tr(G_j^{\chi,n})\;E(N-j).
\label{eq:lower_bound_eta}
\end{equation}
In the previous bounds, the energy $\cE$ is extended to mixed states of $\gH^N$ in an obvious fashion. Furthermore, for any mixed state $G\in\cS(\gH^N)$, we have, writing $G=\sum_{j}g_j|\Psi_j\rangle\langle\Psi_j|$ with $\sum_jg_j=1$,
\begin{equation*}
\cE(G)=\sum_{j}g_j\pscal{\Psi_j,H(N)\Psi_j}+F\left(\sum_{j}g_j\rho_{\Psi_j}\right)\geq \sum_j\,g_j\,\cE(\Psi_j)\geq E(N),
\end{equation*}
by the concavity of $F$. Therefore minimizing over mixed states is the same as minimizing over pure states, a property that we have used in \eqref{eq:lower_bound_eta}.

Coming back to the term involving $\chi_n$ in \eqref{eq:lower_bound_nonlinear}, we claim that it holds
\begin{equation*}
\liminf_{n\to\ii}\bigg(\tr_{\cFN}\left(\bH\,(\Gamma_n)_{\chi_n}\right)+F\left(\rho_{(\Gamma_n)_{\chi_n}}\right)\bigg)\geq \tr_{\cFN}\left(\bH\,\Gamma\right)+F\left(\rho_{\Gamma}\right).
\end{equation*}
Indeed the interaction term and $F(\rho_{\Gamma_n})$ converge as $n\to\ii$, by the strong convergence of $(\Gamma_n)_{\chi_n}$ towards $\Gamma$ in $\gS_1(\cFN)$. The kinetic energy is lower semi-continuous, by Lemma \ref{lem:wlsc}.

Summarizing, we have obtained the following lower bound
\begin{equation}
E(N)\geq \tr_{\cFN}\left(\bH\,\Gamma\right)+F\left(\rho_{\Gamma}\right)+\sum_{j=0}^N\tr(G_{jj})\;E(N-j).
\label{eq:final_lower_bound_nonlinear} 
\end{equation}
Using the concavity of $F$ as for $(\Gamma_n)_{\eta_n}$, we have
$$\tr_{\cFN}\left(\bH\,\Gamma\right)+F\left(\rho_{\Gamma}\right)\geq \sum_{j=0}^N\tr(G_{jj})\;E(j),$$
hence it follows that
$$E(N)\geq \sum_{j=0}^N\tr(G_{jj})\;\big(E(j)+E(N-j)\big).$$
When the binding condition \eqref{eq:HVZ_nonlinear} holds true, this is only possible when $G_{11}=\cdots=G_{N-1\,N-1}=0$. 

\bigskip

\noindent\textbf{Step 5: Conclusion.} It rests to prove that $G_{00}=0$. Let $\Psi$ be the weak limit in $\gH^N$ of the original minimizing sequence $\{\Psi_n\}$ and notice that $G_{NN}=|\Psi\rangle\langle\Psi|$. Since $G_{NN}\neq0$, it holds $\Psi\neq0$.  Inserting all this in \eqref{eq:final_lower_bound_nonlinear} (recall $\rho_{G_{00}}=0$), we obtain the estimate
\begin{equation}
\big(1-\tr(G_{00})\big)E(N)=\norm{\Psi}^2E(N)\geq \pscal{\Psi,H(N)\Psi}+F(\rho_\Psi).
\label{eq:very_final_lower_bound_nonlinear} 
\end{equation}
If $\norm{\Psi}<1$, then we use \textbf{(A5)} and get
$$F(\rho_\Psi)>\norm{\Psi}^2F(\rho_{\Psi/\norm{\Psi}}),$$
that is
$$\pscal{\Psi,H(N)\Psi}+F(\rho_\Psi)>\norm{\Psi}^2\cE\left(\frac{\Psi}{\norm{\Psi}}\right)\geq \norm{\Psi}^2E(N).$$
This contradicts \eqref{eq:very_final_lower_bound_nonlinear}, hence implies that it must hold $\norm{\Psi}=1$ and $G_{00}=0$. This ends the proof of Theorem \ref{thm:nonlinear}.\hfill\qed
\end{proof}

\bigskip

Theorem \ref{thm:nonlinear} can be generalized to finite-rank fermionic systems (Hartree-Fock case or multiconfiguration theory when $W\geq0$), following the arguments of Section \ref{sec:HF_MCSCF}. For instance, in the Hartree-Fock case one can easily prove the following

\begin{theorem}[Nonlinear HVZ for many-body systems in the Hartree-Fock approximation]\label{thm:nonlinear_HF}
Let $E_N(N)$ be the (fermionic) ground state energy in the Hartree-Fock approximation, defined by
\begin{equation}
E_N(N):=\inf_{\substack{\Psi\in H^1_{a}((\R^d)^N)\\ \rank(\Psi)=N\\ \norm{\Psi}=1}}\cE(\Psi).
\label{eq:def_HF_nonlinear_case}
\end{equation}
Under the previous assumptions, the following assertions are equivalent:
\begin{enumerate}
\item One has
\begin{equation}
E_N(N)<E_{N-k}(N-k)+E_k(k)\quad\text{for all $k=1,...,N-1$}; 
\label{eq:HVZ_nonlinear_HF}
\end{equation}

\medskip

\item All the Hartree-Fock minimizing sequences $\{\Psi_n\}$ for $E_N(N)$ are precompact in $H^1_{a}((\R^d)^N)$ up to translations. Hence there exists $\{\vec{v}_n\}\subset \R^d$ such that $\Psi_n(\cdot-\vec{v}_n)$ converges, up to a subsequence, to a minimizer for $E_N(N)$.
\end{enumerate}
\end{theorem}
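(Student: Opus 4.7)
The proof will mirror that of Theorem~\ref{thm:nonlinear}, the main new ingredient being the rank-$N$ structure of Hartree-Fock states, which is handled via the localization identity for finite-rank fermionic states (Lemma~\ref{lem:localization_MCSCF}) and the geometric dichotomy for pure HF states (Example~\ref{ex:geom_limit_HF}). For the direction \textit{(2)}$\Rightarrow$\textit{(1)}, I would proceed exactly as in Step~1 of Theorem~\ref{thm:nonlinear}: given HF minimizers (or approximate minimizers with compactly supported orbitals) for $E_{N-k}(N-k)$ and $E_k(k)$, form the trial state $\Psi_n=\Psi^1\wedge\Psi^2(\cdot-R_n\vec v)$; its rank is $N$ because the orbital systems become orthogonal when separated, and the nonlinear term decouples by assumption~\textbf{(A3)}. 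This yields $E_N(N)\leq E_{N-k}(N-k)+E_k(k)$ always, and, in case of equality, a minimizing sequence that is non-compact even up to translations.

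For \textit{(1)}$\Rightarrow$\textit{(2)}, let me first observe that vanishing is automatically excluded under assumption (1), in contrast with Theorem~\ref{thm:nonlinear} where the additional hypothesis \eqref{eq:cond_no_vanishing_nonlinear} was needed. Indeed, by scaling a normalized one-body test function as $\phi_\lambda=\lambda^{d/2}\phi(\lambda\cdot)$ and using \textbf{(A1)} (subcriticality of $F$), one has $E_1(1)\leq 0$; tensorizing with widely separated translates and using \textbf{(A3)} yields $E_k(k)\leq k\,E_1(1)\leq 0$ for each $k$. Iterating the binding inequality with $k=1$ then gives $E_N(N)<N\,E_1(1)\leq 0$. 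Now take a minimizing sequence $\Psi_n=\phi_1^n\wedge\cdots\wedge\phi_N^n$, bounded in $H^1_a((\R^d)^N)$ by \eqref{eq:coercive_nonlinear}, and form $\Gamma_n=0\oplus\cdots\oplus|\Psi_n\rangle\langle\Psi_n|$ and $m(\{\Gamma_n\})$ as in \eqref{eq:def_m_vanishing_nonlinear}. Lemma~\ref{lem:vanishing} together with the local uniform continuity of $F$ shows that vanishing would imply $F(\rho_{\Psi_n})\to 0$ and that the interaction term also tends to $0$, whence $E_N(N)\geq 0$, a contradiction. Thus $m(\{\Gamma_n\})>0$, and after translating we may assume $\Gamma_n\wto_g\Gamma=G_{00}\oplus\cdots\oplus G_{NN}$ with $\tr(G_{NN})>0$.

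The core step is the nonlinear localization argument of Theorem~\ref{thm:nonlinear}, Step~4. I would choose a Lions-type radius $R_n\to\infty$ so that $[\Gamma_n]^{(1)}$ concentrates inside $|x|\leq R_n$ up to a vanishing annular remainder (Lemma~\ref{lem:dichotomy}), introduce smooth cutoffs $\chi_n=\chi(\cdot/R_n)$, $\eta_n=\sqrt{1-\chi_n^2}$ and use IMS to split the kinetic and interaction energies, plus the $\textbf{(A3)}$-based decomposition of $F(\rho_{\Psi_n})=F(\chi_n^2\rho_{\Psi_n})+F(\eta_n^2\rho_{\Psi_n})+o(1)$, to obtain the analog of \eqref{eq:lower_bound_nonlinear}. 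The crucial new input is that for $r=N$, Lemma~\ref{lem:localization_MCSCF} tells us that the $k$-particle component $G^{\chi_n}_k$ of $(\Gamma_n)_{\chi_n}$ is a convex combination of pure fermionic $k$-body states of rank at most $k$, i.e.~exactly of Hartree-Fock type, and similarly for $G^{\eta_n}_k$. Combined with concavity of $F$ (assumption \textbf{(A4)}), this yields the fermionic analog of \eqref{eq:lower_bound_eta} with $E(N-j)$ replaced by the \emph{Hartree-Fock} energy $E_{N-j}(N-j)$. Invoking the fundamental identity~\eqref{eq:relation_trace_localized_N_body_fn}, passing to the liminf in $n$ using the strong convergence $(\Gamma_n)_{\chi_n}\to\Gamma$ (Lemma~\ref{lem:local_compactness}) and the continuity of the interaction and $F$ on this limit, and bounding the $\chi_n$ piece again via concavity and Hartree-Fock structure, one arrives at the estimate
\begin{equation*}
E_N(N)\;\geq\;\sum_{j=0}^N\tr(G_{jj})\bigl(E_j(j)+E_{N-j}(N-j)\bigr).
\end{equation*}
Under the binding inequality (1) this convex combination forces $G_{jj}=0$ for all $1\leq j\leq N-1$. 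Finally, since $\Gamma$ is the geometric limit of pure Hartree-Fock states, Example~\ref{ex:geom_limit_HF} applies: with $G_{NN}\neq 0$ and all intermediate sectors vanishing, the only compatible case is the strong-convergence case, which also forces $G_{00}=0$; alternatively one concludes via the strict concavity assumption \textbf{(A5)} as in Step~5 of Theorem~\ref{thm:nonlinear}. Strong $L^2$ convergence of $\Psi_n$ upgrades to strong $H^1$ convergence because $\lim\cE(\Psi_n)=\cE(\Psi)$ and the kinetic energy is the only nonconvergent piece.

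The main obstacle is the interplay between the localization of the nonlinear density functional and the rigid Hartree-Fock rank constraint. Concavity \textbf{(A4)} is what allows the energy of the localized mixed states to be bounded below sector-wise by $E_k(k)$, but one must ensure that the localized pieces really inherit the rank-$k$ Hartree-Fock structure, which is precisely the content of Lemma~\ref{lem:localization_MCSCF} specialized to $r=N$. The other delicate point is the $n$-dependent localization radius, needed to control the cross-term in $F$ via the Lions-type concentration Lemma~\ref{lem:dichotomy}; all other steps are routine adaptations of the arguments already carried out in the proofs of Theorems~\ref{thm:HF_translation_invariant} and~\ref{thm:nonlinear}.
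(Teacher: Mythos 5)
Your proposal is correct and follows exactly the route the paper intends: the paper gives no detailed proof of this theorem, merely noting that one combines the argument of Theorem~\ref{thm:nonlinear} with the finite-rank machinery of Section~\ref{sec:HF_MCSCF} (Lemma~\ref{lem:localization_MCSCF} and Example~\ref{ex:geom_limit_HF}), which is precisely what you do. One caveat: in the no-vanishing step you write that Lemma~\ref{lem:vanishing} and the local uniform continuity of $F$ show that ``the interaction term also tends to $0$''; those two facts only give $F(\rho_{\Psi_n})\to0$, and for a general $N$-body state a vanishing density does \emph{not} force $\langle\Psi_n,\sum_{i<j}W(x_i-x_j)\Psi_n\rangle\to0$ (this is exactly how the Weyl sequence of Step~2 of Theorem~\ref{thm:nonlinear} is built). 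Here you must invoke the Hartree-Fock structure via Wick's theorem, $|\langle\Psi_n,\sum_{i<j}W(x_i-x_j)\Psi_n\rangle|\leq\tfrac12\int\rho_{\Gamma_n}\left(\rho_{\Gamma_n}\ast|W|\right)$, as in the proof of Theorem~\ref{thm:HF_translation_invariant}; this is the very point the paper singles out as the reason condition~\eqref{eq:cond_no_vanishing_nonlinear} can be dropped, so it should be stated explicitly rather than folded into the citation of Lemma~\ref{lem:vanishing}.
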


Note the absence of a condition of the form \eqref{eq:cond_no_vanishing_nonlinear}: as we have seen in the proof of Theorem \ref{thm:HF_translation_invariant}, in the case of vanishing of a Hartree-Fock state, the interaction energy always tends to zero. The condition \eqref{eq:HVZ_nonlinear_HF} is sufficient to avoid this.

\subsection{Application: the multi-polaron}\label{sec:polaron}
In this section we study a system of $N$ electrons in a polar (ionic) crystal, called \emph{$N$-polaron}. Thanks to the underlying deformations of the crystal, the $N$ electrons can overcome their Coulomb repulsion and form a bound system. Recently there has been a renewed interest in the multi-polaron problem, triggered by the possibility of bipolaronic superconductivity in high-temperature superconductors \cite{Emin-89}.

Under the assumption that the polaron extends over a region much bigger than the typical spacing between the ions of the crystal, one can use a continuous model based on phonons. A model of this form was proposed by H. Fröhlich in \cite{Frohlich-54}. It assumes a linear coupling between the electrons and the longitudinal optical phonons, together with a constant dispersion relation for the phonons. The corresponding Hamiltonian takes the form
\begin{equation}
\sum_{j=1}^N\left(\frac{-\Delta_{x_j}}{2}-\sqrt{\alpha}\phi(x_j)\right)+\sum_{1\leq k<\ell\leq N}\frac{U}{|x_k-x_\ell|}+\int_{\R^3}\,dk\, a^\dagger(k)\,a(k), 
\label{eq:Hamil_polaron}
\end{equation}
where
$$\phi(x)=\frac{1}{2\pi}\int_{\R^3}\frac{dk}{|k|}\left(e^{ik\cdot x}\, a^\dagger(k)+e^{-ik\cdot x}\, a(k)\right).$$
The Hamiltonian acts on the Hilbert space $L^2_a((\R^3)^N)\otimes\cF_s$, with $a^\dagger(k)$ and $a(k)$ being the creation and annihilation operators (in the Fourier representation) for the phonons on the bosonic Fock space $\cF_s$. Because of its relation to the dielectric constants of the polar crystal \cite{Frohlich-37,VerPeeDev-91}, the parameter $\alpha$ must satisfy the constraint $\alpha<U$ in the physical regime. For simplicity we have discarded the spin of the electrons.

In the regime of strong coupling, the model reduces to the so-called \emph{Pekar-Tomasevich} (PK) theory \cite{Pekar-63,PekTom-51,MiySpo-07} in which the interaction with the crystal is modelled by a classical Coulomb self-interaction. The energy is now given by
\begin{equation}
\cE_{\alpha,U}(\Psi)=\pscal{\Psi,\left(\sum_{j=1}^N\frac{-\Delta_{x_j}}{2}+\sum_{1\leq k<\ell\leq N}\frac{U}{|x_k-x_\ell|}\right)\Psi}-\frac\alpha2\int_{\R^3}\int_{\R^3}\frac{\rho_\Psi(x)\,\rho_\Psi(y)}{|x-y|}dx\,dy,
\label{eq:Pekar-Tom_energy} 
\end{equation}
for $\Psi\in L^2_a((\R^3)^N)$.
The corresponding ground state energy is as usual defined as
\begin{equation}
E_{\alpha,U}(N)=\inf_{\substack{\Psi\in H^1_{a}((\R^3)^N)\\ \norm{\Psi}=1}}\cE_{\alpha,U}(\Psi).
\end{equation}
We have emphasized the dependence in the parameters $\alpha$ and $U$.
A simple scaling argument shows that $E_{\alpha,U}=U^2 E_{\alpha/U,1}$, hence we may work in a system of units such that $U=1$. In this case, for simplicity we use the notation $\cE_{\alpha}:=\cE_{\alpha,1}$ and $E_{\alpha}(N):=E_{\alpha,1}(N)$.

Another way to derive the Pekar-Tomasevich energy is to restrict to (uncorrelated) products states of the form $\Psi\otimes\Phi\in L^2_a((\R^3)^N)\otimes\cF_s$ and to minimize with respect to the state $\Phi$ of the phonons \cite{GriMol-10}.

Both the original model of Fröhlich and the Pekar-Tomasevich theory have stimulated many works. On the mathematical side, the validity of PK theory in the large coupling regime was shown for $N=1$ by Donsker and Varadhan in \cite{DonVar-83}, and with a different approach by Lieb and Thomas in \cite{LieTho-97}. The case $N=2$ was treated by Miyao and Spohn in \cite{MiySpo-07}. The stability or instability of large polaron systems was studied by Griesemer and M{\o}ller \cite{GriMol-10}, then by Frank,  Lieb, Seiringer and Thomas \cite{FraLieSeiTho-10a,FraLieSeiTho-10b}. In this latter work, the absence of binding of $N$-polaron for small $\alpha$ is also proven. 

Using geometric techniques, we are able to study the existence of multi-polaron systems:

\begin{theorem}[Binding of Pekar-Tomasevich multi-polarons]\label{thm:polaron}
Assume $U=1$. For every $N\geq2$, there exists a constant $\tau_c(N)<1$ such that the following hold for all $\alpha>\tau_c(N)$:
\begin{enumerate}
\item $E_\alpha(N)<E_\alpha(N-k)+E_\alpha(k)$ for all $k=1,...,N-1$;

\smallskip

\item All the minimizing sequences $\{\Psi_n\}$ for $E_\alpha(N)$ are precompact in $H^1_{a/s}((\R^d)^N)$ up to translations. Hence there exists $\{\vec{v}_n\}\subset \R^d$ such that $\Psi_n(\cdot-\vec{v}_n)$ converges, up to a subsequence, to a minimizer $\Psi$ for $E_\alpha(N)$;

\smallskip

\item Any such minimizer satisfies the following nonlinear eigenvalue equation:
\begin{equation}
\left(\sum_{j=1}^N\left(\frac{-\Delta}{2}-\alpha\rho_\Psi\ast|\cdot|^{-1}\right)_{x_j}+\sum_{1\leq k<\ell\leq N}\frac{1}{|x_k-x_\ell|}\right)\Psi=\mu\Psi
\label{eq:SCF_polaron}
\end{equation}
where $\mu$ is the first eigenvalue of the many-body Schrödinger operator in the parenthesis.
\end{enumerate}
\end{theorem}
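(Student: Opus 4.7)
The plan is to recognize Theorem~\ref{thm:polaron} as an instance of the general nonlinear HVZ result of Theorem~\ref{thm:nonlinear}, applied to the functional
\[
F(\rho):=-\frac{\alpha}{2}D(\rho,\rho),\qquad D(\rho,\sigma):=\int_{\R^3}\!\int_{\R^3}\frac{\rho(x)\sigma(y)}{|x-y|}\,dx\,dy,
\]
with $W(x-y)=1/|x-y|$ (which satisfies the standing $L^{p_i}$ assumption via Newton's decomposition $|x|^{-1}=|x|^{-1}\1_{|x|\leq 1}+|x|^{-1}\1_{|x|>1}\in L^p+L^\infty$). First I would verify assumptions \textbf{(A1)}--\textbf{(A5)} for $F$: the Hardy--Littlewood--Sobolev inequality gives $|F(\rho)|\leq C\alpha\norm{\rho}_{L^{6/5}(\R^3)}^2$ and local uniform continuity on any $L^{p_1}(\R^3)\cap L^{p_2}(\R^3)$ containing $L^{6/5}$ with $1<p_1\leq 6/5\leq p_2<3$; combined with the Sobolev embedding $\sqrt\rho\in H^1$ and the Hoffmann-Ostenhof inequality~\eqref{eq:Hoffmann-Ostenhof}, the bound~\eqref{eq:lower_bound_F} follows by choosing $\epsilon$ small. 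Translation invariance is obvious; decoupling at infinity is immediate from the identity $F(\rho^1+\rho^2)-F(\rho^1)-F(\rho^2)=-\alpha D(\rho^1,\rho^2)$ combined with the bound $D(\rho^1,\rho^2)\leq \dist(\mathrm{supp}\,\rho^1,\mathrm{supp}\,\rho^2)^{-1}\norm{\rho^1}_1\norm{\rho^2}_1$; concavity is clear as $F$ is the negative of a positive quadratic form; strict concavity at the origin comes from $F(t\rho)=t^2 F(\rho)$ with $F(\rho)<0$ for $\rho\not\equiv 0$, which yields $F(t\rho)>tF(\rho)$ for $0<t<1$.

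Next I would establish the non-vanishing condition~\eqref{eq:cond_no_vanishing_nonlinear} for every $\alpha>0$. Since $H(N)=\sum_j(-\Delta_{x_j})/2+\sum_{k<\ell}|x_k-x_\ell|^{-1}\geq 0$, this reduces to showing $E_\alpha(N)<0$. Take $\phi^*$ a minimizer of the Pekar problem $E_\alpha(1)<0$, truncate to compact support losing arbitrarily little energy, and form the Slater determinant $\Psi_R:=\phi^*\wedge\phi^*(\cdot-Re_1)\wedge\cdots\wedge\phi^*(\cdot-(N-1)Re_1)$. The exchange terms vanish for $R$ large; the diagonal contribution gives $NE_\alpha(1)+o(1)$, while all cross terms (both direct Coulomb and attractive nonlinear term) are $O(1/R)$, so $E_\alpha(N)\leq NE_\alpha(1)+o(1)<0$.

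The main work is the binding inequality~\eqref{eq:HVZ_nonlinear} for $\alpha$ close to $1$. Define $\tau_c(N)$ to be the infimum of $\alpha>0$ such that the strict inequalities $E_\alpha(N)<E_\alpha(N-k)+E_\alpha(k)$ hold for all $1\leq k\leq N-1$. Proceeding by induction on $N$, assume ground states exist for $E_\alpha(M)$ for all $M<N$ and all $\alpha$ in some interval $(\tau_c(M),\infty)\cap(0,1]$; call $\Psi_A$, $\Psi_B$ such ground states for $E_\alpha(N-k)$ and $E_\alpha(k)$, symmetrized so that their densities are radial. By Newton's theorem applied to well-separated spherically symmetric densities, the cross Coulomb terms and the cross nonlinear cross-term between $\rho_A$ and $\rho_B$ combine to exactly $(1-\alpha)N_AN_B/R$, independent of lower multipoles. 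This already gives the large inequality $E_1(N)\leq E_1(N-k)+E_1(k)$ at $\alpha=1$, and strict inequality for $\alpha>1$. To obtain strict binding \emph{below} $\alpha=1$, I would rely on the results of~\cite{FraLieSeiTho-10a,FraLieSeiTho-10b}, which provide, by a perturbation of the Pekar one-polaron state, a value $\alpha^*(N)<1$ at which $E_{\alpha^*}(N)<NE_{\alpha^*}(1)$, and by induction on the decomposition $N=(N-k)+k$ the full set of strict binding inequalities. Concavity of $\alpha\mapsto E_\alpha(N)$ (a supremum of affine functions of $\alpha$) and continuity then propagate strict binding to an open interval below $1$, ensuring $\tau_c(N)<1$. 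I expect this step---converting the strict single-cluster inequality of FLST into the full family of strict two-cluster inequalities---to be the delicate part, and the argument above depending crucially on the radial symmetrization that makes the $\alpha=1$ critical case transparent.

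Finally, once $\Psi$ is obtained as a minimizer via Theorem~\ref{thm:nonlinear}, I would derive the Euler-Lagrange equation~\eqref{eq:SCF_polaron} by varying $\cE_\alpha$ along $\Psi+t\phi$ with $\phi\in H^1_a((\R^3)^N)$, using that $\delta F/\delta\rho=-\alpha\,\rho\ast|\cdot|^{-1}$ acts as a one-body effective potential. That $\mu$ is the \emph{first} eigenvalue follows from the observation that any lower eigenstate $\widetilde\Psi$ of the self-consistent operator would satisfy $\langle\widetilde\Psi,H^{V_\rho}(N)\widetilde\Psi\rangle<\mu$, and because the concave part of $\cE_\alpha$ ensures the tangent functional at $\Psi$ majorizes $\cE_\alpha$, minimality of $\Psi$ is incompatible with the existence of such a $\widetilde\Psi$.
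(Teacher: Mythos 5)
Your reduction to Theorem~\ref{thm:nonlinear} and the verification of \textbf{(A1)}--\textbf{(A5)} match the paper, as does the observation that \eqref{eq:cond_no_vanishing_nonlinear} reduces to $E_\alpha(N)<0$ (the paper simply notes this is implied by binding, since $W\geq0$). Your treatment of $\alpha>1$ via Newton's theorem is also essentially the paper's argument, except that you cannot ``symmetrize'' the lower-order ground states to have radial densities; the paper instead averages the energy of the rotated trial states $\Psi_1^U\wedge\Psi_2^V(\cdot-3R\vec v)$ over $U,V\in SO(3)$, which makes the two cross terms bilinear in the rotation-averaged (hence radial) densities and then applies Newton. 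That is a fixable imprecision.

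The genuine gap is at $\alpha=1$ and just below. You correctly note that the $1/R$ Newton term vanishes at $\alpha=1$, so only the large inequality survives, and you then outsource strict binding to \cite{FraLieSeiTho-10a,FraLieSeiTho-10b}. But those papers prove the \emph{opposite} direction: absence of binding for small $\alpha$ (which is why $\tau_c(N)>0$); they do not supply an $\alpha^*(N)<1$ with strict binding, and the theorem under discussion is precisely what complements them. Without strict binding at $\alpha=1$, concavity/continuity of $\alpha\mapsto E_\alpha(N)$ cannot propagate anything into an interval below $1$, so your $\tau_c(N)<1$ is unproven. The paper closes this gap with a Lieb--Thirring Van der Waals construction \cite{LieThi-86}: one perturbs the product trial state by a correlated dipole--dipole term,
$$\Psi_R^{U,V}=\Phi_R^{U,V}+\lambda\Big\{\big(\mathbf m\cdot\textstyle\sum_j\nabla_j\big)\Psi_1^U\Big\}\wedge\Big\{\big(\mathbf n\cdot\textstyle\sum_j\nabla_j\big)\Psi_2^V(\cdot-3R\vec v)\Big\},$$
and the key point for the nonlinear model is that $\rho_{\Psi_R^{U,V}/\|\Psi_R^{U,V}\|}=\rho_{\Psi_1^U}+\rho_{\Psi_2^V}(\cdot-3R\vec v)+O(\lambda^2)$, so the term $F(\rho)$ is unaffected to first order in $\lambda$ while the electron--electron cross term contributes $a\lambda/R^3$; optimizing $\lambda\sim -1/R^3$ yields a $-C/R^6$ attraction and hence strict binding at $\alpha=1$. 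This also requires the exponential decay of the lower-order minimizers, which the paper obtains from the self-consistent equation and Zhislin--Sigalov (the effective nuclear charge is $\alpha N$, so the HVZ gap is open for $\alpha>1-1/N$); your induction should incorporate this. Your derivation of \eqref{eq:SCF_polaron} and of the fact that $\mu$ is the lowest eigenvalue is in the right spirit, though the clean route (used in the paper) is to vary over mixed states $(1-t)|\Psi\rangle\langle\Psi|+t|\Psi'\rangle\langle\Psi'|$ and use concavity of $F$.
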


\smallskip

Our result covers the physical range $\alpha\in(\tau_c(N),1)$ but we do not provide any bound on the critical $\tau_c(N)$. It was proved in \cite{FraLieSeiTho-10a} that binding does \emph{not} occur when $\alpha$ is small enough, hence one must have $\tau_c(N)>0$. We expect that $\tau_c(N)\to1$ when $N\to\ii$ but we do not have a proof of this.

For $N=1$, the Pekar-Tomasevich energy is defined as 
\begin{equation}
\cE_\alpha(\phi)=\frac12\int_{\R^3}|\nabla\phi|^2-\frac\alpha2\int_{\R^3}\int_{\R^3}\frac{|\phi(x)|^2\,|\phi(y)|^2}{|x-y|}dx\,dy
\label{eq:Pekar-Tom_un} 
\end{equation}
and it is sometimes also called the Choquard functional. The existence and uniqueness of a ground state up to translations for all $\alpha>0$ was proved by Lieb in \cite{Lieb-77}. Nothing seems to be known on the uniqueness of ground states up to translations for $N\geq2$.

For the bipolaron ($N=2$), the binding energy 
$$2E_\alpha(1)-E_\alpha(2)=2E_1(1)\,\alpha^2-E_\alpha(2)$$
is a convex and non-decreasing function of $\alpha$. We deduce from Theorem \ref{thm:polaron} that there exists $\tau_c(2)<1$ such that binding does not hold for all $0\leq\alpha\leq\tau_c(2)$, whereas binding holds true and minimizers exist for all $\alpha>\tau_c(2)$. A result of the same form was already announced in \cite{MiySpo-07}. Numerical computations \cite{VerSmoPeeDev-92,SmoFom-94} suggest that, for the bipolaron, $\tau_c(2)\simeq 0.87$.

Since the Pekar-Tomasevich model is exact in the limit of strong coupling, $\alpha/U<1$ and $\alpha\gg1$, our result implies the existence of binding for Fröhlich's $N$-polaron described by the Hamiltonian \eqref{eq:Hamil_polaron}, when $\tau_c(N)< \alpha/U<1$ and $\alpha$ is large enough. For small $\alpha$, numerical computations indeed suggest that Fröhlich's polaron does not bind for any $U>\alpha$. In \cite{VerSmoPeeDev-92} (Fig. 4) the critical value above which Fröhlich's bipolaron formation is possible was found to be $\alpha\simeq13.15$.

\begin{remark}[Extensions]
For anisotropic materials, one can take $F$ of the form
$$F(\rho)=-\frac{4\pi}2 \int_{\R^3}\frac{|\widehat{\rho}(k)|^2}{k^TMk}\,dk$$
where $M$ is a $3\times3$ symmetric matrix satisfying $M\geq1$. Existence of ground states follows from our method when $M$ is sufficiently close to the identity matrix.

Our results hold the same in 2D, assuming the particles interact with the 3D Coulomb potential, a model which is often considered in the physical literature (see, e.g. \cite{VerSmoPeeDev-92,VerPeeDev-91}).
\dqed
\end{remark}

Thanks to Theorem \ref{thm:nonlinear}, the proof of Theorem \ref{thm:polaron} is essentially reduced to showing the binding condition. This is done by building suitable trial states. The easy case is $\alpha>1$, when two multi-polaron always have a Coulomb attraction at large distances. The case $\alpha=1$ is more subtle, and we prove that there is always a Van Der Waals attraction at large distances, following Lieb and Thirring \cite{LieThi-86}. The existence of $\tau_c(N)$ is then obtained by continuity of $\alpha\mapsto E_\alpha(N)$, using that there are only finitely many binding conditions to verify.

\begin{proof}
The energy $\cE_\alpha$ is of the general form which we have considered in Section \ref{sec:nonlinear_general}. The nonlinear functional
$$F(\rho)=-\frac\alpha2\int_{\R^3}\int_{\R^3}\frac{\rho(x)\,\rho(y)}{|x-y|}dx\,dy$$
is clearly strictly concave, and it satisfies our assumptions \textbf{(A1)}--\textbf{(A5)} with $p_1=p_2=6/5$, by the Hardy-Littlewood-Sobolev inequality \cite{LieLos-01}. Furthermore, the condition \eqref{eq:cond_no_vanishing_nonlinear} reduces to $E_\alpha(N)<0$ since the interaction potential $W(x)=1/|x|$ is non-negative. This condition is implied by the binding condition, hence it is only necessary to verify that $E_\alpha(N)<E_\alpha(N-k)+E_\alpha(k)$ for $k=1,...,N-1$.
Since the function $\alpha\mapsto E_\alpha(N)$ is clearly continuous, it is sufficient to show that 
\begin{equation}
E_\alpha(N)<E_\alpha(N-k)+E_\alpha(k)\ \text{ for all integers $1\leq k\leq N-1$ and all $\alpha\geq1$.}
\end{equation}
As usual, we prove these binding inequalities by induction, assuming that $E_\alpha(k)$ has a minimizer for all $k=1,...,N-1$. 
For $N=1$, we already know that ground states of $E_\alpha(1)$ exist for all $\alpha>0$.
The following will be very useful.

\begin{lemma}[Properties of multi-polaron ground states]
Assume that $\Psi$ is a ground state for $E_\alpha(N)$ with $\alpha>0$. Then $\Psi$ solves the self-consistent equation \eqref{eq:SCF_polaron} where $\mu$ is the first eigenvalue of the many-body operator
$$H_{\Psi}^\alpha(N):=\sum_{j=1}^N\left(\frac{-\Delta}{2}-\alpha\rho_\Psi\ast|\cdot|^{-1}\right)_{x_j}+\sum_{1\leq k<\ell\leq N}\frac{1}{|x_k-x_\ell|}.$$
If $\alpha>1-1/N$, then $\mu<\inf\sigma_{\rm ess}(H^\alpha_\Psi(N))$ and both $\Psi$ and $\nabla\Psi$ decay exponentially at infinity.
\end{lemma}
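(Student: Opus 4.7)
The plan is to split the argument into three parts: derivation of the self-consistent equation and identification of $\mu$ with $\min\sigma(H_\Psi^\alpha(N))$; the strict inequality $\mu<\inf\sigma_{\rm ess}(H_\Psi^\alpha(N))$ via HVZ and a Zhislin-type trial state; and finally the exponential decay of $\Psi$ and $\nabla\Psi$.

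For the first part, \eqref{eq:SCF_polaron} is simply the first-order Euler-Lagrange equation for the minimization under the unit-norm constraint, with $\mu$ the Lagrange multiplier. To show that $\mu$ is the bottom of the spectrum of $H_\Psi^\alpha(N)$, I would exploit the strict concavity of $F(\rho)=-(\alpha/2)\iint\rho(x)\rho(y)/|x-y|\,dx\,dy$, which yields the tangent inequality
$$F(\rho_{\tilde\Psi})\leq F(\rho_\Psi)-\alpha\int (\rho_\Psi\ast|\cdot|^{-1})(\rho_{\tilde\Psi}-\rho_\Psi)$$
for every normalized $\tilde\Psi$. Combining this with $\cE_\alpha(\tilde\Psi)\geq \cE_\alpha(\Psi)$, the Coulomb self-energy $\alpha D(\rho_\Psi,\rho_\Psi)$ cancels on both sides and leaves $\langle\tilde\Psi,H_\Psi^\alpha(N)\tilde\Psi\rangle\geq\langle\Psi,H_\Psi^\alpha(N)\Psi\rangle=\mu$, hence $\mu=\min\sigma(H_\Psi^\alpha(N))$.

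For the strict inequality $\mu<\inf\sigma_{\rm ess}(H_\Psi^\alpha(N))$, I would first apply the HVZ theorem in its repulsive form (Theorem~\ref{thm:HVZ_wlsc}) to the linear $N$-body operator $H_\Psi^\alpha(N)$. The one-body potential $V_\Psi:=-\alpha(\rho_\Psi\ast|\cdot|^{-1})$ is admissible because $\rho_\Psi\in L^1\cap L^3(\R^3)$ makes it bounded, continuous and vanishing at infinity, and the interaction $W=1/|x|\geq 0$ gives $E^0(k)=0$. This produces $\inf\sigma_{\rm ess}(H_\Psi^\alpha(N))=E^{V_\Psi}(N-1)$. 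To prove $\mu<E^{V_\Psi}(N-1)$, I would invoke the classical Zhislin-Sigalov trial state \cite{Zhislin-60,ZhiSig-65}: take a (possibly approximate) ground state $\Phi$ of $H_\Psi^\alpha(N-1)$ and test $H_\Psi^\alpha(N)$ on the antisymmetrized product $\Phi\wedge\phi_R$, where $\phi_R$ is a normalized orbital localized far from the effective support of $\Phi$. At infinity the extra electron sees the effective one-body potential
$$V_\Psi(x)+(\rho_\Phi\ast|\cdot|^{-1})(x)\sim -\frac{\alpha N-(N-1)}{|x|},$$
since $\int\rho_\Psi=N$ and $\int\rho_\Phi=N-1$. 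The effective Coulomb charge $\alpha N-(N-1)$ is strictly positive exactly when $\alpha>1-1/N$, and in that regime the Coulombic one-body operator $-\Delta/2-(\alpha N-N+1)/|x|$ possesses negative bound states; placing $\phi_R$ in one of them sufficiently far from $\Phi$ strictly lowers the total energy below $E^{V_\Psi}(N-1)$.

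Once the spectral gap $\mu<\inf\sigma_{\rm ess}(H_\Psi^\alpha(N))$ is secured, the exponential decay of $\Psi$ is a classical consequence of Agmon / Froese-Herbst estimates for $N$-body Schr\"odinger operators with Kato-class potentials, and the exponential decay of $\nabla\Psi$ follows by interior elliptic regularity applied to \eqref{eq:SCF_polaron}. The hardest step of the plan is the binding inequality above: one must keep careful track of the Coulomb asymptotics of the effective one-body potential, control the cross terms produced by the antisymmetrization of $\Phi\wedge\phi_R$, and, to ensure the existence of the $(N-1)$-body ground state $\Phi$ needed in the Zhislin trial state, carry out the whole argument by induction on $N$ hand-in-hand with the existence theorem for $E_\alpha(k)$ contained in Theorem~\ref{thm:polaron}.
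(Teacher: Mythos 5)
Your proposal is correct and, for the key step, coincides with the paper's argument: the identification of $\mu$ with $\min\sigma(H^\alpha_\Psi(N))$ rests in both cases on the concavity of $F$. The paper linearizes by testing the mixed state $(1-t)|\Psi\rangle\langle\Psi|+t|\Psi'\rangle\langle\Psi'|$ and taking the first order in $t$, while you use the tangent inequality $F(\rho_{\tilde\Psi})\leq F(\rho_\Psi)-\alpha\int(\rho_\Psi\ast|\cdot|^{-1})(\rho_{\tilde\Psi}-\rho_\Psi)$ directly; these are two renditions of the same idea and both are complete. Where you diverge is the second step: the paper simply observes that $H^\alpha_\Psi(N)$ is a Coulomb Hamiltonian with external charge $Z=\alpha\int\rho_\Psi=\alpha N$ and invokes the Zhislin--Sigalov theorem, which gives an isolated lowest eigenvalue precisely when $N<Z+1$, i.e.\ $\alpha>1-1/N$; you instead sketch a proof of that theorem (HVZ for $H^\alpha_\Psi(N)$ plus a trial state $\Phi\wedge\phi_R$ exploiting the residual Coulomb attraction of charge $\alpha N-(N-1)>0$). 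Your sketch is sound, but note one slip in how you propose to close it: the $(N-1)$-body ground state $\Phi$ you need is a ground state of the \emph{linear} operator $H^\alpha_\Psi(N-1)$ with the \emph{fixed} external potential $V_\Psi$, not a minimizer of the nonlinear problem $E_\alpha(N-1)$, so the induction must be run internally on $k\mapsto E^{V_\Psi}(k)$ for the fixed operator (where the effective charge $\alpha N-(k-1)$ stays positive for all $k\leq N$ when $\alpha>1-1/N$), rather than ``hand-in-hand'' with the existence theorem for $E_\alpha(k)$. This is exactly the content of the cited Zhislin--Sigalov result, so nothing is lost, but tying it to Theorem~\ref{thm:polaron} as you suggest would make the logic circular in appearance and is unnecessary. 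The decay statements are handled identically in both proofs (Agmon-type estimates once the gap below the essential spectrum is known, plus elliptic regularity for $\nabla\Psi$).
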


\begin{proof}
We have already explained in the proof of Theorem \ref{thm:nonlinear} that, by the concavity of $F$, $E_\alpha(N)$ is also the lowest energy over all mixed states of $L^2_a((\R^3)^N)$. In particular it holds
$$\cE_\alpha\bigg((1-t)|\Psi\rangle\langle\Psi|+t|\Psi'\rangle\langle\Psi'|\bigg)\geq E_\alpha(N)$$
for all $\Psi'\in H^1_a((\R^3)^N)$ and all $0\leq t\leq1$. The first order in $t$ provides the bound
$\pscal{\Psi',H^\alpha_\Psi(N)\Psi'}\geq \pscal{\Psi,H^\alpha_\Psi(N)\Psi}$, showing that $\mu$ is the first eigenvalue of $H^\alpha_\Psi(N)$.
The Hamiltonian $H^\alpha_\Psi(N)$ is a usual Coulomb Hamiltonian of $N$ electrons with an external Coulomb field of total charge $Z=\alpha \int_{\R^3}\rho_\Psi=\alpha N$. It was shown by Zhislin and Sigalov \cite{Zhislin-60,ZhiSig-65} that $\mu$ is an isolated eigenvalue as soon as $N<Z+1=\alpha N+1$. The exponential decay follows from well-known results reviewed for instance in Section XIII.11 of \cite{ReeSim4}.
\end{proof}

Let us now assume that $E_\alpha(N-k)$ and $E_\alpha(k)$ have respective ground states $\Psi_1$ and $\Psi_2$, and that $\alpha\geq1$. We want to prove that $E_\alpha(N)<E_\alpha(N-k)+E_\alpha(k)$. Using their exponential decay, we can replace  $\Psi_1$ and $\Psi_2$ by functions with support in a ball of radius $R$, making an error in the energy of the form $e^{-aR}$. For the sake of simplicity we do not change our notation and assume that
$$\cE_\alpha(\Psi_1)\leq E_\alpha(N-k)+Ce^{-aR},\qquad \cE_\alpha(\Psi_2)\leq E_\alpha(k)+Ce^{-aR}.$$

When $\alpha>1$, we can take advantage of a Coulomb attraction at infinity and choose as trial function
$$\Psi_R^{U,V}:=\Psi_1^U\wedge\Psi_2^V(\cdot-3R\vec{v})$$
for rotations $U,V\in SO(3)$ and with rotated ground states $\Psi_j^U=\Psi_j(U^{-1}\cdot)$. Averaging over the rotations $U,V\in SO(3)$ and using Newton's theorem yields a bound
$$\int_{SO(3)}dU\int_{SO(3)}dV\;\cE_\alpha(\Psi_R^{U,V})\leq E_\alpha(N-k)+E_\alpha(k)-\frac{(N-k)k(1-\alpha)}{3R}+Ce^{-aR}.$$
This shows the binding inequality when $\alpha>1$.

When $\alpha=1$ there is \emph{a priori} no simple binding in $1/R$. Fortunately, there always exists a Van Der Waals force between two multi-polarons. Following a method of Lieb and Thirring \cite{LieThi-86}, we take as trial state
$$\Psi_R^{U,V}:=\Psi_1^U\wedge\Psi_2^V(\cdot-3R\vec{v})+\lambda\left\{\left(\mathbf{m}\cdot\sum_{j=1}^{N-k}\nabla_j\right)\Psi^U_1\right\}\wedge\left\{\left(\mathbf{n}\cdot\sum_{j=1}^{k}\nabla_j\right)\Psi^V_2(\cdot-3R\vec{v})\right\}.$$
Writing with an obvious convention $\Psi_R^{U,V}=\Phi_R^{U,V}+\lambda\tilde{\Phi}_R^{U,V}$, we have 
$$\int_{\R^3}dx_2\cdots\int_{\R^3}dx_N\;\overline{{\Phi}_R^{U,V}}\tilde{\Phi}_R^{U,V}=0$$
which is seen by using that $\Psi_1^U$ and $\Psi_2^V(\cdot-3R\vec{v})$ have disjoint supports, as well as the fact that $\Psi_1^U$ is orthogonal to $\left(\mathbf{m}\cdot\sum_{j=1}^{N-k}\nabla_j\right)\Psi^U_1$ and a similar property for $\Psi_2^V$.
As was already mentioned in \cite{LieThi-86}, this yields
$\|\Psi_R^{U,V}\|^2=1+O(\lambda^2)$, but this also gives 
\begin{equation}
\rho_{\Psi_R^{U,V}/\norm{\Psi_R^{U,V}}}=\rho_{\Psi_1^U}+\rho_{\Psi_2^V}(\cdot-3R\vec{v})+O(\lambda^2).
\label{eq:density_Van_Der_Waals} 
\end{equation}
Therefore we can mimic the argument of \cite{LieThi-86} and obtain an upper bound of the form
$$\int_{SO(3)}dU\int_{SO(3)}dV\;\cE_\alpha\left(\Psi_R^{U,V}/\norm{\Psi_R^{U,V}}\right)\leq E_1(N-k)+E_1(k)+a\frac{\lambda}{R^3}+b\lambda^2+Ce^{-aR}.$$
The linear term in $\lambda$ comes from the cross-term between the two functions appearing in the definition of $\Psi_R^{U,V}$, in the electron-electron interaction term. This term is exactly the same as the one calculated in \cite{LieThi-86}. The nonlinear term involving the density only provides a $O(\lambda^2)$ by \eqref{eq:density_Van_Der_Waals}. Taking $\lambda=-a/2bR^3$ yields the desired attractive Van Der Waals interaction potential $-C/R^6$, hence the binding of two polaron systems when $\alpha=1$. This ends the proof of Theorem \ref{thm:polaron}.\hfill\qed
\end{proof}


\end{document}